\mathchardef\gt="313E
\mathchardef\lt="313C
\newcommand{\termtt}{M_{\top}}
\newcommand{\termff}{M_{\bot}}
\newcommand{\termsc}{M_{succ}}
\newcommand{\termnn}[1]{V_{#1}}
\newcommand{\termxor}{M_{xor}}
\newcommand{\termxorcont}{N_{xor}}
\newcommand{\LPplus}{\Lambda_{\oplus}^{+}}
\newcommand{\rv}{\mapsto_{\mathsf{v}}}
\newcommand{\rn}{\mapsto_{\mathsf{n}}}
\newcommand{\rvn}{\mapsto}
\newcommand{\ps}[2]{{#1}\oplus{#2}}
\newcommand{\val}{\mathsf{Val}}
\newcommand{\mul}[1]{\overline{#1}}
\newcommand{\ssemv}[2]{#1\Rightarrow_\mathsf{IV} #2}
\newcommand{\cssemv}[2]{#1\Rightarrow_\mathsf{CV} #2}
\newcommand{\vssemn}[2]{#1\Rrightarrow #2}
\newcommand{\Ssemv}[1]{\mathcal{S}_\mathsf{IV}(#1)}
\newcommand{\CSsemv}[1]{\mathcal{S}_\mathsf{CV}(#1)}
\newcommand{\Ssemvs}[1]{{#1}^{\mathcal{S}}_\mathsf{IV}}
\newcommand{\Ssemns}[1]{{#1}^{\mathcal{S}}_\mathsf{IN}}
\newcommand{\ibsemv}[2]{#1\Downarrow_{\mathsf{IV}} #2}
\newcommand{\bsemv}[2]{#1\Downarrow_{\mathsf{CV}} #2}
\newcommand{\Bsemv}[1]{\mathcal{B}_\mathsf{CV}(#1)}
\newcommand{\IBsemv}[1]{\mathcal{B}_\mathsf{IV}(#1)}
\newcommand{\Bsemvs}[1]{{#1}^{\mathcal{B}}_{\mathsf{CV}}}
\newcommand{\Bsemns}[1]{{#1}^{\mathcal{B}}_{\mathsf{CN}}}
\newcommand{\sdsemv}[2]{#1\Rightarrow_{#2}^\infty}
\newcommand{\Sdsemv}[1]{\mathcal{D}_\mathsf{IV}(#1)}
\newcommand{\ssemn}[2]{#1\Rightarrow_\mathsf{IN} #2}
\newcommand{\Ssemn}[1]{\mathcal{S}_\mathsf{IN}(#1)}
\newcommand{\bsemn}[2]{#1\downarrow_\mathsf{CN} #2}
\newcommand{\Bsemn}[1]{\mathcal{B}_\mathsf{CN}(#1)}
\newcommand{\sdsemn}[2]{#1\Rightarrow_{#2}^\infty}
\newcommand{\Sdsemn}[1]{\mathcal{D}_n(#1)}
\newcommand{\Ssemvu}{\mathcal{S}_\mathsf{IV}}
\newcommand{\CSsemvu}{\mathcal{S}_\mathsf{CV}}
\newcommand{\IBsemvu}{\mathcal{B}_\mathsf{IV}}
\newcommand{\Bsemvu}{\mathcal{B}_\mathsf{CV}}
\newcommand{\sumd}[1]{\sum{#1}}
\newcommand{\interv}[2]{[#1,#2]}
\newcommand{\enc}[1]{\ulcorner #1\urcorner}
\newcommand{\emdist}{\emptyset}
\newcommand{\RRp}[2]{\RR_{[#1,#2]}}
\newcommand{\idt}{\lambda x.x}
\newcommand{\mapvn}[1]{\llceil #1\rrceil }
\newcommand{\mapnv}[1]{\llfloor #1\rrfloor}
\newcommand{\sev}{\mathsf{se_\mathsf{IV}}}
\newcommand{\svv}{\mathsf{sv_\mathsf{IV}}}
\newcommand{\smv}{\mathsf{sm_\mathsf{IV}}}
\newcommand{\csvv}{\mathsf{sv_\mathsf{CV}}}
\newcommand{\csmv}{\mathsf{sm_\mathsf{CV}}}
\newcommand{\dvv}{\mathsf{dv_v}}
\newcommand{\dmv}{\mathsf{dm_v}}
\newcommand{\bvv}{\mathsf{bv_v}}
\newcommand{\bav}{\mathsf{ba_v}}
\newcommand{\bsv}{\mathsf{bs_v}}
\newcommand{\ibev}{\mathsf{bv_\mathsf{IV}}}
\newcommand{\ibvv}{\mathsf{bv_\mathsf{IV}}}
\newcommand{\ibav}{\mathsf{ba_\mathsf{IV}}}
\newcommand{\ibsv}{\mathsf{bs_\mathsf{IV}}}
\newcommand{\sen}{\mathsf{se_n}}
\newcommand{\svn}{\mathsf{sv_n}}
\newcommand{\smn}{\mathsf{sm_n}}
\newcommand{\dvn}{\mathsf{dv_n}}
\newcommand{\dmn}{\mathsf{dm_n}}
\newcommand{\bvn}{\mathsf{bv_n}}
\newcommand{\ban}{\mathsf{ba_n}}
\newcommand{\bsn}{\mathsf{bs_n}}
\newcommand{\NN}{\mathbb{N}}
\newcommand{\RR}{\mathbb{R}}
\newcommand{\PCF}{\ensuremath{\mathsf{PCF}}}
\newcommand{\PLCF}{\ensuremath{\mathsf{PLCF}}}
\newcommand{\LCF}{\ensuremath{\mathsf{LCF}}}
\newcommand{\LOP}{\ensuremath{\Lambda_{\oplus}}}
\newcommand{\LOPplus}{\ensuremath{\Lambda_{\oplus}}^{+}}
\newcommand{\zrule}[2]{%
  \prooftree \justifies #1 \using #2 \endprooftree}
\newcommand{\urule}[3]{%
  \prooftree #1 \justifies #2 \using #3 \endprooftree}
\newcommand{\brule}[4]{%
  \prooftree #1\ \ \ #2 \justifies #3 \using #4 \endprooftree}
\newcommand{\trule}[5]{%
  \prooftree #1\ \ \ #2 \ \ #3\justifies #4 \using #5 \endprooftree}
\newcommand{\czrule}[2]{%
  \prooftree \Justifies #1 \using #2 \endprooftree}
\newcommand{\cbrule}[4]{%
  \prooftree #1\ \ \ #2 \Justifies #3 \using #4 \endprooftree}
\newcommand{\ctrule}[5]{%
  \prooftree #1\ \ \ #2 \ \ #3\Justifies #4 \using #5 \endprooftree}
\newcommand{\svarone}{\alpha}
\newcommand{\svartwo}{\beta}
\newcommand{\svarthree}{\gamma}
\newcommand{\svarfive}{\epsilon}
\newcommand{\termone}{M}
\newcommand{\termtwo}{N}
\newcommand{\termthree}{L}
\newcommand{\termfour}{P}
\newcommand{\termfive}{Q}
\newcommand{\termsix}{R}
\newcommand{\termseven}{S}
\newcommand{\varone}{x}
\newcommand{\vartwo}{y}
\newcommand{\varthree}{z}
\newcommand{\varfour}{w}
\newcommand{\varfive}{s}
\newcommand{\varsix}{t}
\newcommand{\valone}{V}
\newcommand{\valtwo}{W}
\newcommand{\valthree}{X}
\newcommand{\valfive}{Z}
\newcommand{\ctone}{K}
\newcommand{\cttwo}{J}
\newcommand{\distone}{\mathscr{D}}
\newcommand{\disttwo}{\mathscr{E}}
\newcommand{\distthree}{\mathscr{F}}
\newcommand{\distfour}{\mathscr{G}}
\newcommand{\distfive}{\mathscr{H}}
\newcommand{\distsix}{\mathscr{I}}
\newcommand{\distseven}{\mathscr{J}}
\newcommand{\disteight}{\mathscr{K}}
\newcommand{\pdone}{\mathcal{D}}
\newcommand{\pdtwo}{\mathcal{E}}
\newcommand{\pdthree}{\mathcal{F}}
\newcommand{\setone}{Q}
\newcommand{\probone}{p}
\newcommand{\derone}{\pi}
\newcommand{\dertwo}{\rho}
\newcommand{\derthree}{\xi}
\newcommand{\derfour}{\mu}
\newcommand{\derfive}{\phi}
\newcommand{\unone}{\mathcal{U}}
\newcommand{\juone}{c}
\newcommand{\sjone}{A}
\newcommand{\sjtwo}{B}
\newcommand{\isone}{\Phi}
\newcommand{\funone}{f}
\newcommand{\elone}{a}
\newcommand{\natone}{n}
\newcommand{\strone}{s}
\newcommand{\adm}[2]{#1:#2}
\newcommand{\ide}{I}
\newcommand{\appr}[1]{\mathit{Appr}_{#1}}
\newcommand{\pd}[1]{\{#1\}}
\newcommand{\subst}[3]{#1\{#2/#3\}}
\newcommand{\ltder}{\prec}
\newcommand{\leqder}{\preceq}
\newcommand{\setvar}{\mathsf{X}}
\newcommand{\setvarg}{\mathsf{C}}
\newcommand{\abstr}[2]{\lambda #1.#2}
\newcommand{\app}[2]{#1#2}
\newcommand{\pair}[2]{\langle #1,#2\rangle}
\newcommand{\lfp}[1]{\mathit{lfp}(#1)}
\newcommand{\gfp}[1]{\mathit{gfp}(#1)}
\newcommand{\opinf}[1]{F_{#1}}
\newcommand{\supp}[1]{\mathsf{S}(#1)}
\newcommand{\termsplit}{M_\mathit{split}}
\newcommand{\termfdt}{M_\mathit{fdt}}
\newcommand{\termfpc}{H}
\newenvironment{varitemize}
{
\begin{list}{\labelitemi}
{\setlength{\itemsep}{0pt}
 \setlength{\topsep}{0pt}
 \setlength{\parsep}{0pt}
 \setlength{\partopsep}{0pt}
 \setlength{\leftmargin}{15pt}
 \setlength{\rightmargin}{0pt}
 \setlength{\itemindent}{0pt}
 \setlength{\labelsep}{5pt}
 \setlength{\labelwidth}{10pt}
}}
{
 \end{list} 
}
\newcounter{numberone}
\newenvironment{varenumerate}
{
\begin{list}{\arabic{numberone}.}
{
  \usecounter{numberone}
  \setlength{\itemsep}{0pt}
  \setlength{\topsep}{0pt}
  \setlength{\parsep}{0pt}
  \setlength{\partopsep}{0pt}
  \setlength{\leftmargin}{15pt}
  \setlength{\rightmargin}{0pt}
  \setlength{\itemindent}{0pt}
  \setlength{\labelsep}{5pt}
  \setlength{\labelwidth}{15pt}
}}
{
\end{list} 
}
\newcounter{numbertwo}
\newcommand{\FV}[1]{\mathsf{FV}(#1)}
\newtheorem{theorem}{Theorem}
\newtheorem{proposition}{Proposition}
\newtheorem{lemma}{Lemma}
\newtheorem{definition}{Definition}
\newtheorem{fact}{Fact}
\newtheorem{example}{Example}
\newtheorem{corollary}{Corollary}
\newtheorem{notation}{Notation}
\newenvironment{proof}{\begin{trivlist}
       \item[\hskip \labelsep {\bfseries Proof.}]}{\hfill $\Box$ \end{trivlist}}
\newcommand{\vssemv}[2]{#1\twoheadrightarrow #2}
\newcommand{\parkl}{\mathscr{L}_{\gamma}}
\newcommand{\df}{\stackrel{\mbox{\textsf{def}}}{=}}
\newcommand{\pdists}{\mathcal{P}}
\newdimen\proofrulebreadth \proofrulebreadth=.05em
\newdimen\proofdotseparation \proofdotseparation=1.25ex
\newdimen\proofrulebaseline \proofrulebaseline=2ex
\let\then\relax
\def\hfi{\hskip0pt plus.0001fil}
\mathchardef\squigto="3A3B
\newif\ifinsideprooftree\insideprooftreefalse
\newif\ifonleftofproofrule\onleftofproofrulefalse
\newif\ifproofdots\proofdotsfalse
\newif\ifdoubleproof\doubleprooffalse
\let\wereinproofbit\relax
\newdimen\shortenproofleft
\newdimen\shortenproofright
\newdimen\proofbelowshift
\newbox\proofabove
\newbox\proofbelow
\newbox\proofrulename
\def\shiftproofbelow{\let\next\relax\afterassignment\setshiftproofbelow\dimen0 }
\def\shiftproofbelowneg{\def\next{\multiply\dimen0 by-1 }%
\afterassignment\setshiftproofbelow\dimen0 }
\def\setshiftproofbelow{\next\proofbelowshift=\dimen0 }
\def\setproofrulebreadth{\proofrulebreadth}
\def\prooftree{
%
\ifnum  \lastpenalty=1
\then   \unpenalty
\else   \onleftofproofrulefalse
\fi
%
\ifonleftofproofrule
\else   \ifinsideprooftree
        \then   \hskip.5em plus1fil
        \fi
\fi
%
\bgroup
\setbox\proofbelow=\hbox{}\setbox\proofrulename=\hbox{}%
\let\justifies\proofover\let\leadsto\proofoverdots\let\Justifies\proofoverdbl
\let\using\proofusing\let\[\prooftree
\ifinsideprooftree\let\]\endprooftree\fi
\proofdotsfalse\doubleprooffalse
\let\thickness\setproofrulebreadth
\let\shiftright\shiftproofbelow \let\shift\shiftproofbelow
\let\shiftleft\shiftproofbelowneg
\let\ifwasinsideprooftree\ifinsideprooftree
\insideprooftreetrue
%
\setbox\proofabove=\hbox\bgroup$\displaystyle 
\let\wereinproofbit\prooftree
%
\shortenproofleft=0pt \shortenproofright=0pt \proofbelowshift=0pt
%
\onleftofproofruletrue\penalty1
}
\def\eproofbit{
%
\ifx    \wereinproofbit\prooftree
\then   \ifcase \lastpenalty
        \then   \shortenproofright=0pt  
        \or     \unpenalty\hfil         
        \or     \unpenalty\unskip       
        \else   \shortenproofright=0pt  
        \fi
\fi
%
\global\dimen0=\shortenproofleft
\global\dimen1=\shortenproofright
\global\dimen2=\proofrulebreadth
\global\dimen3=\proofbelowshift
\global\dimen4=\proofdotseparation
\global\count255=\proofdotnumber
%
$\egroup  
%
\shortenproofleft=\dimen0
\shortenproofright=\dimen1
\proofrulebreadth=\dimen2
\proofbelowshift=\dimen3
\proofdotseparation=\dimen4
\proofdotnumber=\count255
}
\def\proofover{
\eproofbit 
\setbox\proofbelow=\hbox\bgroup 
\let\wereinproofbit\proofover
$\displaystyle
}%
\def\proofoverdbl{
\eproofbit 
\doubleprooftrue
\setbox\proofbelow=\hbox\bgroup 
\let\wereinproofbit\proofoverdbl
$\displaystyle
}%
\def\proofoverdots{
\eproofbit 
\proofdotstrue
\setbox\proofbelow=\hbox\bgroup 
\let\wereinproofbit\proofoverdots
$\displaystyle
}%
\def\proofusing{
\eproofbit 
\setbox\proofrulename=\hbox\bgroup 
\let\wereinproofbit\proofusing
\kern0.3em$
}
\def\endprooftree{
\eproofbit 
  \dimen5 =0pt
%
\dimen0=\wd\proofabove \advance\dimen0-\shortenproofleft
\advance\dimen0-\shortenproofright
%
\dimen1=.5\dimen0 \advance\dimen1-.5\wd\proofbelow
\dimen4=\dimen1
\advance\dimen1\proofbelowshift \advance\dimen4-\proofbelowshift
%
\ifdim  \dimen1<0pt
\then   \advance\shortenproofleft\dimen1
        \advance\dimen0-\dimen1
        \dimen1=0pt
        \ifdim  \shortenproofleft<0pt
        \then   \setbox\proofabove=\hbox{%
                        \kern-\shortenproofleft\unhbox\proofabove}%
                \shortenproofleft=0pt
        \fi
\fi
%
\ifdim  \dimen4<0pt
\then   \advance\shortenproofright\dimen4
        \advance\dimen0-\dimen4
        \dimen4=0pt
\fi
%
\ifdim  \shortenproofright<\wd\proofrulename
\then   \shortenproofright=\wd\proofrulename
\fi
%
\dimen2=\shortenproofleft \advance\dimen2 by\dimen1
\dimen3=\shortenproofright\advance\dimen3 by\dimen4
%
\ifproofdots
\then
        \dimen6=\shortenproofleft \advance\dimen6 .5\dimen0
        \setbox1=\vbox to\proofdotseparation{\vss\hbox{$\cdot$}\vss}%
        \setbox0=\hbox{%
                \advance\dimen6-.5\wd1
                \kern\dimen6
                $\vcenter to\proofdotnumber\proofdotseparation
                        {\leaders\box1\vfill}$%
                \unhbox\proofrulename}%
\else   \dimen6=\fontdimen22\the\textfont2 
        \dimen7=\dimen6
        \advance\dimen6by.5\proofrulebreadth
        \advance\dimen7by-.5\proofrulebreadth
        \setbox0=\hbox{%
                \kern\shortenproofleft
                \ifdoubleproof
                \then   \hbox to\dimen0{%
                        $\mathsurround0pt\mathord=\mkern-6mu%
                        \cleaders\hbox{$\mkern-2mu=\mkern-2mu$}\hfill
                        \mkern-6mu\mathord=$}%
                \else   \vrule height\dimen6 depth-\dimen7 width\dimen0
                \fi
                \unhbox\proofrulename}%
        \ht0=\dimen6 \dp0=-\dimen7
\fi
%
\let\doll\relax
\ifwasinsideprooftree
\then   \let\VBOX\vbox
\else   \ifmmode\else$\let\doll=$\fi
        \let\VBOX\vcenter
\fi
\VBOX   {\baselineskip\proofrulebaseline \lineskip.2ex
        \expandafter\lineskiplimit\ifproofdots0ex\else-0.6ex\fi
        \hbox   spread\dimen5   {\hfi\unhbox\proofabove\hfi}%
        \hbox{\box0}%
        \hbox   {\kern\dimen2 \box\proofbelow}}\doll%
%
\global\dimen2=\dimen2
\global\dimen3=\dimen3
\egroup 
\ifonleftofproofrule
\then   \shortenproofleft=\dimen2
\fi
\shortenproofright=\dimen3
%
\onleftofproofrulefalse
\ifinsideprooftree
\then   \hskip.5em plus 1fil \penalty2
\fi
}
\mailugo\url{dallago@cs.unibo.it}
\mailmarghi\url{margherita.zorzi@univr.it}
\title{Probabilistic Operational Semantics\\ for the Lambda Calculus}
\author{Ugo Dal Lago\footnote{Universit\`a di Bologna \& EPI FOCUS, \mailugo}
        \and 
        Margherita Zorzi\footnote{Universit\`a di Verona, \mailmarghi}}
\date{}
\begin{document}

\maketitle

\begin{abstract}  
Probabilistic operational semantics for a nondeterministic extension of pure lambda calculus is studied. 
In this semantics, a term evaluates to a (finite or infinite) distribution of 
values. Small-step and big-step semantics are both inductively and coinductively defined. Moreover, small-step and big-step semantics are shown to produce identical outcomes, both in call-by-value and in call-by-name. 
Plotkin's CPS translation is extended to accommodate the choice operator and shown correct with respect to
the operational semantics. Finally, the expressive power of the obtained system is studied: the calculus  
is shown to be sound and complete with respect to computable probability distributions.
\end{abstract}

\section{Introduction}\label{sec:intro}
Randomized computation is central to several areas of theoretical computer science, including
computational complexity, cryptography, analysis of 
computation dealing with uncertainty, incomplete knowledge agent systems. 
Some works have been devoted also to the design and analysis of programming languages with stochastic aspects.
For various reasons, the functional programming
paradigm seems appropriate in this context, because of the very thin gap
between the realm of programs and the underlying probability world.

The large majority of the literature on probabilistic functional programming view
probability as a \emph{monad}, in the sense of Moggi~\cite{Moggi89bis,Moggi89}.
This is the case for the works by Plotkin and Jones about denotational
semantics of probabilistic functional programs~\cite{JonesPlotkin89}, or for many of the
recently proposed probabilistic functional programming languages: 
the stochastic lambda calculus~\cite{RamseyPfeffer02} 
and the lambda calculi by Park, Pfenning and Thrun~\cite{ParkPfenningThrun03,ParkPfenningThrun05}.
The monadic structure of probability distributions provides a good denotational model for the calculi and it makes
evident how the mathematical foundations of probability can be applied directly in a natural way to the semantics of 
probabilistic programs. This allows, for example, to apply this approach to the formalization 
of properties of randomized algorithms in interactive theorem proving~\cite{PaulinAudebaud}.
The monadic approach seems particularly appropriate in applications, since some programming
languages, like Haskell, directly implement monads.

But there is another, more direct, way to endow the lambda calculus with probabilistic
choice, namely by enriching it with a binary choice operator $\oplus$. This way, we
can form terms whose behavior is probabilistically chosen as the one of the first or of
the second argument. It is not clear, however, whether the operational theory 
underlying ordinary, deterministic lambda calculus, extends smoothly to this new, probabilistic setting. 
The aim of this paper is precisely to start an investigation in this direction. The \emph{object}
of our study will be the nondeterministic lambda calculus, being it a minimal extension
of the ordinary lambda calculus with a choice operator. The \emph{subject} of our study, on the other hand, 
will be the properties of two notions of probabilistic semantics for it, namely 
call-by-value and call-by-name evaluation.
Big-step and small-step probabilistic semantics will be defined and proved equivalent, both
when defined inductively and when defined coinductively.
CPS translations extending the ones in the literature are presented
and proved to have the usual properties enjoyed in the deterministic case. Finally, some 
results about the expressive power of the obtained calculus are proved.

\subsection{Related works}\label{sec:relw}
The pioneering  investigation in the fields of stochastic functional languages is the probabilistic 
\LCF\ (\PLCF\ in the following) proposed by Saheb-Djahromi in~\cite{Saheb78}. \PLCF\ is a typed, 
higher-order calculus based on Milner's \LCF, Plotkin's \PCF~\cite{Plotkin77} and Plotkin's 
probabilistic domains (further developed by Plotkin and Jones in ~\cite{JonesPlotkin89}). The syntax of  
\PLCF, two kinds of abstractions are present which deal separately with call-by-value and call-by-name evaluation, 
respectively. The author declares an explicit intent in providing ``a foundation for the probabilistic study of the computation'' 
and even if a number of  important aspects are unexplored, the approach is interesting and related to the present investigation. 
Saheb-Djahromi provides both denotational and operational semantics for \PLCF.
Denotational semantics, defined in terms of probabilistic domains, is an extension of Milner's and Plotkin's one.
Operational semantics is a given as a Markov chain, and
an equivalence result between the latter and a denotational model is stated and proved 
as an extension of Plotkin's results.

In recent years, some lambda calculi with probabilistic features have been introduced,
strongly oriented to applications (e.g. robotics). The most developed approach is definitely 
the monadic one~\cite{Moggi89bis,Moggi89}, based on the idea that probability distribution forms a 
monad (\cite{Giry81},\cite{JonesPlotkin89}).
In~\cite{RamseyPfeffer02}, Ramsey and Pfeffer introduce the \emph{stochastic lambda calculus}, 
in which the denotation of expressions are distributions and in which the probability monad is
shown to be useful in the evaluation of queries about the defined probabilistic model. A measure 
theory and a simple measure language are introduced and the stochastic lambda calculus is
compiled into the measure language. Moreover, a denotational semantics based upon the monad of probability measures is defined.
In~\cite{Park03}, Park define the typed calculus $\lambda_{\gamma}$, an extension of the typed lambda calculus with an 
\emph{integral abstraction} and a \emph{sampling construct}, which bind probability and sampling variables respectively. 
A system of simple types for $\lambda_{\gamma}$ is also introduced. The author briefly discusses
about  the expressive power of $\lambda_{\gamma}$: the calculus has been shown to be able to express 
the most relevant probability distributions. $\lambda_{\gamma}$ does not make use of monads, which are however
present in~\cite{ParkPfenningThrun03}, in which the idea of a calculus based on the 
mathematical notion of a sampling function is further developed through the introduction of
$\lambda_{\circ}$. $\lambda_{\circ}$ is based on the monad of sampling functions and is able 
to specify probability distributions over infinite discrete domains and continuous domains. The authors also develop
a new operational semantics, called \emph{horizontal operational semantics}. The calculus $\lambda_{\circ}$ is
further studied and developed in~\cite{ParkPfenningThrun05}.

Nondeterminism and probability, however, can find their place in a completely different way.
In~\cite{deLigPip95}, de'Liguoro and Piperno propose a non deterministic lambda calculus, called \LOP. 
\LOP\ is nothing more than the usual, untyped, $\lambda$-calculus with an additional binary operator $\oplus$
which serves to represent binary, nondeterministic choice: $\termone\oplus\termtwo$ rewrites to either
$\termone$ or $\termtwo$. The authors give a standardization theorem and an algebraic semantics for \LOP. 
The classical definition of a B\"ohm tree is  extended to the non-deterministic case by means of inductively
defined ``approximating operators''.
Several relevant properties such as discriminability are studied and, 
moreover, some suitable models for non-deterministic lambda calculus are proposed and discussed.
In~\cite{Ale05}, Di Pierro, Henkin and Wiklicky propose an untyped lambda calculus with probabilistic choice. 
Its syntax is itself an extension of pure lambda calculus with $n$-ary probabilistic choice in the form
$\bigoplus_{i=1}^np_i:\termone_i$. The main objective, however, is showing how probabilistic abstract interpretation can be exploited
in the context of static analyis of probabilistic programs even in presence of higher-order functions.
This is reflected by its operational semantics, which is more directed to program analysis (when two terms
can be considered as equal?) than to computation (what is the value obtained by evaluating a program? how can
we compute it?).

\subsection{Outline}
After some motivating observations about the interplay between rewriting and nondeterministic choice (Section~\ref{sec:obs}),
the calculus $\LOP$ and its call-by-value probabilistic operational
semantics are introduced (Section~\ref{sec:syntax} and Section~\ref{sec:cbv}, respectively). Both small-step and big-step
semantics are defined and their equivalence is proved in detail. Remarkably, the result holds even when operational
semantics is formulated coinductively. The same results hold for call-by-name evaluation and are described
in Section~\ref{sec:cbn}. Call-by-value and call-by-name can be shown to be able to simulate each other by slight modifications
of the well-known CPS translations~\cite{Plotkin75}, described in Section ~\ref{sec:trans}. The paper ends with a result about
the expressive power of the calculus namely the equivalence between representable and computable distributions. 

\section{Some Motivating Observations}\label{sec:obs}
Lambda calculus can be seen both as an equational theory on lambda terms and as an abstract
model of computation. In the first case, it is natural (and customary) to allow to apply
equations (e.g. $\beta$ or $\eta$ equivalences) at any position in the term. The obtained
calculus enjoys confluence, in the form of the the so-called Church-Rosser theorem: equality
of terms remains transitive even if equations becomes rewriting rules, i.e. if they are given with an orientation.
More computationally, on the other hand, the meaning of any lambda-term is the \emph{value}
it evaluates to in some strategy or machine. In this setting,
abstractions are often considered as values, meaning that
reduction cannot take place in the scope of a lambda abstraction. What's obtained this way is a calculus
with \emph{weak} reduction which is not confluent in the Church-Rosser sense. As an example, take the term 
$\app{(\abstr{\varone}{\abstr{\vartwo}{\varone}})}{(\app{(\abstr{\varthree}{\varthree})}{(\abstr{\varthree}{\varthree})})}$.
In call-by-value, it reduces to ${\abstr{\vartwo}{(\abstr{\varthree}{\varthree})}}$. In call-by-name,
it reduces to ${\abstr{\vartwo}{\app{(\abstr{\varthree}{\varthree})}{(\abstr{\varthree}{\varthree})}}}$.
A beautiful operational theory has developed since Plotkin's pioneering's work~\cite{Plotkin75}. Call-by-value
and call-by-name have been shown to be dual to each other~\cite{Curien00theduality}, and continuation-based translations allowing to
simulate one style with the other have been designed and analyzed very carefully~\cite{DanvyFil92, DanvyNielsen05}.

Now, suppose to endow the lambda calculus with nondeterministic sums. Suppose, in other words, to introduce a binary infix
term operator $\oplus$ such that $\ps{\termone}{\termtwo}$ can act either as $\termone$ or
$\termtwo$ in a nondeterministic flavor. What we obtain is the so-called nondeterministic
lambda-calculus, which has been introduced and studied in~\cite{deLigPip95}. We cannot hope to get
any confluence results in this setting (at least if we stick to reduction as a binary relation on \emph{terms}): 
a term like $\termone=\ps{\termtt}{\termff}$.
(where $\termtt=\abstr{\varone}{\abstr{\vartwo}{\varone}}$ and $\termff=\abstr{\varone}{\abstr{\vartwo}{\vartwo}}$ 
are the usual representation of truth values in the lambda-calculus)
reduces to two distinct values (which are different in a very strong sense!) in any strategy.
The meaning of any lambda term, here, is a \emph{set of values} accounting for all the possible outcomes.
The meaning of $\termone$, as an example, is the set $\{\termtt,\termff\}$.
Nontermination, mixed with nondeterministic choice, implies
that the meaning of terms can even be an \emph{infinite} set of values. As an example, take the
lambda term
$$
\app{(\app{Y}{(\abstr{\varone}{\abstr{\vartwo}{(\ps{\vartwo}{\app{\varone}{(\app{\termsc}{\vartwo})}})}})})}{\termnn{0}}
$$
where $Y$ is a fixed-point combinator, $\termnn{\natone}$ represents the natural number $\natone\in\NN$ and
$\termsc$ computes the successor. Evaluating it (e.g. in call-by-value) produces the infinite set 
$$
\{\termnn{0},\termnn{1},\termnn{2},\ldots\}.
$$ 
It's clear that ordinary ways to give an operational semantics
to the lambda-calculus (e.g. a finitary, inductively defined, formal system) do not suffice here, since
they intrinsically attribute a finitary meaning to terms. So, how can we define small-step and
big-step semantics in a nondeterministic setting?

Another problematic point is confluence. The situation is even worse
than in ordinary, deterministic, lambda calculus. Take the following term~\cite{Saheb78,SelVal06}:
$$
\termone=\app{(\abstr{ \varone} {\app{\app{\termxor}{\varone}}{\varone}})}{(\ps{\termtt}{\termff})}
$$
where $\termxor=\abstr{\varone}{\abstr{\vartwo}{(\varone(\abstr{\varthree}{\varthree\termff\termtt})(\abstr{\varthree}{\varthree\termtt\termff}))\vartwo}}$
is a term computing a parity function of the two bits in input.
When reducing it call-by-value, we obtain the outcome $\{\termff\}$, while reducing it call-by-name, we
obtain $\{\termtt,\termff\}$. This phenomenon is due to the interaction between nondeterministic choice and 
copying: in call-by-value we choose before copying, and the final result can only be of a certain
form. In call-by-name, we copy before choosing, getting distinct outcomes.
What happens to CPS translations in this setting? Is it still possible to define them?

The aim of this paper is precisely to give answers to the questions above. Or, better, to give
answers to their natural, quantitative generalizations obtained by considering $\oplus$ as an operator
producing any of two possible outcomes with identical probability.

\section{A Brief Introduction to Coinduction}\label{sec:coind}
\emph{Coinduction} is a definitional principle dual to induction, which support a proof principle. It is a very 
useful instrument for reasoning about diverging computations and unbounded structures. 
Coinductive techniques are not yet as popular as inductive techniques; for this reason, in this section we give 
a short introduction to coinduction, following \cite{Leroy}.

An inference system over a set $\unone$ of \emph{judgments} is a set of inference rules. 
An \emph{inference rule} is a pair $(\sjone,\juone)$, where 
$\juone\in\unone$ is the conclusion of the rule and $\sjone\subseteq\unone$ is the set of its premises or 
antecedents. An \emph{inference system} $\isone$ over $\unone$ is a set of inference rules over $\unone$.

The usual way to give meaning to an inference system is to consider the fixed points 
of the associated inference operator. If $\isone$ 
is an inference system over $\unone$, we define the operator 
$\opinf{\isone}:\wp(\unone)\rightarrow\wp(\unone)$ as 
$$
\opinf{\isone}(\sjone)=\{\juone\in\unone\;|\;\exists\sjtwo\subseteq\sjone,\;(\sjtwo,\juone)\in\isone\}. 
$$
In other words, $\opinf{\isone}(\sjone)$ is the set of judgments that can be inferred in \emph{one step} from the 
judgments in $\sjone$ by using the inference rules. 
A set $\sjone$ is said to be \emph{closed} if $\opinf{\isone}(\sjone)\subseteq\sjone$, and \emph{consistent} if
 $\sjone\subseteq\opinf{\isone}{\sjone}$.
A closed set $\sjone$ is such that no new judgments can be inferred from $\sjone$. A consistent set $\sjone$ is such that all 
judgments that cannot be inferred from $\sjone$ are not in $\sjone$. The inference operator is monotone: $\opinf{\isone}(\sjone)
\subseteq\opinf{\isone}(\sjtwo)$ if $\sjone\subseteq\sjtwo$. By Tarski's fixed point theorem for complete lattices, it 
follows that the inference operator possesses both a least fixed point and a greatest fixed point, which are the 
smallest closed set and the largest consistent set, respectively: 
\begin{align*}
\lfp{\opinf{\isone}}&=\bigcap\{\sjone\;|\;\opinf{\isone}(\sjone)\subseteq\sjone\};\\ 
\gfp{\opinf{\isone}}&=\bigcup\{\sjone\;|\;\sjone\subseteq\opinf{\isone}(\sjone)\}. 
\end{align*}
The least fixed point $\lfp{\opinf{\isone}}$ is the \emph{inductive interpretation} of the inference system $\isone$, and 
the greatest fixed point $\gfp{\opinf{\isone}}$ is its \emph{coinductive interpretation}. 

These interpretations lead to the 
following two proof principles: 
\begin{varitemize}
\item
  \textbf{Induction principle}: to prove that all judgments in the inductive interpretation belong to a set $\sjone$, 
  show that $\sjone$ 
  is $\opinf{\isone}$-closed. 
\item
  \textbf{Coinduction principle}: to prove that all judgments in a set $\sjone$ belong to the coinductive interpretation, 
  show that $\sjone$ is $\opinf{\isone}$-consistent. Indeed, if $\sjone$ is $\opinf{\isone}$-consistent, then
  \begin{align*}
    \sjone\subseteq\opinf{\isone}(\sjone)&\Longrightarrow \sjone\in\{\sjtwo\;|\;\sjtwo\subseteq\opinf{\isone}(\sjtwo)\}\\
       &\Longrightarrow\sjone\subseteq\bigcup\{\sjtwo\;|\;\sjtwo\subseteq\opinf{\isone}(\sjtwo)\}=\gfp{\opinf{\isone}}.
  \end{align*}
\end{varitemize}

\begin{example}[Finite and Infinite Trees]
  Let us consider all those (finite and infinite) trees whose nodes are labelled with
  a symbol from the alphabet $\{\circ,\bullet\}$, Example of finite trees are
  $$
  \begin{array}{c}
  \circ\langle\bullet,\circ\rangle\\
  \bullet\langle\circ\langle\circ,\bullet\rangle,\bullet\langle\circ,\bullet\rangle,\circ\rangle\\
  \bullet
  \end{array}
  $$
  while an example of an infinite tree is the (unique!) tree $t$ such that $t=\circ\langle t,t,t\rangle$.
  Let $\unone$ the set of judgments in the form $t\Downarrow$, where $t$ is a possibly infinite tree
  as above. Moreover, let $\isone$ be the inference system composed by all the instances of the
  two rules below:
  \begin{align*}
    \urule{}{\bullet\Downarrow}{} &\quad& 
    \urule{}{\circ\Downarrow}{} &\quad& 
    \urule{t_1\Downarrow,\ldots, t_n\Downarrow}{\bullet\langle t_1,\ldots, t_n\rangle\Downarrow}{} &\quad&
    \urule{t_1\Downarrow,\ldots, t_n\Downarrow}{\circ\langle t_1,\ldots, t_n\rangle\Downarrow}{}
  \end{align*}
  The inductive interpretation of $\isone$ contains all judgements $t\Downarrow$ where $t$ is \emph{finite} tree,
  while its coinductive interpretation contains also all judgments $t\Downarrow$ where $t$ is \emph{infinite}.
\end{example}
\section{Syntax and Preliminary Definitions}\label{sec:syntax}
In this section we introduce  the syntax of $\LOP$, a language of lambda terms with binary choice introduced
by de'Liguoro and Piperno~\cite{deLigPip95}. This is the language whose probabilistic semantics is the main
topic of this paper.

The most important syntactic category is certainly those of terms. Actually, $\LOP$ is nothing more
than the usual untyped and pure lambda calculus, endowed with a binary choice operator $\oplus$ which
is meant to represent nondeterministic choice.

\begin{definition}[Terms]
Let $\setvar=\{\varone,\vartwo,\ldots\}$ be a denumerable set of variables.
The set $\LOP$ of \emph{term expressions}, or \emph{terms} is the smallest set such that:
\begin{varenumerate}
\item if $\varone\in\setvar$ then $\varone\in\LOP$;
\item if $\varone\in\setvar$ and $\termone\in\LOP$, then $\abstr{\varone}{\termone}\in\LOP$;
\item if $\termone,\termtwo\in\LOP$ then $\app{\termone}{\termtwo}\in\LOP$;
\item if $\termone,\termtwo\in\LOP$ then $\ps{\termone}{\termtwo}\in\LOP$.
\end{varenumerate}
Terms are ranged over by metavariables like $\termone,\termtwo,\termthree$.
\end{definition}
Terms, as usual, are considered modulo renaming of bound variables.
The set of free variables of a term $\termone$ is indicated as $\FV{\termone}$ and is
defined as usual. A term $\termone$ is \emph{closed} if $\FV{\termone}=\emptyset$. 
The (capture-avoiding) substitution of $\termtwo$ for the free occurrences of $\varone$ in $\termone$ is denoted
$\subst{\termone}{\varone}{\termtwo}$. Unless otherwise stated, all results 
in this paper hold only for \emph{programs}, that is to say for closed terms. Values are defined in a standard way:
\begin{definition}[Values]
A term is a \emph{value} if it is a variable or a lambda abstraction. We will call $\val$ 
the set of all values. Values are ranged over by metavariables like $\valone,\valtwo,\valthree$.
\end{definition}
The reduction relation $\rightarrow$ considered in~\cite{deLigPip95} is obtained by extending usual $\beta$-reduction with
two new reduction rules, namely $\ps{\termone}{\termtwo}\rightarrow\termone$ and $\ps{\termone}{\termtwo}\rightarrow\termtwo$, which
can be applied in every possible context. In this paper, following
Plotkin~\cite{Plotkin75}, we concentrate on weak reduction: computation can only take
place in applicative (or choice) contexts.

\begin{notation}
In the following, we sometimes need to work with finite sequences of terms. A sequence $\termone_1,\ldots,\termone_n$ 
is denoted as $\mul{\termone}$. This notation can be used to denote sequences obtained from other 
sequences and terms, e.g., $\ps{\termone}{\mul{\termtwo}}$
is $\ps{\termone}{\termtwo_1},\ldots,\ps{\termone}{\termtwo_n}$ whenever $\mul{\termtwo}$ is
$\termtwo_1,\ldots,\termtwo_n$.
\end{notation}
\subsection{Distributions}
In the probabilistic semantics we will endow $\LOP$ with, a program reduces not to a single value 
but rather to a distribution of possible observables, i.e. to a function assigning a probability
to any value. This way, all possible outputs of all binary choices are taken into account, each 
with its own probability. Divergence is indeed a possibility in the untyped setting and
our definition reflects it.
\begin{definition}[Distributions]
\begin{varenumerate}
\item 
  A \emph{probability distribution} is a function 
  $\distone: \val\rightarrow\RR_{[0,1]}$ such that 
  $\sum_{\valone\in\val} \distone(\valone)\leq 1$.
  $\pdists$ denotes the set of all probability distributions.
\item 
  A \emph{proper probability distribution} is a probability distribution such
  that $\sum_{\valone\in\val} \distone(\valone)=1$.
\item 
  Given a probability distribution $\distone$, its \emph{support} $\supp{\distone}$ is the
  subset of $\val$ whose elements are values to which $\distone$ attributes positive
  probability.
\item
  Given a probability distribution $\distone$, its \emph{sum} $\sumd{\distone}$ is simply
  $\sum_{\valone\in\val} \distone(\valone)$.
\end{varenumerate}
\end{definition}
The notion of a probability distribution as we give it is general enough to capture
the semantics of those terms which have some nonnull probability of diverging.
We use the expression $\{\valone_{1}^{\probone_1},\ldots,\valone_{k}^{\probone_k}\}$ 
to denote the probability distribution with finite support $\distone$ 
defined as
$\distone(\valone)=\sum_{\valone_i=\valone}\probone_i$.
Please observe that $\sumd{\distone}=\sum_{i=1}^k\probone_i$.

Sometimes we need to compare distinct distributions. The natural way to do that is just by lifting the
canonical order on $\RR$ up to distributions, pointwise:
\begin{definition}\label{def:orddist}
$\distone\leq \disttwo$ iff $\distone(\valone)\leq \disttwo(\valone)$ for every value $\valone$.
\end{definition}
The structure $(\pdists,\leq)$ is a partial order, but not a lattice~\cite{DaveyPriestley}: the join of two distributions
$\distone$ and $\disttwo$ does not necessarily exist in $\pdists$. However, $(\pdists,\leq)$ is a complete meet-semilattice, since meets are always
guaranteed to exist: for every $\sjone\subseteq\pdists$, the greatest lower bound of distributions
in $\sjone$ is itself a distribution. Please observe, on the other hand, that all functions
from $\val$ to $\RR_\infty$ actually form a complete lattice. And that $\sumd{(\cdot)}$ as a function
from those functions to $\RR_{\infty}$ is actually a complete lattice homomorphism.

\section{Call-by-Value}\label{sec:cbv}
In this section, four ways to give a probabilistic semantics to $\LOP$ are introduced, all of them
following the so-called call-by-value passing discipline. 

A (weak) call-by-value notion of reduction can be obtained from ordinary
reduction by restricting it in such a way that only values are passed to functions. Accordingly, choices
are only made when both alternatives are themselves values:
\begin{definition}[Call-by-value Reduction]\label{def:leftRedV}
  Leftmost reduction $\rv$ is the least binary relation between $\LOP$ and $\LOP^*$
  such that:
  \begin{align*}
    (\abstr{\varone}{\termone}){\valone}&\rv\subst{\termone}{\varone}{\valone} &
    \ps{\valone}{\valtwo}&\rv\valone,\valtwo\\
    \termone\termtwo&\rv\mul{\termthree}\termtwo\qquad\mbox{if $\termone\rv\mul{\termthree}$} &
    \ps{\termone}{\termtwo}&\rv\ps{\mul{\termthree}}{\termtwo}\qquad\mbox{if $\termone\rv\mul{\termthree}$}\\
    \valone\termone&\rv\valone\mul{\termtwo}\qquad\mbox{if $\termone\rv\mul{\termtwo}$} &
    \ps{\valone}{\termone}&\rv\ps{\valone}{\mul{\termtwo}}\qquad\mbox{if $\termone\rv\mul{\termtwo}$}
  \end{align*}
  where $\valone,\valtwo\in\val$.
\end{definition}
Please, notice that reduction \emph{is not probabilistic}. In fact, reduction is a relation between
terms and unlabeled sequences of terms without any reference to probability. Informally,
if $\termone\rv\termtwo_1\ldots\termtwo_n$ means that $\termone$ rewrites in one step to every $\termtwo_i$
with the \emph{same} probability $1/n$. Clearly, $n\in\{1,2\}$ whenever 
$\termone\rv\termtwo_1\ldots\termtwo_n$. Notice again how the evaluation of \emph{both} branches
of a binary choice is done \emph{before} performing the choice itself. One the one hand, this
is very much in the style of call-by-value evaluation. On the other, a more standard notion of choice,
which is performed before evaluating the brances can be easily encoded as follows:
$$
\termone+\termtwo=(\ps{(\abstr{\varone}{\abstr{\vartwo}{\varone}})}{(\abstr{\varone}{\abstr{\vartwo}{\vartwo}})})
   (\abstr{\varthree}{\termone})(\abstr{\varthree}{\termtwo})(\abstr{\varfour}{\varfour}),
$$
where $\varthree$ does not appear free in $\termone$ nor in $\termtwo$.

\subsection{CbV Small-step Semantics} \label{sec:cbvsmall}
Following the general methodology described in ~\cite{Leroy}, we model separately convergence and divergence. 
Finite computations are, as usual, inductively defined, while divergence can be captured by
interpreting a different set of rules coinductively. Both definitions give some \emph{quantitative} information 
about the dynamics of any term $\termone$: either a distribution of possible outcomes is associated to $\termone$,
or the probability of divergence of $\termone$ is derived. Both induction and coinduction can be used to
characterize the distribution of values a term evaluates to. In an inductive characterization, one
is allowed to underapproximate the target distribution, but then takes an upper bound of all the
approximations. In a coinductive characterization, on the other hand, one can naturally overapproximates
the distribution, then taking a lower bound.

\subsubsection{Inductive CbV Small-Step Semantics for Convergence}\label{sect:isssfc}
A small-step semantics for convergence is captured here by way of a binary relation $\ssemv{}{}$ between 
terms in $\LOP$ and distributions. The relation $\ssemv{}{}$ is defined as the inductive interpretation
of the inference system whose rules are all possible instances of the following ones: 
$$
\zrule
    {
      \ssemv{\termone}{\emdist}
    }
    {\sev}
\qquad
\zrule
    {
      \ssemv{\valone}{\{\valone^1\}}
    }
    {\svv}
\qquad
\brule
    {
      \termone\rv\mul{\termtwo}
    }
    {
      \ssemv{\termtwo_i}{\distone_i}
    }
    {
      \ssemv{\termone}{\sum_{i=1}^n\frac{1}{n}\distone_i}
    }
    {\smv}
$$
As usual, $\mul{\termtwo}$ stands for the sequence $\termtwo_1,\ldots\termtwo_n$. Since the
relation $\ssemv{}{}$ is inductively defined, any proof of judgments involving $\ssemv{}{}$
is a \emph{finite} object. If $\derone$ is such a proof for $\ssemv{\termone}{\distone}$,
we write $\derone:\ssemv{\termone}{\distone}$. The proof $\derone$ is said to be
\emph{structurally smaller or equal to} another proof $\dertwo$ if the number
of rule instances in $\derone$ is smaller or equal to the number of rule instances
in $\dertwo$. In this case, we write $\derone\leqder\dertwo$.

First of all, observe that $\ssemv{}{}$ is not a function: many different distributions
can be put in correspondence with the same term $\termone$. Moreover, there is
one distribution $\distone$ such that $\ssemv{\termone}{\distone}$ always holds, independently on $\termone$,
namely $\emdist$. Actually, rule $\sev$ allows you to ``give up'' while looking for a distribution
for $\termone$ and conclude that $\termone$ is in relation with $\emdist$. In other words,
$\ssemv{}{}$ is not meant to be a way to attribute \emph{one} distribution to every term, but rather
to find all finitary approximants of the (unique!) distribution we are looking for (see Definition~\ref{def:ssv}).

The set of the distributions that a  term evaluates to  is a direct set:
\begin{lemma}\label{lemma:linord}
For every term $\termone$, if $\ssemv{\termone}{\distone}$ and $\ssemv{\termone}{\disttwo}$,
then there exists a distribution $\distthree$ such that $\ssemv{\termone}{\distthree}$ with 
$\distone\geq\distthree$ and $\disttwo\geq\distthree$.
\end{lemma}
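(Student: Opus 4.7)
The plan is to give a two-tier treatment. First, I observe that the statement admits an essentially trivial witness: by the axiom rule $\sev$, we have $\ssemv{\termone}{\emdist}$ for every term $\termone$, and the empty distribution $\emdist$ (assigning $0$ to every value) is the minimum of $(\pdists,\leq)$ as defined in Definition~\ref{def:orddist}. Hence $\distthree = \emdist$ automatically satisfies both $\distone \geq \emdist$ and $\disttwo \geq \emdist$, regardless of what $\distone$ and $\disttwo$ are. This already fulfils the conclusion of the lemma as worded.

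For a more informative common lower bound (which is presumably what the subsequent development of the small-step semantics actually needs), I would proceed by joint induction on the pair of derivation sizes of $\derone : \ssemv{\termone}{\distone}$ and $\dertwo : \ssemv{\termone}{\disttwo}$, with a case analysis on their concluding rules. If either derivation ends in $\sev$, I return $\distthree = \emdist$ via $\sev$. If $\termone$ is a value and both derivations end in $\svv$, then necessarily $\distone = \disttwo = \{\termone^1\}$ and I take $\distthree$ to be this common Dirac distribution. If $\termone$ is not a value and both derivations end in $\smv$, then they must begin with the same reduction step $\termone \rv \mul{\termtwo}$ (see below), so one obtains paired subderivations for each $\termtwo_i$ yielding $\distone_i$ and $\disttwo_i$; by the inductive hypothesis there exists $\distthree_i$ with $\distone_i \geq \distthree_i$ and $\disttwo_i \geq \distthree_i$ and $\ssemv{\termtwo_i}{\distthree_i}$, and an application of $\smv$ produces $\distthree = \sum_{i=1}^n \frac{1}{n}\distthree_i$, which lies below both $\distone$ and $\disttwo$ by monotonicity of the averaging operation.

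The one technical point to verify is that $\rv$ is functional: for each $\termone$, the sequence $\mul{\termtwo}$ with $\termone \rv \mul{\termtwo}$ (if it exists at all) is uniquely determined. This follows by a straightforward structural induction on $\termone$, since the clauses of Definition~\ref{def:leftRedV} are syntax-directed: an application selects between the left-argument-reducing and right-argument-reducing rules according to whether the left subterm is already a value, and similarly for sums. Once determinacy of $\rv$ is granted, the case analysis above is exhaustive because values cannot reduce under $\rv$ (so the mixed $\svv$/$\smv$ case never arises), and the joint induction is well-founded on the sum of derivation sizes. I expect no further obstacle: the trivial witness $\emdist$ suffices for the stated lemma, and the joint-induction refinement goes through cleanly once $\rv$'s determinacy is in hand.
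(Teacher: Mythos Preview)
Your reading of the literal statement is correct: as written, the lemma asks for a common \emph{lower} bound $\distthree$, and $\emdist$ via $\sev$ is indeed a trivial witness. Your joint-induction refinement also goes through as a construction of a common lower bound, and your observation about determinacy of $\rv$ is right and routine.

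However, the statement in the paper has the inequalities reversed. The sentence preceding the lemma announces that the set of derivable distributions is a \emph{directed} set, and the paragraph following Definition~\ref{def:ssv} invokes the lemma to justify commuting $\sumd{(\cdot)}$ with the supremum, which requires directedness (common \emph{upper} bounds), not common lower bounds. If you look at the paper's own proof, it explicitly constructs $\distthree$ with $\distone,\disttwo\leq\distthree$: when $\distone=\emdist$ it takes $\distthree=\disttwo$; when $\termone$ is a value with $\distone=\{\valone^1\}$ it takes $\distthree=\distone$; in the $\smv$ case it averages per-branch upper bounds obtained by induction. So the intended lemma is the upper-bound version, and $\emdist$ is no longer a free witness.

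Your joint-induction skeleton adapts immediately to the intended version: keep the case analysis on the last rules of the two derivations, but in the $\sev$ case return the \emph{other} distribution rather than $\emdist$, and in the $\smv$/$\smv$ case apply the inductive hypothesis to obtain per-branch upper bounds $\distthree_i\geq\distone_i,\disttwo_i$ and average. Determinacy of $\rv$ is still the key structural fact that makes the two $\smv$ instances share the same sequence $\mul{\termtwo}$. The only thing that changes is the direction of the inequalities and which branch supplies the witness in the base cases.
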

\begin{proof}
  By induction on the structure of derivations for
  $\ssemv{\termone}{\distone}$.
  \begin{varitemize}
  \item If $\ssemv{\termone}{\emptyset}$ then $\distthree=\disttwo$;
  \item If $\ssemv{\termone}{\{\valone^{1}\}}$ we have that $\disttwo=\emptyset$, or $\disttwo=\{\valone^{1}\}$, 
     then $\distthree=\distone$;
  \item If $\ssemv{\termone}{\termtwo_1,\ldots,\termtwo_k}$ and $\ssemv{\termtwo_i}{\distone_i}$ for $i=1,\ldots,k$,  
     there are some cases:
  \begin{varitemize}
  \item If $\disttwo=\emptyset$, then $\distthree=\distone$;
  \item If $\disttwo >\emptyset$, then $\disttwo=\sum_{i=1}^{k}\frac{1}{k}\disttwo_i$, where  $\ssemv{\termtwo_i}{\disttwo_i}$ 
    for $i=1,\ldots,k$. Now, by inductive hypothesis, there exist distributions $\distthree_i$  such that 
    $\ssemv{\termtwo_i}{\distthree_i}$ and $\distone_i,\disttwo_i\leq \distthree_i$ for $i=1,\ldots,k$. We have 
    $\distthree=\sum_{i=1}^{k}\frac{1}{k}\distthree_i$, and by definition 
    $\distone\leq\distthree$ and $\disttwo\leq\distthree$.
  \end{varitemize}
  \end{varitemize}
This concludes the proof.
\end{proof}

We are now ready to define what \emph{the} small-step semantics of any term is:

\begin{definition}\label{def:ssv}
The (call-by-value) small-step semantics of a lambda term $\termone\in\LOP$  is the distribution
$\Ssemv{\termone}$ defined as $\sup_{\ssemv{\termone}{\distone}}\distone$. 
\end{definition}

Please observe that such a distribution
is always guaranteed to exist, precisely because of Lemma~\ref{lemma:linord}. Indeed
$$
\sumd{\left(\sup_{\ssemv{\termone}{\distone}}\distone\right)}=\sup_{\ssemv{\termone}{\distone}}\left(\sumd{\distone}\right)\leq 1,
$$
since $\sumd{\distone}$ is at most $1$, by hypothesis.

\begin{example}
The term $\app{(\abstr{\varone}{\varone})}{(\abstr{\varone}{\varone})}$ evaluates to $\distthree=\emptyset$, by means of rules
$\sev$, and to $\disttwo=\{{\abstr{\varone}{\varone}}^{1}\}$, by means of rules $\smv$ and $\svv$. By definition, $\Ssemv{\termone}=\disttwo$.
\end{example}

\subsubsection{Coinductive CbV Small-step Semantics for Divergence}
Divergence is captured by another, \emph{coinductively} defined binary relation $\sdsemv{}{}$ between $\LOP$ and 
$\RRp{0}{1}$. Rules defining the underlying inference systems are those obtained from
$$
\czrule
    {
      \sdsemv{\valone}{0}
    }
    {\dvv}
\qquad
\cbrule
    {
      \termone\rv\mul{\termtwo}
    }
    {
	 \sdsemv{\termtwo_i}{\probone_i}
    }
    {
      \sdsemv{\termone}{\sum_{i=1}^n\frac{1}{n}\probone_i}
    }
    {\dmv}
$$
where $\mul{\termtwo}$ is $\termtwo_1,\ldots\termtwo_n$.

The relation $\sdsemv{}{}$ deals naturally with infinite computations, being coinductively defined. This allows
to derive the divergence probability $\probone$ of a term in $\LOP$. Rules $\dvv$ and $\dmv$ can be read
as follows: values diverge with probability $0$, while probability of divergence for a term $\termone$ is equal 
to the normalized sum of its reducts' probabilities of divergence. The following example stresses a crucial point:

\begin{example}\label{exa:cbvssdiv}
Let us consider the well-know diverging term $\Omega=\Delta\Delta$, where $\Delta=\abstr{\varone}{\varone\varone}$.
We would like to be able to prove that $\sdsemv{\Omega}{1}$, namely that $\Omega$ diverges with probability $1$. 
Doing that formally requires proving that some sets of judgments $\sjone$ including $\sdsemv{\Omega}{1}$ is consistent
with respect to $\sdsemv{}{}$. Actually we can choose $\sjone$ as $\{\sdsemv{\Omega}{1}\}$, since
$\sdsemv{\Omega}{1}$ can be easily derived from itself by rule $\dmv$. The trouble is that in the same way we 
can derive $\sdsemv{\Omega}{\probone}$ for every possible $0\leq\probone\leq 1$! So, in a sense,
$\sdsemv{}{}$ is inconsistent.
\end{example}

Example~\ref{exa:cbvssdiv} shows that not all probabilities attributed to terms via $\sdsemv{}{}$ are accurate. Actually,
a good definition consists in taking the \emph{divergence probability} of any term $\termone$, $\Sdsemv{\termone}$ simply
as $\sup_{\sdsemv{\termone}{\probone}}\probone$.
As an example, $\Sdsemv{\Omega}=1$, since $\sdsemv{\termone}{1}$ and, clearly, one cannot go beyond $1$.

\subsubsection{Coinductive CbV Small-Step Semantics for Convergence}\label{sec:cobsv}
If one takes the inductive semantics from Section~\ref{sect:isssfc}, drops
$\sev$ and interpret everything coinductively, what came out is an alternative 
semantics for convergence:
$$
\czrule
    {
      \cssemv{\valone}{\{\valone^1\}}
    }
    {\csvv}
\qquad
\cbrule
    {
      \termone\rv\mul{\termtwo}
    }
    {
      \cssemv{\termtwo_i}{\distone_i}
    }
    {
      \cssemv{\termone}{\sum_{i=1}^n\frac{1}{n}\distone_i}
    }
    {\csmv}
$$
Interpreting everything coinductively has the effect of allowing infinite computations to be
modeled. But this allows to ``promise'' to reach a certain distribution without really being
able to fullfill this:
\begin{example}
Consider again $\Omega$. One would like to be able to prove that $\cssemv{\Omega}{\emdist}$.
Unfortunately, the coinductive interpretation of
the formal system above contains $\cssemv{\Omega}{\distone}$ \emph{for every} 
distribution $\distone$, as can be easily verified
\end{example}
The solution to the problem highlighted by the example above is just defining
the coinductive semantics of any term $\termone$ just as $\CSsemv{\termone}=\inf_{\cssemv{\termone}{\distone}}\distone$.
Clearly, $\CSsemv{\Omega}=\emdist$.

\subsection{CbV Big-Step Semantics}\label{sec:cbvbig}

An alternative style to give semantics to programming languages is the so-called big-step semantics.
Big-step semantics is more compositional than small-step: the meaning of a term can be obtained
from the meanings of its sub-terms.

Probabilistic big step semantics for $\LOP$ can be given by a binary relation 
$\bsemv{}{}$ between $\LOP$ and distributions. It is the coinductive interpretation
of (all instances of) the following rules:
$$
\czrule
    {
      \bsemv{\valone}{\{\valone^1\}}
    }
    {\bvv}
\qquad
\ctrule
    { 
      \bsemv{\termone}{\distone}
    }
    {
      \bsemv{\termtwo}{\disttwo}
    }
    {
      \{\bsemv{\subst{\termfour}{\varone}{\valone}}{\distthree_{\termfour,\valone}}\}_{\abstr{\varone}{\termfour}\in\supp{\distone},\valone\in\supp{\disttwo}}
    }
    {
      \bsemv{\app{\termone}{\termtwo}}{\sum_{\abstr{\varone}{\termfour}\in\supp{\distone},\valone\in\supp{\disttwo}}
       \distone(\abstr{\varone}{\termfour})\cdot \disttwo(\valone)\cdot \distthree_{\termfour,\valone}}
    }
    {\bav
    }
$$
$$
\cbrule
    {
      \bsemv{\termone}{\distone}
    }
    {
      \bsemv{\termtwo}{\disttwo}
    }
    {
      \bsemv{\ps{\termone}{\termtwo}}{\frac{1}{2}\cdot \distone\cdot \sumd{\disttwo}+\frac{1}{2}\cdot \disttwo\cdot\sumd{\distone}}
    }
    {\bsv
    }
$$
The most interesting rule is definitely $\bav$: to give semantics to an application
$\app{\termone}{\termtwo}$, we first give semantics to $\termone$ and $\termtwo$,
obtaining two distributions $\distone$ and $\disttwo$, respectively. Then, for every
$\abstr{\varone}{\termfour}$ in $\supp{\distone}$ and for every $\valone$  in $\supp{\disttwo}$,
we evaluate $\subst{\termfour}{\varone}{\valone}$, obtaining some distributions
$\distthree_{\termfour,\valone}$. The meaning of $\app{\termone}{\termtwo}$ is nothing more that
the sum of all such distributions $\distthree_{\termfour,\valone}$, each weighted by the probability
of getting $\abstr{\varone}{\termfour}$ and $\valone$ in $\distone$ and $\disttwo$, respectively.
This way of defining the big-step semantics of applications can be made simpler, at the price of
making the task of deriving semantic assertions harder, by replacing the premise 
$
\{\bsemv{\subst{\termfour}{\varone}{\valone}}{\distthree_{\termfour,\valone}}\}_{\abstr{\varone}{\termfour}\in\supp{\distone},\valone\in\supp{\disttwo}}
$
of rule $\bav$ with
$
\{\bsemv{\subst{\termfour}{\varone}{\valone}}{\distthree_{\termfour,\valone}}\}_{\termfour\in\LOP,\valone\in\val}.
$

Another interesting rule is $\bsv$. Please observe how the distributions
$\distone$ and $\disttwo$ must be normalized by $\sumd{\disttwo}$ and $\sumd{\distone}$ (respectively) when
computing the result. This reflects call-by-value evaluation: if any of $\termone$ and $\termtwo$ diverges, their
sum must diverge, too.

Like $\ssemv{}{}$ and $\cssemv{}{}$, the relation $\bsemv{}{}$ is not a function: many possible distributions can be
assigned to the same term $\termone$, in particular when $\termone$ (possibly) diverges. In particular,
distributions which somehow overapproximate the ``real one'' can always be attributed to $\termone$ by
the rules above. Thus, we define \emph{the} big-step semantics of each lambda-term in the following way:

\begin{definition}\label{def:bsv}
The (call-by-value) coinductive big-step semantics of a lambda term $\termone\in\LOP$  is the distribution
$\Bsemv{\termone}$ defined as $\inf_{\bsemv{\termone}{\distone}}\distone$. 
\end{definition}

\begin{example}\label{exa:omega}
Let us consider again $\Omega$, from Example~\ref{exa:cbvssdiv}.
As it can be verified, $\bsemv{\Omega}{\distone}$ for all possible distributions  
$\distone$. To formally prove that, we need to find a consistent set $\sjone$ (with respect
to $\bsemv{}{}$) containing all judgments in the form
$\bsemv{\Omega}{\distone}$. $\sjone$ is actually the set
$$
\{\bsemv{\Delta}{\{\Delta^1\}}\}\cup\{\bsemv{\Omega}{\distone}\mid\distone:\val\rightarrow\RRp{0}{1}\}.
$$
Clearly, $\sjone$ is consistent, since any judgment in $\sjone$ can be obtained from
other judgments in $\sjone$ in one deduction step: 
\begin{varitemize}
\item
  $\bsemv{\Delta}{\{\Delta^1\}}$ by rule $\bvv$.
\item
  From $\bsemv{\Delta}{\{\Delta^1\}}$ and $\bsemv{\Omega}{\distone}$, one
  can easily derive $\bsemv{\Omega}{\distone}$ by rule $\bav$.
\end{varitemize}
As a consequence, $\Bsemv{\Omega}=\inf_{\bsemv{\Omega}{\distone}}\distone=\emptyset$.
\end{example}

One may wonder whether an \emph{inductive} big step semantics can be defined for $\LOP$. The answer
is positive: one only needs to add a rule attributing the empty distribution to any terms, in the
spirit of the small-step inductive semantics from Section~\ref{sect:isssfc}. In other words, we
obtain the system
$$
\zrule
    {
      \bsemv{\termone}{\{\emdist\}}
    }
    {\ibev}
\qquad
\trule
    { 
      \bsemv{\termone}{\distone}
    }
    {
      \ibsemv{\termtwo}{\disttwo}
    }
    {
      \{\ibsemv{\subst{\termfour}{\varone}{\valone}}{\distthree_{\termfour,\valone}}\}_{\abstr{\varone}{\termfour}\in\supp{\distone},\valone\in\supp{\disttwo}}
    }
    {
      \ibsemv{\app{\termone}{\termtwo}}{\sum_{\abstr{\varone}{\termfour}\in\supp{\distone},\valone\in\supp{\disttwo}}
       \distone(\abstr{\varone}{\termfour})\cdot \disttwo(\valone)\cdot \distthree_{\termfour,\valone}}
    }
    {\ibav
    }
$$
$$
\zrule
    {
      \ibsemv{\valone}{\{\valone^1\}}
    }
    {\ibvv}
\qquad
\brule
    {
      \ibsemv{\termone}{\distone}
    }
    {
      \ibsemv{\termtwo}{\disttwo}
    }
    {
      \ibsemv{\ps{\termone}{\termtwo}}{\frac{1}{2}\cdot \distone\cdot \sumd{\disttwo}+\frac{1}{2}\cdot \disttwo\cdot\sumd{\distone}}
    }
    {\ibsv
    }
$$
As can be expected, the inductive big step semantics $\IBsemv{\termone}$ of
an term $\termone$ is simply $\sup_{\ibsemv{\termone}{\distone}}$. This is similar to
the semantics considered by Jones~\cite{JonesPhD}.

\subsection{Divergence and Convergence in CbV Small-step Semantics}\label{sec:cbvDvsC}
In the last two sections, various operational semantics for both convergence and
divergence have been introduced. Clearly, one would like them to be essentially
equivalent, i.e., one would like them to attribute the same meaning to programs.

In this section, divergence and convergence small-step semantics will be compared and
proved equivalent: the probability of divergence of $\termone$ obtained through
$\sdsemv{}{}$ will be proved to be equal to $1-\sumd{\Ssemv{\termone}}$. 
In Section~\ref{sec:cbvSvsB} small-step and big-step semantics
for convergence will be proved to produce identical outcomes.

The first step consists in proving that $1-\sumd{\Ssemv{\termone}}$ is a lower
bound to divergence probability of any term $\termone$.
\begin{theorem}\label{th:cbvconvdiv}
For every term $\termone$, $\sdsemv{\termone}{1-\sumd{\Ssemv{\termone}}}$.
\end{theorem}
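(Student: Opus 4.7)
The plan is to prove this by the coinduction principle for the inference system defining $\sdsemv{}{}$. I set
\[
\sjone := \{\sdsemv{\termone}{1-\sumd{\Ssemv{\termone}}}\mid \termone \in \LOP\}
\]
and show that $\sjone$ is consistent, i.e.\ $\sjone \subseteq \opinf{\isone}(\sjone)$. Consistency then gives $\sjone \subseteq \gfp{\opinf{\isone}} = \sdsemv{}{}$, which is exactly the theorem. So the whole task reduces to checking that, for every closed $\termone$, the judgment $\sdsemv{\termone}{1-\sumd{\Ssemv{\termone}}}$ can be derived in a \emph{single} application of $\dvv$ or $\dmv$ from premises that are themselves in $\sjone$.

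I would then split on whether $\termone$ is a value. If $\termone = \valone$, no reduction rule applies, so the only derivable judgments are $\ssemv{\valone}{\emdist}$ and $\ssemv{\valone}{\{\valone^1\}}$; hence $\Ssemv{\valone} = \{\valone^1\}$ and $\sumd{\Ssemv{\valone}} = 1$, so rule $\dvv$ derives $\sdsemv{\valone}{0} = \sdsemv{\valone}{1-\sumd{\Ssemv{\valone}}}$ from no premises. If $\termone$ is a closed non-value, a short side observation (proved by structural induction on closed terms using the shape of the rules in Definition~\ref{def:leftRedV}) shows that $\termone$ admits exactly one leftmost reduction $\termone \rv \mul{\termtwo}$ with $\mul{\termtwo} = \termtwo_1,\ldots,\termtwo_n$. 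Using the directedness supplied by Lemma~\ref{lemma:linord} together with the fact recorded after Definition~\ref{def:orddist} that $\sumd{(\cdot)}$ is a complete-lattice homomorphism, I would then compute
\[
\sumd{\Ssemv{\termone}}
\;=\; \sup_{\ssemv{\termone}{\distone}} \sumd{\distone}
\;=\; \sup_{\{\ssemv{\termtwo_i}{\distone_i}\}_i}\; \tfrac{1}{n}\sum_{i=1}^{n}\sumd{\distone_i}
\;=\; \tfrac{1}{n}\sum_{i=1}^{n}\sumd{\Ssemv{\termtwo_i}},
\]
so that $1 - \sumd{\Ssemv{\termone}} = \tfrac{1}{n}\sum_{i=1}^{n}\bigl(1-\sumd{\Ssemv{\termtwo_i}}\bigr)$. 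Rule $\dmv$ then derives $\sdsemv{\termone}{1-\sumd{\Ssemv{\termone}}}$ from the premises $\sdsemv{\termtwo_i}{1-\sumd{\Ssemv{\termtwo_i}}}$, all of which belong to $\sjone$ by definition.

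The main delicate step in the above is the commutation of $\sup$ with the finite convex combination $\tfrac{1}{n}\sum$. Without the directedness granted by Lemma~\ref{lemma:linord}, the inequality $\ge$ would fail in general, because the supremum in the displayed equation ranges over tuples $(\distone_1,\ldots,\distone_n)$ of independently chosen underapproximants, while I need to know that this tuplewise supremum agrees with the componentwise one. The side observation that every closed non-value reduces is routine but must be made explicit, since without it rule $\dmv$ would not be applicable and the argument would not go through.
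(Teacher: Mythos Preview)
Your proof is correct and follows essentially the same route as the paper: both define the same candidate set $\sjone$, show consistency by case-splitting on whether $\termone$ is a value, and in the non-value case establish the identity $\sumd{\Ssemv{\termone}}=\tfrac{1}{n}\sum_i\sumd{\Ssemv{\termtwo_i}}$ before applying $\dmv$. If anything, you are more explicit than the paper about two points it glosses over: that every closed non-value actually admits a leftmost step, and that the interchange of $\sup$ with the finite convex combination relies on the directedness provided by Lemma~\ref{lemma:linord}.
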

\begin{proof}
We can prove that all judgments $\sdsemv{\termone}{1-\sumd{\Ssemv{\termone}}}$
belong to the coinductive interpretation of the underlying formal system $\isone$.
To do that, we need to prove that the set $\sjone$ of all these judgements
is consistent, i.e. that $\sjone\subseteq\opinf{\isone}(\sjone)$.
This amounts to show that if $\juone\in\sjone$, then there is a derivation
for $\juone$ whose immediate premises are themselves in $\sjone$. Let's
distinguish two cases:
\begin{varitemize}
  \item
    If $\termone$ is a value $\valone$, then $\sdsemv{\termone}{0}$ and
    $\Ssemv{\termone}=\{\valone^1\}$. The thesis easily follows.
  \item
    If $\termone$ is not a value, then $\termone\rv\mul{\termtwo}$, with 
    $\mul{\termtwo}=\termtwo_1,\ldots,\termtwo_n$. Now, consider the judgements 
    $\sdsemv{\termtwo_i}{1-\sumd{\Ssemv{\termtwo_i}}}$, with $i\in\interv{1}{n}$: 
    they are all in the set $\sjone$. Finally, consider the judgement 
    $\sdsemv{\termone}{\sum_{i=1}^{n}\frac{1}{n}{(1-\sumd{\termtwo_i})}}$: 
    it is in $\opinf{\isone}{(\sjone)}$ because of the presence of rule $\dmv$:
    $$
    \cbrule
        {
          \termone\rv\mul{\termtwo}
        }
        {
          \sdsemv{\termtwo_i}{\probone_i}
        }
        {
          \sdsemv{\termone}{\sum_{i=1}^n\frac{1}{n}\probone_i}
        }
        {\dmv}
    $$    
    It remains to show that $\sumd{\Ssemv{\termone}}=\sum_{i=1}^{n}\frac{1}{n}\sumd{\Ssemv{\termtwo_i}}$:
    \begin{align*}
      \sumd{\Ssemv{\termone}}&=\sumd{\sup_{\ssemv{\termone}{\distone}}\distone}
      =\sumd{\sup_{\ssemv{\termtwo_i}{\disttwo_i}}\left(\sum_{i=1}^n\frac{1}{n}\disttwo_i\right)}
      =\sumd{\left(\sum_{i=1}^n\frac{1}{n}\sup_{\ssemv{\termtwo_i}{\disttwo_i}}\disttwo_i\right)}\\
      &=\sumd{\left(\sum_{i=1}^n\frac{1}{n}\Ssemv{\termtwo_i}\right)}
      =\sum_{i=1}^n\frac{1}{n}\sumd{\Ssemv{\termtwo_i}}
    \end{align*}
\end{varitemize}
This concludes the proof.
\end{proof}
We need something else, however, namely that summing some divergence probability and
the convergence probability obtained through convergence semantics, we cannot go beyond $1$:
\begin{proposition}\label{prop:limitcbv}
If $\sdsemv{\termone}{\probone}$ and $\ssemv{\termone}{\distone}$, then 
$\probone+\sumd{\distone}\leq 1$.
\end{proposition}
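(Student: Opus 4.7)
The plan is to proceed by induction on the structure of the derivation $\derone : \ssemv{\termone}{\distone}$, which is finite since $\ssemv{}{}$ is defined inductively. At each step we need to extract enough information about $\probone$ from the coinductive assumption $\sdsemv{\termone}{\probone}$, which is legitimate because $\probone$ being in the coinductive interpretation means that for \emph{some} rule instance whose conclusion is $\sdsemv{\termone}{\probone}$, the premises are again in the coinductive interpretation; in particular we may ``unfold'' $\sdsemv{\termone}{\probone}$ exactly once at each inductive step. A preliminary observation that I will use is that the relation $\rv$ is functional: for each $\termone$ there is at most one sequence $\mul{\termtwo}$ with $\termone\rv\mul{\termtwo}$, as an immediate inspection of the clauses in Definition~\ref{def:leftRedV} shows.

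Case $\sev$: then $\distone = \emdist$ and $\sumd{\distone}=0$, so the inequality reduces to $\probone\le 1$, which holds because $\sdsemv{}{}$ takes values in $\RRp{0}{1}$ by definition.

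Case $\svv$: then $\termone$ is a value $\valone$ and $\distone=\{\valone^1\}$, so $\sumd{\distone}=1$. Because $\valone$ has no $\rv$-reduct, rule $\dmv$ cannot conclude $\sdsemv{\valone}{\probone}$; the only rule whose conclusion is of the form $\sdsemv{\valone}{-}$ is $\dvv$, forcing $\probone=0$. Hence $\probone+\sumd{\distone}=0+1=1$.

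Case $\smv$: then $\termone\rv\mul{\termtwo}$ with $\mul{\termtwo}=\termtwo_1,\ldots,\termtwo_n$, there are derivations $\derone_i:\ssemv{\termtwo_i}{\distone_i}$ strictly smaller than $\derone$, and $\distone=\sum_{i=1}^n\frac{1}{n}\distone_i$. Since $\termone$ is not a value, the only rule in the system $\isone$ whose conclusion can be $\sdsemv{\termone}{\probone}$ is $\dmv$, and since $\rv$ is functional this rule instance must refer to the same $\mul{\termtwo}$: thus $\probone=\sum_{i=1}^n\frac{1}{n}\probone_i$ with $\sdsemv{\termtwo_i}{\probone_i}$ for each $i$. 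Applying the induction hypothesis to each $\derone_i$ together with $\sdsemv{\termtwo_i}{\probone_i}$ gives $\probone_i+\sumd{\distone_i}\le 1$, and taking the convex combination yields
\begin{equation*}
\probone+\sumd{\distone}
=\sum_{i=1}^n\frac{1}{n}\bigl(\probone_i+\sumd{\distone_i}\bigr)
\le\sum_{i=1}^n\frac{1}{n}\cdot 1=1,
\end{equation*}
which concludes the inductive case.

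The only subtle point is the ``unfold once'' step in Case $\smv$: one must be careful to justify, purely from $\sdsemv{\termone}{\probone}$ being in the greatest fixed point, the existence of probabilities $\probone_i$ with $\sdsemv{\termtwo_i}{\probone_i}$ summing (with weights $1/n$) to $\probone$. This is the only place where coinduction meets the finite induction, and it works precisely because membership in $\gfp{\opinf{\isone}}$ implies the existence of a supporting rule instance whose premises lie in $\gfp{\opinf{\isone}}$; functionality of $\rv$ then pins down these premises to involve the same reducts appearing in $\derone$.
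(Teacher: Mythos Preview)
Your proof is correct and follows essentially the same approach as the paper: induction on the derivation of $\ssemv{\termone}{\distone}$, with the three cases $\sev$, $\svv$, $\smv$ handled exactly as in the paper's argument. Your version is in fact slightly more careful, since you make explicit the functionality of $\rv$ and the justification for the single coinductive unfolding in the $\smv$ case, points the paper leaves implicit.
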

\begin{proof}
The proof goes by induction on the structure of the derivation
for $\ssemv{\termone}{\distone}$:
\begin{varitemize}
\item
  If the only rule in the derivation is
  $$
  \zrule
  {
    \ssemv{\termone}{\emdist}
  }
  {}
  $$
  then $\distone=\emdist$ and $\sumd{\distone}=0$. As a consequence,
  $\probone+\sumd{\distone}=\probone\leq 1$ by definition of the divergence relation.
\item
  If the only rule in the derivation is
  $$
  \zrule
  {
    \ssemv{\valone}{\{\valone^1\}}
  }
  {}
  $$
  then $\sumd{\distone}=1$ and 
  $\sdsemv{\termone}{0}$. As a consequence
  $\probone+\sumd{\distone}\leq 1$.
\item
  If the derivation has the form
  $$
  \brule
  {
    \termone\rv\mul{\termtwo}
  }
  {
    \ssemv{\termtwo_i}{\distone_i}
  }
  {
    \ssemv{\termone}{\sum_{i=1}^n\frac{1}{n}\distone_i}
  }
  {}
  $$  
  then we apply the induction hypothesis on each  $\ssemv{\termtwo_i}{\distone_i}$ for $i\in\interv{1}{n}$.
 For each $i\in\interv{1}{n}$, $\sum \distone_i\leq 1-\probone_i$, where $\probone_i$ is such that $\sdsemv{\termtwo_i}{\probone_1}$.
 Then we have:
 \begin{align*}
 \sum\distone&=\sum\sum_{i=1}^{n}\frac{1}{n}\distone_i=\sum_{i=1}^{n}\frac{1}{n}\sum\distone_i\stackrel{i.h.}{\leq}\sum_{i=1}^{n}\frac{1}{n}(1-\probone_i)\\
 &=1-\sum_{i=1}^{n}\frac{1}{n}\probone_i=1-\probone.
 \end{align*}
 \end{varitemize}
This concludes the proof.
\end{proof}
Everything can be now glued together as follows, exploiting the density of real numbers:
\begin{corollary}
For every $\termone$, $\Sdsemv{\termone}+\sumd{\Ssemv{\termone}}=1$.
\end{corollary}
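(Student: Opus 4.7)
The plan is to prove the two inequalities $\Sdsemv{\termone} + \sumd{\Ssemv{\termone}} \geq 1$ and $\Sdsemv{\termone} + \sumd{\Ssemv{\termone}} \leq 1$ separately, using Theorem~\ref{th:cbvconvdiv} for the first and Proposition~\ref{prop:limitcbv} for the second, with no further induction required.

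For the lower bound, Theorem~\ref{th:cbvconvdiv} exhibits the concrete divergence witness $\sdsemv{\termone}{1-\sumd{\Ssemv{\termone}}}$, so this real number belongs to the set over which $\Sdsemv{\termone}$ is the supremum. Hence immediately $\Sdsemv{\termone} \geq 1 - \sumd{\Ssemv{\termone}}$, which rearranges to $\Sdsemv{\termone} + \sumd{\Ssemv{\termone}} \geq 1$.

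For the upper bound, I would fix an arbitrary $\probone$ with $\sdsemv{\termone}{\probone}$ and an arbitrary $\distone$ with $\ssemv{\termone}{\distone}$. Proposition~\ref{prop:limitcbv} gives $\probone + \sumd{\distone} \leq 1$. Holding $\probone$ fixed and ranging over all such $\distone$, take the supremum on the left; since $\sumd{(\cdot)}$ is a complete lattice homomorphism (as noted right after Definition~\ref{def:orddist}), one has $\sup_{\ssemv{\termone}{\distone}} \sumd{\distone} = \sumd{\sup_{\ssemv{\termone}{\distone}} \distone} = \sumd{\Ssemv{\termone}}$, so we obtain $\probone + \sumd{\Ssemv{\termone}} \leq 1$ for every admissible $\probone$. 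Taking the supremum over $\probone$ yields $\Sdsemv{\termone} + \sumd{\Ssemv{\termone}} \leq 1$. Combining with the lower bound gives the claimed equality.

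The only subtlety I anticipate is the exchange between $\sumd{(\cdot)}$ and the supremum in the last step: this is where the earlier remark that $\sumd{(\cdot)} \colon (\val \to \RR_\infty) \to \RR_\infty$ is a complete lattice homomorphism is essential, since otherwise one would only obtain $\sumd{\Ssemv{\termone}} \geq \sup_{\distone} \sumd{\distone}$ by monotonicity, which would not suffice. Given that observation, and Lemma~\ref{lemma:linord} which guarantees that the supremum defining $\Ssemv{\termone}$ is actually a directed supremum, the whole argument is just bookkeeping with suprema of real numbers, without any appeal to the operational structure of $\LOP$.
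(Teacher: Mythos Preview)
Your proposal is correct and follows essentially the same approach as the paper: the lower bound via Theorem~\ref{th:cbvconvdiv} is identical, and for the upper bound the paper phrases the same supremum argument as a proof by contradiction (assuming $\Sdsemv{\termone}+\sumd{\Ssemv{\termone}}>1$ and extracting a concrete $\probone$ and $\distone$ violating Proposition~\ref{prop:limitcbv}), which relies on exactly the same exchange of $\sumd{(\cdot)}$ with the supremum that you identified.
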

\begin{proof}
$\Sdsemv{\termone}+\sumd{\Ssemv{\termone}}\geq 1$ by Theorem \ref{th:cbvconvdiv}.
Suppose, by way of contradiction, that $\Sdsemv{\termone}+\sumd{\Ssemv{\termone}}\gt 1$.
This implies $\sdsemv{\termone}{\probone}$ where $\probone+\sumd{\Ssemv{\termone}}\gt 1$. 
This, in turn, implies that $\probone+\sumd{\distone}\gt 1$ for some $\ssemv{\termone}{\distone}$. And this is not
possible by Proposition~\ref{prop:limitcbv}.
\end{proof}

\subsection{Relating the Various Definitions for Convergence}\label{sec:cbvSvsB}
Our goal in this section consists in proving that the four
ways we have defined the semantics of a term in $\LOP$ 
(inductive and coinductive, big-step and small-step semantics) attribute identical
meanings to any term. One possibility could be to proceed by ``proving''
three edges of the following diagram:
$$
\xymatrix{
\Ssemvu \ar@{-}[r] \ar@{-}[d] & \CSsemvu\\
\IBsemvu \ar@{-}[r] & \Bsemvu \ar@{-}[u]
}
$$
The vertical edges relate two formulations given with identical ``inductive flavors''
but differing as to whether they are big-step or small-step. Conversely, horizontal
edges put in correspondence two formulations which are both big-step or small step, but which differ
as to which kind of interpretation is taken over the same set of rules. Horizontal
edges are definitely more interesting to be proved, but vertical ones are important,
too. In order to avoid tedious and long (but not necessarily informative) proofs,
we only prove the diagonal edge shown below in this paper:
$$
\xymatrix{
\Ssemvu \ar@{~}[rd] & \CSsemvu\\
\IBsemvu & \Bsemvu 
}
$$
This proof shows the difficulties of both ``vertical'' and ``horizontal'' edges. 

Before embarking in the proof of this result, a brief explanation
of the architecture of the proof is maybe useful. 

Consider any term $\termone$, and define two sets of distributions
$\Ssemvs{\termone}$ and $\Bsemvs{\termone}$ as, respectively, the
sets of probability distributions which can be attributed to $\termone$
in small-step semantics and big-step semantics, respectively:
\begin{align*}
  \Ssemvs{\termone}&=\{\distone\mid\ssemv{\termone}{\distone}\};\\
  \Bsemvs{\termone}&=\{\distone\mid\bsemv{\termone}{\distone}\}.
\end{align*}
We proceed in two steps:
\begin{varitemize}
\item
  First of all, we prove that big-step semantics \emph{dominates} small-step
  semantics, namely that $\distone\geq\disttwo$ whenever $\distone\in\Bsemvs{\termone}$
  and $\disttwo\in\Ssemvs{\termone}$. This way, we are sure that
  $\Bsemv{\termone}\geq\Ssemv{\termone}$. Details are in Section~\ref{sect:domination}
  below.
\item
  Then, we prove that small step semantics can itself be derived using
  big-step semantics, namely that $\Ssemv{\termone}\in\Bsemvs{\termone}$. 
  This way, we immediately obtain that $\Bsemv{\termone}\leq\Ssemv{\termone}$,
  since $\Bsemv{\termone}$ is a minorizer of distributions in $\Bsemvs{\termone}$.
  Details are in Section~\ref{sect:ssinbs}.
\end{varitemize}
As a consequence, $\Ssemvs{\termone}=\Bsemvs{\termone}$. Figure~\ref{fig:ssvsbs}
illustrates the architecture of the proof.

\begin{figure*}
\begin{center}
\includegraphics[scale=1.4]{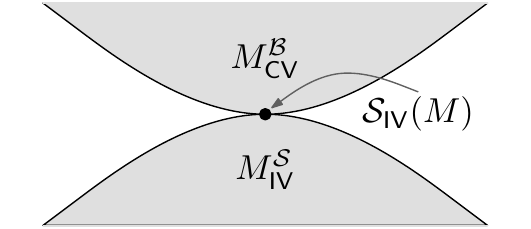}
\end{center}
\caption{The Overall Picture}\label{fig:ssvsbs}
\end{figure*}
\subsubsection{Big-step Dominates Small-step}\label{sect:domination}
The fact any distribution obtained through big-step semantics is bigger than any distribution
obtained through small-step semantics can be proved by induction on the structure of (finite!)
derivations for the latter. Compulsory to that, however, is proving that whenever 
$\ssemv{\termone}{\distone}$ and $\termone$ is, say, a sum $\ps{\termtwo}{\termthree}$,
then appropriate judgments $\ssemv{\termtwo}{\disttwo}$ and $\ssemv{\termthree}{\distthree}$
can be derived. Similarly for applications. 

The following lemma formalizes these ideas and it is a technical tool for Proposition~\ref{prop:leqbs}.

\begin{lemma}\label{lemma:struct}
If $\ssemv{\termone}{\distone}$, then at least
one of the following conditions hold:
\begin{varenumerate}
\item\label{caseoneL}
  $\distone=\emptyset$;
\item\label{casetwoL}
  $\termone$ is a value $\valone$ and $\distone=\{\valone^1\}$;
\item\label{casethreeL}
  $\termone$ is an application $\app{\termtwo}{\termthree}$ and
  there are (finite) distributions $\disttwo$ and
  $\distthree$ and 
  for every $\abstr{\varone}{\termfour}\in\supp{\disttwo} $ and $\valone\in\supp{\distthree}$ a distribution
  $\distfour_{\termfour,\valone}$ 
  such that:
  \begin{varenumerate}
  \item\label{case3oneL}
    $\dertwo:\ssemv{\termtwo}{\disttwo}$, $\derthree:\ssemv{\termthree}{\distthree}$
    and $\derfour_{\termfour,\valone}:\ssemv{\subst{\termfour}{\varone}{\valone}}{\distfour_{\termfour,\valone}}$;
  \item\label{case3twoL}
    $\dertwo,\derthree,\derfour_{\termfour,\valone}\ltder\derone$;
  \item\label{case3threeL}
    $\distone\leq\sum_{\abstr{\varone}{\termfour}\in\supp{\disttwo},\valone\in\supp{\distthree}}\disttwo(\abstr{\varone}{\termfour})\cdot\distthree(\valone)\cdot\distfour_{\termfour,\valone}$.
  \end{varenumerate}
\item\label{casefourL}
  $\termone$ is a sum $\ps{\termtwo}{\termthree}$ and there are (finite) distributions
  $\disttwo$ and
  $\distthree$ such that:
  \begin{varenumerate}
  \item\label{case4oneL}
    $\dertwo:\ssemv{\termtwo}{\disttwo}$ and $\derthree:\ssemv{\termthree}{\distthree}$;
  \item\label{case4twoL}
    $\dertwo,\derthree\ltder\derone$;
  \item\label{case4threeL}
    $\distone\leq \frac{1}{2}\cdot\disttwo\cdot(\sumd{\distthree})+
    \frac{1}{2}\cdot\distthree\cdot(\sumd{\disttwo})$.   
  \end{varenumerate}
\end{varenumerate}
\end{lemma}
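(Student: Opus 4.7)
The plan is to proceed by structural induction on the derivation $\derone : \ssemv{\termone}{\distone}$, splitting on the last rule applied. If it is $\sev$ or $\svv$, conclusions~\ref{caseoneL} and~\ref{casetwoL} hold directly. The substantive case is $\smv$: then $\termone \rv \mul{\termtwo}$ with $\mul{\termtwo} = \termtwo_1, \ldots, \termtwo_n$ and $n \in \{1,2\}$, each premise $\derone_i : \ssemv{\termtwo_i}{\distone_i}$ is structurally smaller than $\derone$, and $\distone = \sum_{i=1}^n \tfrac{1}{n}\distone_i$. Since values never reduce, the closed term $\termone$ must be either an application or a sum, which I analyse in turn.

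For $\termone = \app{\termtwo}{\termthree}$ I branch on which clause of Definition~\ref{def:leftRedV} produced $\termone \rv \mul{\termtwo}$. When both $\termtwo = \abstr{\varone}{\termfour}$ and $\termthree = \valone$ are values, the unique reduct is $\subst{\termfour}{\varone}{\valone}$ and the conclusion is immediate: take $\disttwo = \{(\abstr{\varone}{\termfour})^1\}$ and $\distthree = \{\valone^1\}$ via $\svv$, and $\distfour_{\abstr{\varone}{\termfour},\valone} = \distone_1$ via $\derone_1$, all strictly smaller than $\derone$. When $\termtwo \rv \mul{\termsix}$, each reduct has the form $\app{\termsix_i}{\termthree}$, and applying the induction hypothesis to $\derone_i : \ssemv{\app{\termsix_i}{\termthree}}{\distone_i}$ yields either the empty distribution or, in case~\ref{casethreeL}, a triple $(\disttwo_i, \distthree_i, \{\distfour_{i,\termfour,\valone}\})$ of smaller derivations (with everything taken to be $\emdist$ in the trivial IH case). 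I assemble $\disttwo$ for $\termtwo$ by applying rule $\smv$ on $\termtwo \rv \mul{\termsix}$ with premises $\disttwo_i$, and I merge the various $\distthree_i$ (and similarly the $\distfour_{i,\termfour,\valone}$ over the relevant index sets) into single distributions using the upward-directedness of derivable distributions provided by Lemma~\ref{lemma:linord}. The symmetric sub-pattern with $\termtwo$ already a value and $\termthree \rv \mul{\termsix}$ is analogous. The case $\termone = \ps{\termtwo}{\termthree}$ follows the same template, with inequality~\ref{case4threeL} checked by direct computation.

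The main obstacle I anticipate is the bookkeeping around the strict inequality $\ltder$. Merging IH-produced derivations $\derthree_1, \derthree_2$ of a common sub-term via Lemma~\ref{lemma:linord} yields a $\derthree$ whose size is not \emph{a priori} strictly below $|\derone|$. I expect that a closer inspection of the inductive step of Lemma~\ref{lemma:linord}'s proof gives the size bound $|\derthree| \leq |\derthree_1| + |\derthree_2|$ for the combining construction; together with $|\derthree_i| < |\derone_i|$ and $|\derone| = 1 + \sum_i |\derone_i|$, this leaves enough slack to conclude $|\derthree| < |\derone|$. Once the merged distributions are available, verifying the numerical bounds~\ref{case3threeL} and~\ref{case4threeL} reduces to distributing $\tfrac{1}{n}\sum_i$ across the IH bounds and using $\sumd{\distthree} \geq \sumd{\distthree_i}$ for the dominating merged $\distthree$.
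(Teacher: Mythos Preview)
Your proposal is correct and follows essentially the same route as the paper: induction on $\derone$, case analysis on the last rule, then on the clause of $\rv$, applying the IH to the reducts and reassembling. The one substantive difference is the auxiliary tool used to merge the IH-produced derivations for the common subterm (the various $\distthree_i$ and $\distfour_{i,\termfour,\valone}$): you appeal to Lemma~\ref{lemma:linord} and argue the additive size bound $|\derthree|\leq|\derthree_1|+|\derthree_2|$, whereas the paper first proves a sharper auxiliary lemma stating that for any finite family $\{\derone_i:\ssemv{\termone}{\distone_i}\}_{i\in I}$ there is a derivation of $\ssemv{\termone}{\sup_{i}\distone_i}$ whose size is bounded by that of \emph{one} of the $\derone_i$. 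The paper's bound makes the $\ltder$ bookkeeping immediate, but your additive bound is also sufficient since $|\derone|=1+\sum_i|\derone_i|$ absorbs the sum; either way the argument goes through.
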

\begin{proof}
First of all, let us prove the following auxiliary lemma:
\begin{lemma}\label{lemma:supdist}
If $I$ is finite and $\derone_i:\ssemv{\termone}{\distone_i}$ for every $i\in I$, then
there is $\dertwo:\ssemv{\termone}{\sup_{i\in I}\distone_i}$. Moreover
$\dertwo\leqder\derone_i$ for some $i\in I$.
\end{lemma}
The proof goes by induction on the structure of the proofs
in the family $\{\derone_i\}_{i\in I}$ (which can actually be
done, since $I$ is finite).

Let's now go back to Lemma~\ref{lemma:struct}.
This is an induction on derivations for $\ssemv{\termone}{\distone}$.
The cases where $\derone$ is obtained by the rules without premises
are trivial. So, we can assume that $\derone$ has the form
$$
\brule
    {
      \termone\rv\mul{\termfive}
    }
    {
   \{   \derone_t:\ssemv{\termfive_i}{\distone_i}\}_{i\in\interv{1}{n}}
    }
    {
      \ssemv{\termone}{\sum_{i=1}^n\frac{1}{n}\distone_i}
    }
    {\smv}
$$
where $\mul{\termfive}$ is $\termfive_1,\ldots,\termfive_n$.
Let's distinguish some cases depending on
how the premise $\termone\rv\mul{\termfive}$ is derived
\begin{varitemize}
  \item
  Suppose $\termone=\app{\termtwo}{\termthree}$ and
  that $\termtwo\rv\mul{\termsix}$. Then
  $\mul{\termfive}=\app{\mul{\termsix}}{\termthree}$. From 
  the induction hypothesis applied to
  the derivations $\derone_i:\ssemv{\app{\termsix_i}{\termthree}}{\distone_i}$, we obtain that
  \begin{varitemize}
    \item
      Either all of the $\distone_i$ is $\emdist$. In this case $\distone$ is itself $\emdist$, and a 
      derivation for $\ssemv{\termone}{\emdist}$ can be defined.
    \item
      Or there is at least one among the derivation $\derone_i$ to which case \ref{casethreeL}\ applies.
      Suppose that $k_1,\ldots,k_m$ are the indices in $\interv{1}{n}$ to which case \ref{casethreeL} can
      be applied. This means that for every in $i$ in $\{1,\ldots,m\}$, there is
      a distribution $\disttwo_{i}$, 
      $\distthree_i$ 
      and for every $\abstr{\varone}{\termfour}\in{\supp{\disttwo_i}}$ and
      $\valone\in\supp{\distthree_i}$ 
      a distribution
      $\distfive^{i}_{\termfour^{i},\valone^{i}}$ such that: 
      \begin{align*}
         \dertwo_i&: \ssemv{\termsix_{k_i}}{\disttwo_{i}};\\ 
         \derthree_i&: \ssemv{\termthree}{\distthree_i};\\
         \derfive^{i}_{\termfour,\valone}&: \ssemv{\termfour\{\valone/\varone\}}{\distfive^{i}_{\termfour,\valone}}.
      \end{align*}
      Moreover, $ \dertwo_i,\derthree_i,\derfive^{i}_{g,h}\ltder\derone_{t}$,  and 
      $$
      \distone_{k_i}\leq\sum_{\abstr{\varone}{\termfour^{}}\in{\supp{\disttwo_i}}, 
        \valone^{}\in\supp{\distthree_i}}\disttwo_{i}(\termfour^{})\cdot\distthree_{i}(\valone^{})\cdot\distfive^{i}_{\termfour, \valone}.
      $$
      Now, define $\disttwo$ as the distribution
      $$
      \sum_{i=1}^m\frac{1}{n}\disttwo_i
      $$
      
      Note that $\supp{\disttwo}=\bigcup_{i=1}^{m}\supp{\disttwo_i}$. 
      Clearly, a derivation $\dertwo$ for $\ssemv{\termtwo}{\disttwo}$ can be defined such that $\dertwo\ltder\derone$: simply
      construct it from the derivations $\dertwo_i$ and some derivations for $\ssemv{\termsix_j}{\emdist}$.
      Moreover, a derivation $\derthree$ of $\ssemv{\termthree}{\distthree}$ where $\distthree=\sup_{i=1}^m\distthree_{i}$ 
      can be defined such that $\derthree\ltder\derone$: use Lemma \ref{lemma:supdist}. Observe
      that $\supp{\distthree}=\bigcup_{i=1}^{m}\supp{\distthree_i}$. Similarly,
      derivations $\derfour_{\termfour,\valone}$ can be defined for every
      $\abstr{\varone}{\termfour}\in\supp{\disttwo}$ and for every $\valone\in\supp{\distthree}$
      in such a way that $\derfour_{\termfour,\valone}\ltder\derone$,
      $\derfour_{\termfour,\valone}:\ssemv{\termfour\{\valone/\varone\}}{\distfour_{\termfour,\valone}}$
      and $\distfour_{\termfour,\valone}=\sup_{i=1}^m\distfive_{\termfour,\valone}^i$.
      Now:
      \begin{align*}
          \distone&=\sum_{i=1}^{n}\frac{1}{n}\distone_{i}\leq \sum_{i=1}^{m}\frac{1}{n}\distone_{k_i}\\
          &\leq\sum_{i=1}^{m}\frac{1}{n} \left(\sum_{\abstr{\varone}{\termfour}\in\supp{\disttwo_i},
               \valone\in\supp{\distthree_i}}\disttwo_{i}(\abstr{\varone}{\termfour})\cdot
               \distthree_{i}(\valone)\cdot\distfive^{i}_{\termfour,\valone}\right)\\
          &\leq\sum_{i=1}^{m}\frac{1}{n} \left(\sum_{\abstr{\varone}{\termfour}\in\supp{\disttwo},
               \valone\in\supp{\distthree}}\disttwo_{i}(\abstr{\varone}{\termfour})\cdot
               \distthree(\valone)\cdot\distfive^{i}_{\termfour,\valone}\right)\\
          &=\sum_{\abstr{\varone}{\termfour}\in\supp{\disttwo},
               \valone\in\supp{\distthree}}\sum_{i=1}^{m}\frac{1}{n}
               \left(\disttwo_{i}(\abstr{\varone}{\termfour})\cdot
               \distthree(\valone)\cdot\distfive^{i}_{\termfour,\valone}\right)\\
          &\leq\sum_{\abstr{\varone}{\termfour}\in\supp{\disttwo_i},\valone\in\supp{\distthree}}
               \sum_{i=1}^{m}\frac{1}{n}\left(\disttwo_{i}(\abstr{\varone}{\termfour})
               \cdot\distthree(\valone)\cdot\distfour_{\termfour,\valone}\right)\\
          &=\sum_{\abstr{\varone}{\termfour}\in\supp{\disttwo},\valone\in\supp{\distthree}}
               \disttwo(\abstr{\varone}{\termfour})\cdot\distthree(\valone)\cdot\distfour^{}_{\termfour,\valone}\\
      \end{align*}             
     
  \end{varitemize}
  \item 
    Suppose $\termone=\app{\termtwo}{\termthree}$ and that $\termthree\rv\mul{\termsix}$.  
    Then $\mul{\termfive}=\app{\termtwo}{\mul{\termsix}}$. This case is very similar to the previous one. 
    Note that since we reduce in a call by value setting, then $\termtwo\in\val$ and by means of small-step 
    semantics rules, $\ssemv{\termtwo}{\{\termtwo^{1}\}}$. 
  \item 
    Suppose $\termone=\app{\termtwo}{\termthree}$ and that we are in presence of a redex, i.e. $\termtwo$ is in the 
    form $\lambda x.\termsix$ and $\termthree$ is a value and $\ssemv{\termthree}{\{\termthree^1\}}$. 	
    Then $\mul{\termfive}$ is the unary sequence
    $\termsix\{\termthree/x\}$. The thesis easily follows by induction, taking  
    $\overline{\derone}:\ssemv{\termsix\{\termthree/x\}}{\overline{\distone}}$ as premise, where 
    $\overline{\distone}=\sup_{\ssemv{\termsix\{\termthree/x\}}{\distone_k}}^{\mbox{}}\distone_k$.
  \item 
    Suppose $\termone=\ps{\termtwo}{\termthree}$ and that $\termtwo\rv{\mul{\termsix}}$. Then 
    $\mul{\termfive}=\ps{\mul{\termsix}}{\termthree}$. 
    From the induction hypothesis applied to the derivations $\derone_i:\ssemv{\ps{\termsix_{i}}{\termthree}}{\distone_i}$, 
    we have that:
    \begin{varitemize} 
    \item 
  	Either all of the distributions $\distone_i$ is $\emdist$, then $\distone$ is $\emdist$ itself and a 
	derivation for $\ssemv{\termone}{\emdist}$ can be defined.  
      \item 
  	Or there is at least one derivation among $\derone$ to which case \ref{casefourL} applies.
	Suppose that $k_1,\ldots,k_m$ are the indices in $\interv{1}{n}$ to which case \ref{casethreeL} can
        be applied. For every in $i$ in $\{1,\ldots,m\}$, there are two
        distributions
        $\disttwo_{i}$
        and 
        $\distthree_{i}$.
        Moreover, again by induction hypothesis,  we have 
        \begin{align*}
	  \dertwo_{i}:\ssemv{\termsix_{k_i}}{\disttwo_i}\\
	  \derthree_{i}:\ssemv{\termthree}{\distthree_i}
	\end{align*}
	Moreover,  $\dertwo_i,\derthree_i\ltder\derone_{i}$ and 
  	$\distone_{k_i}\leq\frac{1}{2}\cdot 
	\disttwo_{i}\cdot(\sum\distthree_i)+\frac{1}{2} \cdot\distthree_i\cdot(\sum\disttwo_{i})$.
	Let us define $\disttwo$ as the distribution $\sum_{i=1}^{m}\frac{1}{n}\disttwo_i$
	and observe 
	that  a derivation $\dertwo:\ssemv{\termtwo}{\disttwo}$ such that $\dertwo\ltder\derone$ can 
	be defined from the derivation $\dertwo_i$. Moreover, a derivation $\derthree:\ssemv{\termthree}{\distthree}$ 
	can be defined such that  $\derthree\ltder\derone$, taking $\distthree=\sup_{i=1}^{m}\distthree_i$ 
        and applying Lemma \ref{lemma:supdist}.
	Finally we have:
    \begin{align*}
      \distone&=\sum_{i=1}^{m}\frac{1}{n}\distone_{k_{i}}\\
      &\leq \sum_{i=1}^{m}\frac{1}{n}(\frac{1}{2}\cdot \disttwo_{i}\cdot (\sumd{\distthree_i})+
        \frac{1}{2} \cdot\distthree_i\cdot(\sumd{\disttwo_{i}}))\\
      &=\sum_{i=1}^{m}\frac{1}{n}(\frac{1}{2}\cdot \disttwo_{i}\cdot(\sumd{\distthree_{i}}))+   
        \sum_{i=1}^{m}\frac{1}{n}(\frac{1}{2} \cdot\distthree_i\cdot(\sumd{\disttwo_{i}}))\\
      &\leq\sum_{i=1}^{m}\frac{1}{n}(\frac{1}{2}\cdot \disttwo_{i}\cdot(\sumd{\distthree}))+   
       \sum_{i=1}^{m}\frac{1}{n}(\frac{1}{2} \cdot\distthree\cdot(\sumd{\disttwo_{i}}))\\
      &= \frac{1}{2}\cdot \sum_{i=1}^{m}\frac{1}{n} \disttwo_{i} \cdot(\sumd{\distthree})+ 
       \frac{1}{2} \cdot\distthree\cdot ( \sum_{i=1}^{m}\frac{1}{n}(\sumd{ \disttwo_{i}}))\\
          &= \frac{1}{2}\cdot \sum_{i=1}^{m}\frac{1}{n} \disttwo_{i} \cdot(\sumd{\distthree})+ 
       \frac{1}{2} \cdot\distthree\cdot (\sumd{ ( \sum_{i=1}^{m}\frac{1}{n}\disttwo_{i}}))\\
        &=\frac{1}{2}\cdot \disttwo\cdot\sumd{\distthree}+\frac{1}{2}\cdot\distthree\cdot \sumd{\disttwo}.\\
    \end{align*}            
  \end{varitemize}          
  where for each $i\in\interv{1}{m}$, $\sumd{\distthree_i}\leq \sumd{\distthree}$ 
  holds because $\distthree$ is the least upper bound of the $\distthree_i$.           
  \item 
    Suppose $\termone=\ps{\valfive}{\termthree}$ ($\valfive\in\val$) and that $\termthree\rv{\mul{\termsix}}$. 
    Then $\mul{\termfive}=\ps{\valfive}{\mul{\termsix}}$. Similar to the previous case.
  \item 
    Suppose $\termone=\ps{\valone}{\valtwo}$ and that $\termone\rv{\valone,\valtwo}$. 
    In this case the subderivations which we are looking for are $\derone:\ssemv{\valone}{\distone}$ 
    and $\dertwo:\ssemv{\valtwo}{\disttwo}$.
\end{varitemize}
This concludes the proof.
\end{proof}

Now, suppose that $\ssemv{\termone}{\distone}$ and $\bsemv{\termone}{\disttwo}$.
Lemma~\ref{lemma:struct} provides all what is needed to ``unfold'' the hypothesis
$\ssemv{\termone}{\distone}$ and obtain judgments matching exactly those coming
from $\bsemv{\termone}{\disttwo}$. We easily get:
\begin{proposition}\label{prop:leqbs}
If $\ssemv{\termone}{\distone}$ and $\bsemv{\termone}{\disttwo}$,
then $\distone\leq\disttwo$.
\end{proposition}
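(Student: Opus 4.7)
The plan is to induct on the structure of the finite derivation $\derone : \ssemv{\termone}{\distone}$ and in each case use Lemma~\ref{lemma:struct} to extract subderivations whose shapes mirror exactly the premises that a big-step derivation for $\termone$ must have. The key observation enabling the inversion on the big-step side is that the coinductive interpretation $\gfp{\opinf{\isone}}$ is a \emph{fixed} point of $\opinf{\isone}$, not merely a post-fixed point; hence whenever $\bsemv{\termone}{\disttwo}$ holds, there is a one-step unfolding into premises that themselves belong to the coinductive interpretation. This lets me match big-step premises against the (structurally smaller) small-step judgments produced by Lemma~\ref{lemma:struct}.

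Concretely, I would argue by case analysis on the four alternatives of Lemma~\ref{lemma:struct}. Case~\ref{caseoneL} ($\distone=\emptyset$) is immediate since $\emptyset\leq\disttwo$. Case~\ref{casetwoL} ($\termone$ is a value $\valone$, $\distone=\{\valone^{1}\}$) reduces to observing that only rule $\bvv$ applies to a value, so any $\disttwo$ with $\bsemv{\valone}{\disttwo}$ must equal $\{\valone^{1}\}$. Case~\ref{casethreeL} ($\termone=\app{\termtwo}{\termthree}$) is the main content. From Lemma~\ref{lemma:struct} I get finite distributions $\disttwo',\distthree'$ with $\dertwo:\ssemv{\termtwo}{\disttwo'}$, $\derthree:\ssemv{\termthree}{\distthree'}$, and $\derfour_{\termfour,\valone}:\ssemv{\subst{\termfour}{\varone}{\valone}}{\distfour_{\termfour,\valone}}$ for every $\abstr{\varone}{\termfour}\in\supp{\disttwo'}$ and $\valone\in\supp{\distthree'}$, all strictly smaller than $\derone$. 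Unfolding $\bsemv{\app{\termtwo}{\termthree}}{\disttwo}$ via rule $\bav$ gives distributions $\distfive,\distsix$ with $\bsemv{\termtwo}{\distfive}$, $\bsemv{\termthree}{\distsix}$, and a family $\{\bsemv{\subst{\termfour}{\varone}{\valone}}{\distseven_{\termfour,\valone}}\}$. The IH applied to the smaller small-step derivations gives $\disttwo'\leq\distfive$, $\distthree'\leq\distsix$, and $\distfour_{\termfour,\valone}\leq\distseven_{\termfour,\valone}$ for $\abstr{\varone}{\termfour}\in\supp{\disttwo'}\subseteq\supp{\distfive}$ and $\valone\in\supp{\distthree'}\subseteq\supp{\distsix}$ (the support inclusions are automatic from pointwise domination). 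Then
\[
\distone\leq\sum_{\substack{\abstr{\varone}{\termfour}\in\supp{\disttwo'}\\\valone\in\supp{\distthree'}}}\disttwo'(\abstr{\varone}{\termfour})\,\distthree'(\valone)\,\distfour_{\termfour,\valone}\leq\sum_{\substack{\abstr{\varone}{\termfour}\in\supp{\distfive}\\\valone\in\supp{\distsix}}}\distfive(\abstr{\varone}{\termfour})\,\distsix(\valone)\,\distseven_{\termfour,\valone}=\disttwo,
\]
where the second inequality uses pointwise monotonicity of the weighted sum (terms outside $\supp{\disttwo'}\times\supp{\distthree'}$ contribute nonnegatively, and inside those supports the three factors are dominated coordinatewise). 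Case~\ref{casefourL} ($\termone=\ps{\termtwo}{\termthree}$) is analogous, inverting rule $\bsv$ and using the fact that $\distone\leq\tfrac{1}{2}\disttwo'\cdot\sumd{\distthree'}+\tfrac{1}{2}\distthree'\cdot\sumd{\disttwo'}$ together with the IH; monotonicity of $\sumd{(\cdot)}$ gives the required domination by $\tfrac{1}{2}\distfive\cdot\sumd{\distsix}+\tfrac{1}{2}\distsix\cdot\sumd{\distfive}=\disttwo$.

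The main obstacle I anticipate is the application case: one has to keep track of the relationship between supports on the two sides and verify that the weighted-sum construction is monotone in all three arguments simultaneously. In particular, one must check that when a term $\subst{\termfour}{\varone}{\valone}$ appears on the small-step side it also appears on the big-step side (i.e., that $\supp{\disttwo'}\subseteq\supp{\distfive}$ and $\supp{\distthree'}\subseteq\supp{\distsix}$), which I get from the IH-supplied pointwise inequalities. Once this bookkeeping is in place, the three successive inequalities above close the induction.
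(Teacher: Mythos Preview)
Your proposal is correct and follows essentially the same approach as the paper: induction on the small-step derivation, case analysis via Lemma~\ref{lemma:struct}, inversion of the big-step judgment to obtain matching premises, application of the IH on the structurally smaller subderivations, and then the monotonicity chain for the weighted sum (including the support-inclusion bookkeeping you flag). Your explicit remark that the coinductive interpretation is a genuine fixed point---justifying the one-step inversion on the big-step side---is a useful clarification that the paper leaves implicit.
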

\begin{proof}
By induction on the structure of a proof for $\ssemv{\termone}{\distone}$,
applying Lemma~\ref{lemma:struct} and doing some case analysis based 
on its outcome:
\begin{varitemize}
\item
  If $\distone=\emdist$, then $\distone\leq\disttwo$ trivially.
\item
  If $\termone=\valone$ and $\distone=\{\valone^1\}$, then
  $\disttwo=\distone$, because the only rule for values in the
  big step semantic is
  $$
  \czrule
      {
        \bsemv{\valone}{\{\valone\}}
      }
      {\bvv}
  $$    
\item
  If $\termone$ is an application $\app{\termtwo}{\termthree}$
  and distributions $\distthree,\distfour,\distfive_{\termfour,\valone}$ and
  derivations $\dertwo,\derthree,\derfour_{\termfour,\valone}$ 
  exist as in
  Lemma~\ref{lemma:struct}, we can observe,
  by induction hypothesis, that there exist
  distributions $\distsix$ and $\distseven$
  such that $\bsemv{\termtwo}{\distsix}$,   $\bsemv{\termthree}{\distseven}$ where
  $\distsix\geq\distthree$ and $\distseven\geq\distfour$.
  Then we have that 
  $\supp{\distsix}\supseteq \supp{\distthree}$ 
  and $\supp{\distseven}\supseteq \supp{\distfour}$. 
  Now, for every $\abstr{\varone}{\termfour}\in \supp{\distthree}$ and $\valone\in \supp{\distfour}$, suppose
  $\disteight_{\termfour,\valone}$ is such that
  $\bsemv{\subst{\termfour}{\valone}{\varone}}{\disteight_{\termfour,\valone}}$.
  Again by induction hypothesis, we obtain that
  $\disteight_{\termfour,\valone}\geq\distfive_{\termfour,\valone}$.
  Finally:
  \begin{align*}
    \distone&\leq\sum_{\abstr{\varone}{\termfour}\in\supp{\distthree}, \valone\in\supp{\distfour}}
      \distthree(\abstr{\varone}{\termfour})\cdot\distfour(\valone)\cdot\distfive_{\termfour,\valone}\\
    &\leq \sum_{\abstr{\varone}{\termfour}\in\supp{\distsix}, \valone\in\supp{\distseven}}
      \distthree(\abstr{\varone}{\termfour})\cdot\distfour(\valone)\cdot\distfive_{\termfour,\valone}\\
    &\leq \sum_{\abstr{\varone}{\termfour}\in\supp{\distsix}, \valone\in\supp{\distseven}}
      \distsix(\abstr{\varone}{\termfour})\cdot\distseven(\valone)\cdot\disteight_{\termfour,\valone}\\
      &=\disttwo.
  \end{align*}
\item
If $\termone$ is $\ps{\termtwo}{\termthree}$, we can proceed exactly as in the previous case. 
In fact, there exist distributions $\distthree, \distfour$ and derivations $\dertwo, \derthree$ 
as in Lemma~\ref{lemma:struct}, and by induction hypothesis we can observe that for any 
distributions $\distsix$ and $\distseven$ such that $\bsemv{\termtwo}{\distsix}$ and 
$\bsemv{\termthree}{\distseven}$,  $\distsix\geq\distthree$ and $\distseven\geq\distfour$ hold.
  
We can take  $\distsix$ such that $\supp{\distsix}\supseteq \supp{\distthree}$ and 
$\distsix(\valone)\geq\distthree(\valone)$ for each $\valone\in\supp{\distthree}$, and we 
can take $\distseven$ such that $\supp{\distseven}\supseteq \supp{\distfour}$ and 
$\distseven(\valone)\geq\distfour(\valone)$ for each $\valone\in\supp{\distfour}$. Then we have:
\begin{align*}
  \distone&\leq\frac{1}{2}\cdot\distthree\cdot(\sum \distfour)+\frac{1}{2}\cdot\distfour\cdot(\sum \distthree)\\
  &\leq\frac{1}{2}\cdot\distsix\cdot(\sum \distseven)+\frac{1}{2}\cdot\distseven\cdot(\sum \distsix)\\
  &=\disttwo.
\end{align*}
\end{varitemize}
This concludes the proof.
\end{proof}

\subsubsection{Small-Step is in Big-Step}\label{sect:ssinbs}
Whenever $\ssemv{\termone}{\distone}$ and $\termone$ is not a value, one can always
``decompose'' $\distone$ and find some judgments about 
the immediate subterms of $\termone$. This is Lemma~\ref{lemma:struct}.
If we want to prove that \emph{the} small-step semantics $\Ssemv{\termone}$ of $\termone$
can be itself attributed to $\termone$ in the big-step case, we should somehow prove the
converse, namely that judgments about the immediate subterms of $\termone$ can be packaged
into an analogous judgment for $\termone$. 
\begin{lemma}\label{lemma:structreverse}
Let $\termone\in\LOP$ be any term. Then: 
\begin{varenumerate}
\item\label{case1sb1} 
  If $\termone$ is a value $\valone$ and $\ssemv{\termone}{\distone}$, then $\distone\leq\{\valone^1\}$.
\item\label{case2sb1} 
  If $\termone$ is an application $\app{\termtwo}{\termthree}$, 
  $\ssemv{\termtwo}{\disttwo}$, $\ssemv{\termthree}{\distthree}$, 
  $\{\ssemv{\subst{\termfour}{\valone}{\varone}}{\distfour_{\termfour,\valone}}\}_{\abstr{\varone}{\termfour}\in\supp{\disttwo}, \valone\in\supp{\distthree}}$ 
  then there exist a distribution $\distone$ such that $\ssemv{\termone}{\distone}$ and 
  $\distone\geq\sum_{\abstr{\varone}{\termfour}\in\supp{\disttwo}, 
  \valone\in\supp{\distthree}}\disttwo(\abstr{\varone}{\termfour})\cdot\distthree(\valone)\cdot\distfour_{\termfour, \valone}$.
\item \label{case3sb1} 
  If $\termone$ is a sum $\ps{\termtwo}{\termthree}$, $\ssemv{\termtwo}{\disttwo}$, 
  $\ssemv{\termthree}{\distthree}$ then there exist a distribution $\distone$ such that $\ssemv{\termone}{\distone}$ and 
  $\distone\geq\frac{1}{2}\cdot\disttwo\cdot\sumd{\distthree}+\frac{1}{2}\cdot\distthree\cdot\sumd{\disttwo}$.
\end{varenumerate}
\end{lemma}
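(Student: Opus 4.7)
The plan is to prove the three items by induction on the supplied derivations, with (\ref{case1sb1}) following directly from the shape of the small-step rules, and (\ref{case2sb1})--(\ref{case3sb1}) by simulating the big-step construction one small step at a time.

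Case (\ref{case1sb1}) is immediate: since $\termone = \valone$ is a value and no reduction $\rv$ fires on a value, the derivation for $\ssemv{\valone}{\distone}$ must end with rule $\sev$ (forcing $\distone = \emdist$) or rule $\svv$ (forcing $\distone = \{\valone^1\}$); in either case $\distone \leq \{\valone^1\}$.

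For case (\ref{case2sb1}), I would proceed by lexicographic induction on the sizes of $\dertwo : \ssemv{\termtwo}{\disttwo}$ and $\derthree : \ssemv{\termthree}{\distthree}$. When $\dertwo$ ends with $\sev$, the target bilinear sum has empty support, so $\distone = \emdist$ via $\sev$ suffices. When $\dertwo$ ends with $\smv$ via $\termtwo \rv \termsix_1, \ldots, \termsix_n$ and $\ssemv{\termsix_i}{\disttwo_i}$ (so $\disttwo = \frac{1}{n}\sum_i \disttwo_i$), the contextual rule gives $\app{\termtwo}{\termthree} \rv \app{\termsix_1}{\termthree}, \ldots, \app{\termsix_n}{\termthree}$; since $\supp{\disttwo_i} \subseteq \supp{\disttwo}$, the supplied substitution premises persist, and the inductive hypothesis applied to each pair $(\termsix_i, \termthree)$ assembles via rule $\smv$ to the desired judgement, with the bilinear lower bound preserved by linearity in $\disttwo_i$. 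When $\dertwo$ ends with $\svv$, so $\termtwo = \abstr{\varone}{\termfour}$ and $\disttwo = \{\termtwo^1\}$, a secondary induction on $\derthree$ is carried out: its $\sev$ and $\smv$ subcases mirror the preceding paragraph using the context rule for the argument position, while its $\svv$ subcase (so $\termthree = \valone$ is a value and $\distthree = \{\valone^1\}$) uses the redex reduction $\app{(\abstr{\varone}{\termfour})}{\valone} \rv \subst{\termfour}{\varone}{\valone}$ combined with the supplied premise $\ssemv{\subst{\termfour}{\varone}{\valone}}{\distfour_{\termfour,\valone}}$ under rule $\smv$, yielding $\distone = \distfour_{\termfour,\valone}$, which equals the then-singleton target sum.

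Case (\ref{case3sb1}) is entirely parallel: lexicographic induction on the pair of derivation sizes, using the contextual rules for $\oplus$ to push a reduction of one of the two subterms to the outside. The new base case is when both $\termtwo, \termthree$ are values; then $\ps{\valone}{\valtwo} \rv \valone, \valtwo$ and rule $\smv$ with the given premises for $\valone, \valtwo$ produces $\distone = \frac{1}{2}\disttwo + \frac{1}{2}\distthree$, which dominates $\frac{1}{2}\disttwo \cdot \sumd{\distthree} + \frac{1}{2}\distthree \cdot \sumd{\disttwo}$ because $\sumd{\disttwo}, \sumd{\distthree} \leq 1$. The main obstacle throughout is the arithmetic check, at each inductive step, that the bilinear lower bound survives replacement of one of its arguments by a $\frac{1}{n}$-weighted convex combination; this amounts to a routine distribution of $\frac{1}{n}\sum_i$ across the sums appearing in the target formula, mirroring the computation already carried out in the proof of Lemma~\ref{lemma:struct}.
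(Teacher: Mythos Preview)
Your proposal is correct and follows essentially the same approach as the paper: induction on the given small-step derivations, dispatching the $\sev$ case trivially, pushing a one-step reduction of the active subterm through via the contextual clauses of $\rv$ and rule $\smv$, and checking that the bilinear target is preserved by the $\frac{1}{n}$-weighted combination. You are somewhat more explicit than the paper about the lexicographic structure and the base cases where $\termtwo$ (and then $\termthree$) is already a value, but the argument is the same.
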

\begin{proof}
Let us prove the three statements separately:
\begin{varitemize}
\item
  If $\termone$ is a value $\valone$, then the only possible judgments involving $\termone$
  are $\ssemv{\termone}{\{\valone^1\}}$ and $\ssemv{\termone}{\emdist}$. The thesis trivially holds.
\item 
  If $\termone$ is an application $\app{\termtwo}{\termthree}$, then
  we prove statement~\ref{case2sb1} by induction on the derivations 
  $\ssemv{\termtwo}{\disttwo}$ and $\ssemv{\termthree}{\distthree}$. Let's distinguish some cases:
  \begin{varitemize}
  \item  
    If $\ssemv{\termtwo}{\emdist}$, then 
    $\sum_{}\disttwo(\abstr{\varone}{\termfour})\cdot\distthree({\valone})\cdot\distfour_{\termfour,\valone}=\emdist$. 
    We derive $\ssemv{\termone}{\emdist}$ by means of the rule $\sev$ and the thesis holds.
  \item 
    If $\ssemv{\termtwo}{\disttwo}\neq\emdist$, suppose that the least rule applied in the derivation is 
    $$\brule
    {
      \termtwo\rv\mul{\termsix}
    }
    {
      \ssemv{\termsix_i}{\disttwo_i}
    }
    {
      \ssemv{\termtwo}{\sum_{i=1}^n\frac{1}{n}\disttwo_i}
    }
    {}
    $$
    Note that $\disttwo=\sum_{i=1}^{n}\frac{1}{n}\disttwo_{i}$.
    It is possible to apply the induction hypothesis on every ${\app{\termsix_{i}}{\termthree}}$:
    if  $\ssemv{\termsix_{i}}{\disttwo_i}$, $\ssemv{\termthree}{\distthree}$ 
    and $\ssemv{\subst{\termfour}{\valone}{\varone}}{\distfour_{\termfour,\valone}$ 
    (for $\abstr{\varone}{\termfour}\in\supp{\disttwo_i}\subseteq\supp{\disttwo},\valone\in\supp{\distthree}}$) 
    follows that  $\ssemv{\app{\termsix_{i}}{\termthree}}{\distfive_i}$ 
    with $\distfive_i \geq\sum_{\abstr{\varone}{\termfour}\in\supp{\disttwo_i},\valone\in\supp{\distthree}}
    \disttwo_i(\abstr{\varone}{\termfour})\distthree(\valone)\distfour_{\termfour,\valone}$.
    We are able to construct a derivation  $\distone$ in the following way:
    $$
    \brule
        {
          \app{\termtwo}{\termthree}\rv{\app{\mul{\termsix}}{\termthree}}
        }
        {
          \ssemv{\app{\termsix_i}\termthree}{\distfive_i}
        }
        {
          \ssemv{\app{\termtwo}{\termthree}}{\sum_{i=1}^{n}\frac{1}{n}\distfive_i}
        }
        {}
        $$
    And finally we have:
   \begin{align*}
    \distone&=\sum_{i=1}^{n}\frac{1}{n}\distfive_{i}\geq \sum_{i=1}^{n}\frac{1}{n}
       \left( \sum_{\abstr{\varone}{\termfour}\in\supp{\disttwo_i},\valone\in{\supp{\distthree}}}
       \disttwo_i(\abstr{\varone}{\termfour})\distthree(\valone)\distfour_{\termfour,\valone}\right)\\
    &=\sum_{i=1}^{n}\frac{1}{n}
       \left( \sum_{\abstr{\varone}{\termfour}\in\supp{\disttwo},\valone\in{\supp{\distthree}}}
       \disttwo_i(\abstr{\varone}{\termfour})\distthree(\valone)\distfour_{\termfour,\valone}\right)\\
    &=\sum_{\abstr{\varone}{\termfour}\in\supp{\disttwo},\valone\in{\supp{\distthree_i}}}
       \left(\sum_{i=1}^{n}\frac{1}{n}\left(\disttwo_i(\abstr{\varone}{\termfour})\right)\distthree_i(\valone)
       \distfour_{\termfour,\valone}\right)\\
    &=\sum_{\abstr{\varone}{\termfour}\in\supp{\disttwo},\valone\in{\supp{\distthree}}}\disttwo
       (\abstr{\varone}{\termfour})\distthree(\valone)\distfour_{\termfour,\valone}
   \end{align*}
\item 
  Other cases follows can be handled similarly to the previous one.
\end{varitemize}
\item 
  $\termone$ is a sum $\ps{\termtwo}{\termthree}$. 
  We prove the result by induction on the derivations 
  $\ssemv{\termtwo}{\disttwo}$ and $\ssemv{\termthree}{\distthree}$.
  \begin{varitemize}
  \item 
    If $\ssemv{\termtwo}{\emdist}$, then $\frac{1}{2}\cdot\emdist\cdot\sumd{\distthree}+\frac{1}{2}
    \cdot\distthree\cdot\sumd{\emdist}=\emdist$. We derive $\ssemv{\termone}{\emdist}$ by means of 
    the rule $\sev$ and the thesis holds.
  \item 
    If $\termtwo$ is a value $\valone$, the only interesting case is the one in which 
    $\termthree$ is not a value and also $\ssemv{\termthree}{\distthree}$ with $\distthree\neq\emdist$ 
    (in fact if $\ssemv{\termthree}{\emdist}$ we are in the previous case and if $\termthree$ 
    is a value the proof is trivial).
    Then $M=\ps{\valone}{\termthree}$.
    Let us consider the derivation of the judgment $\ssemv{\termthree}{\distthree}$: 
    the last rule applied in the derivation is
    $$
    \brule
        {
          \termthree\rv\mul{\termsix}
        }
        {
          \ssemv{\termsix_i}{\distthree_i}
        }
        {
          \ssemv{\termthree}{\sum_{i=1}^n\frac{1}{n}\distthree_i}
        }
        {}
    $$
    where $\distthree=\sum_{i=1}^n\frac{1}{n}\distthree_i$.
    Observe that $\termone\rv{\ps{\valone}{\mul{\termsix}}}$.
    We can apply the induction hypothesis to each $\ssemv{\termthree}{\sum_{i=1}^n\frac{1}{n}\distthree_i}$ and
    to $\ssemv{\valone}{\{\valone^1\}}$, and we obtain
    $$
    \distone_i\geq\frac{1}{2}\cdot\{\valone^1\}\cdot\sumd{\distthree_i}+\frac{1}{2}\cdot\distthree_i.
    $$
    We are able to construct a derivation in the following way:
    $$
    \brule
        {
          \ps{\valone}{\termthree}\rv{\ps{\valone}{\mul{\termsix}}}
        }
        {
          \ssemv{\ps{\valone}{\termsix_i}}{\distone_i}
        }
        {
          \ssemv{\termone}{\sum_{i=1}^{n}\frac{1}{n}\distone_i}
        }
        {}
    $$
    Finally we have:
   \begin{align*}
    \sum_{i=1}^{n}\frac{1}{n}\distone_{i}&\geq \sum_{i=1}^{n}\frac{1}{n}\left(\frac{1}{2}
      \cdot\{\valone^1\}\cdot\sumd{\distthree_i}+\frac{1}{2}\cdot\distthree_i\right)\\
    &=\sum_{i=1}^{n}\frac{1}{n}\left(\frac{1}{2}\cdot\{\valone^1\}\cdot\sumd{\distthree_i}\right)+
      \sum_{i=1}^{n}\frac{1}{n}\left(\frac{1}{2}\cdot\distthree_i\right)\\
    &=\frac{1}{2}\cdot\{\valone^1\}\cdot \left(\sumd{\sum_{i=1}^{n}\frac{1}{n}\distthree_i}\right)+
      \frac{1}{2}\cdot\sum_{i=1}^{n}\frac{1}{n}\distthree_i\\
    &=\frac{1}{2}\cdot\{\valone^1\}\cdot \left(\sumd{\distthree}\right)+\frac{1}{2}\cdot\distthree\\
   \end{align*}
 \item
   If $\ssemv{\termtwo}{\disttwo}$, suppose that the least rule applied in the derivation is 
   $$
   \brule
    {
      \termtwo\rv\mul{\termsix}
    }
    {
      \ssemv{\termsix_i}{\disttwo_i}
    }
    {
      \ssemv{\termtwo}{\sum_{i=1}^n\frac{1}{n}\disttwo_i}
    }
    {}
   $$
    Note that $\disttwo=\sum_{i=1}^{n}\frac{1}{n}\disttwo_{i}$.  
    It is possible to apply the induction hypothesis on the single ${\ps{\termsix_{i}}{\termthree}}$; 
    then, if  $\ssemv{\termsix_{i}}{\disttwo_i}$, $\ssemv{\termthree}{\distthree}$  we have that
    $\ssemv{\ps{\termsix_{i}}{\termthree}}{\distone_i}$ with 
    $\distone_i \geq\frac{1}{2}\cdot\disttwo_i\cdot\sumd{\distthree_i}+ \frac{1}{2}\cdot\distthree_i\cdot\sumd{\disttwo_i}$.
    We  construct a derivation in the following way:
    $$
    \brule
        {
          \ps{\termtwo}{\termthree}\rv{\ps{\mul{\termsix}}{\termthree}}
        }
        {
          \ssemv{\ps{\termsix_i}{\termthree}}{\distone_i}
        }
        {
          \ssemv{\termone}{\sum_{i=1}^{n}\frac{1}{n}\distone_i}
        }
        {}
     $$
     Finally we have:
     \begin{align*}
       \sum_{i=1}^{n}\frac{1}{n}\distone_{i}&\geq \sum_{i=1}^{n}\frac{1}{n}\left(\frac{1}{2}
       \cdot\disttwo_i\cdot\sumd{\distthree_i}+\frac{1}{2}\cdot\distthree_i\cdot{\sumd{\disttwo_i}}\right)\\
       &= \sum_{i=1}^{n}\frac{1}{n}\left(\frac{1}{2}\cdot\disttwo_i\cdot\sumd{\distthree_i}\right)+
       \sum_{i=1}^{n}\frac{1}{n}\left(\frac{1}{2}\cdot\distthree_i\cdot{\sumd{\disttwo_i}}\right)\\
       &= \frac{1}{2}\cdot\sum_{i=1}^{n}\frac{1}{n}\left(\disttwo_i\cdot\sumd{\distthree_i}\right)+
       \frac{1}{2}\sum_{i=1}^{n}\frac{1}{n}\left(\distthree_i\cdot{\sumd{\disttwo_i}}\right)\\
       &= \frac{1}{2}\cdot\disttwo\cdot\sumd{\distthree}+\frac{1}{2}\cdot\distthree\cdot{\sumd{\disttwo}}\\
     \end{align*}
\item 
  Other cases are trivial.
\end{varitemize}
\end{varitemize}
This concludes the proof.
\end{proof}

Lemma~\ref{lemma:struct} and Lemma~\ref{lemma:structreverse} allows to prove that the
$\Ssemv{\cdot}$ commutes well with the various constructs of $\LOP$.
\begin{lemma}\label{lemma:sb2}
 For each term $\termone\in\LOP$, then
\begin{varenumerate}
\item\label{case1sb2} 
  If $\termone$ is a value $\valone$, then $\Ssemv{\termone}=\{\valone^1\}$;
\item\label{case2sb2} 
  If $\termone$ is an application $\app{\termtwo}{\termthree}$, then 
  $\Ssemv{\termone}=\sum_{\abstr{\varone}{\termfour}\in\supp{\Ssemv{\termtwo}}, \valone
  \in\supp{\Ssemv{\termthree}}}\Ssemv{\termtwo}(\abstr{\varone}{\termfour})\cdot
  \Ssemv{\termthree}(\valone)\cdot\Ssemv{\subst{\termfour}{\valone}{\varone}}$;
\item\label{case3sb2} 
  If $\termone$ is a sum $\ps{\termtwo}{\termthree}$, then
  $\Ssemv{\termone}=\frac{1}{2}\cdot\Ssemv{\termtwo}\cdot\sumd{\Ssemv{\termthree}}+
  \frac{1}{2}\cdot \Ssemv{\termthree}\cdot\sumd{\Ssemv{\termtwo}}$.
\end{varenumerate}
\end{lemma}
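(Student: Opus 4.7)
The proof of the three clauses proceeds by establishing both inequalities for each case, using Lemma~\ref{lemma:struct} for the $\leq$ direction and Lemma~\ref{lemma:structreverse} together with the directedness of Lemma~\ref{lemma:linord} for the $\geq$ direction. Clause~\ref{case1sb2} is immediate from the rules: only $\sev$ and $\svv$ apply to a value $\valone$, so $\Ssemvs{\valone}=\{\emdist,\{\valone^1\}\}$ and the supremum is $\{\valone^1\}$.

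For clause~\ref{case2sb2} with $\termone=\app{\termtwo}{\termthree}$, I first establish $\Ssemv{\termone}\leq$ RHS. Take any $\ssemv{\termone}{\distone}$. If $\distone=\emdist$ the bound is trivial; otherwise Lemma~\ref{lemma:struct} supplies distributions $\disttwo,\distthree,\distfour_{\termfour,\valone}$ with $\ssemv{\termtwo}{\disttwo}$, $\ssemv{\termthree}{\distthree}$, $\ssemv{\subst{\termfour}{\varone}{\valone}}{\distfour_{\termfour,\valone}}$, and $\distone$ bounded by the corresponding sum. Since $\disttwo\leq\Ssemv{\termtwo}$, $\distthree\leq\Ssemv{\termthree}$, and $\distfour_{\termfour,\valone}\leq\Ssemv{\subst{\termfour}{\varone}{\valone}}$ for each pair, monotonicity of the RHS expression in each component (plus the fact that enlarging the supports can only add nonnegative terms) yields the bound on $\distone$. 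Taking the supremum over all $\distone$ gives $\Ssemv{\termone}\leq$ RHS.

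For the reverse inequality, I approximate. Fix arbitrary $\ssemv{\termtwo}{\disttwo}$, $\ssemv{\termthree}{\distthree}$, and, for each $\abstr{\varone}{\termfour}\in\supp{\disttwo}$ and $\valone\in\supp{\distthree}$, some $\ssemv{\subst{\termfour}{\varone}{\valone}}{\distfour_{\termfour,\valone}}$. By Lemma~\ref{lemma:structreverse}\ref{case2sb1} there exists $\distone$ with $\ssemv{\termone}{\distone}$ and $\distone \geq \sum \disttwo(\abstr{\varone}{\termfour})\cdot\distthree(\valone)\cdot\distfour_{\termfour,\valone}$. Hence this sum is dominated by $\Ssemv{\termone}$. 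It remains to pass to the supremum in each argument. This is the \emph{main technical obstacle}: the three families (for $\termtwo$, $\termthree$, and the substitution instances) must be coordinated so that simultaneous suprema can be realized. Lemma~\ref{lemma:linord} provides exactly this coordination: given any finitely many judgments on the same term, a common refinement is derivable. So for any finite collection of approximations, I can pick a single choice of $\disttwo$, $\distthree$, $\distfour_{\termfour,\valone}$ above them all, still in $\Ssemvs{\cdot}$. Since the RHS is determined by its values on finite subsets of the product of the supports, passing to the supremum coordinatewise yields $\Ssemv{\termone}\geq$ RHS.

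Clause~\ref{case3sb2} follows the same two-step pattern with $\termone=\ps{\termtwo}{\termthree}$, using the sum parts of Lemmas~\ref{lemma:struct} and \ref{lemma:structreverse}. The $\leq$ direction uses case~\ref{casefourL} of Lemma~\ref{lemma:struct} and monotonicity of the bilinear expression $\tfrac{1}{2}\disttwo\cdot\sumd{\distthree}+\tfrac{1}{2}\distthree\cdot\sumd{\disttwo}$. The $\geq$ direction uses Lemma~\ref{lemma:structreverse}\ref{case3sb1} to lift each pair of approximations to an approximation of $\Ssemv{\termone}$, then invokes Lemma~\ref{lemma:linord} again to coordinate the suprema over $\disttwo$ and $\distthree$ independently. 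Here the coordination is simpler than in the application case since only two subterms are involved, and no indexed family of substitution instances appears.
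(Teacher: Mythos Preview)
Your proof is correct and follows essentially the same two-direction strategy as the paper: the $\leq$ direction via Lemma~\ref{lemma:struct}, the $\geq$ direction via Lemma~\ref{lemma:structreverse}. The one noteworthy difference is in how the passage to the supremum is organized. The paper isolates an auxiliary observation (called Fact~1 there): if $\termone\rv\mul{\termtwo}$ then $\sup_{\ssemv{\termone}{\distone}}\distone=\sup_{\ssemv{\termtwo_i}{\distone_i}}\sum_i\frac{1}{n}\distone_i$, and then appeals to this together with Lemmas~\ref{lemma:struct} and~\ref{lemma:structreverse} without spelling out the limiting argument. You instead work directly with the directedness guaranteed by Lemma~\ref{lemma:linord}, coordinating finitely many approximants for $\termtwo$, $\termthree$, and the substitution instances so that Lemma~\ref{lemma:structreverse} can be applied to a single common refinement, and then letting the finite approximations exhaust the (possibly infinite) supports. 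Your route is slightly more explicit about why the coordinatewise supremum is legitimate in the application case, which is exactly where the subtlety lies; the paper's Fact~1 encapsulates the same phenomenon at the level of one reduction step but leaves the connection to the big-step-style decomposition implicit. Either packaging works.
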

\begin{proof}
We will use the following fact throughout the proof:
\begin{fact}\label{fact:one}
If $\termone\rv\mul{\termtwo}$, then $\sup_{\ssemv{\termone}{\distone}}\distone=\sup_{\ssemv{\termtwo_i}{\distone_i}}(\sum_{i=1}^{n}\frac{1}{n}\distone_i)$.
\end{fact}
The inequalities above can be proved separately:
\begin{varitemize}
\item
  If $\termone$ is a value, the thesis follows by small step semantics rules. Indeed,
  $\ssemv{\termone}{\emdist}$. 
\item
  For the other cases, the 
  $\mathbf{(\leq)}$ 
  direction follows from Lemma~\ref{lemma:struct} and Fact~\ref{fact:one};
  $\geq$ direction follows from Lemma~\ref{lemma:structreverse}  and Fact~\ref{fact:one}.
\end{varitemize}
This concludes the proof.
\end{proof}

The fact $\Ssemv{\termone}$ can be assigned to $\termone$ in the big-step
semantics is an easy consequence of Lemma~\ref{lemma:sb2}:
\begin{proposition}\label{prop:supbsem}
$\bsemv{\termone}{\Ssemv{\termone}}$.
\end{proposition}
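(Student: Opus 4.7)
The plan is to exploit the coinductive nature of $\bsemv{}{}$ directly: by the coinduction principle, it suffices to exhibit a consistent set of judgments containing $\bsemv{\termone}{\Ssemv{\termone}}$ for every $\termone$.

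More precisely, I would define
$$
\sjone = \{\bsemv{\termone}{\Ssemv{\termone}} \mid \termone \in \LOP\}
$$
and show $\sjone \subseteq \opinf{\isone}(\sjone)$, where $\isone$ is the inference system of rules $\bvv, \bav, \bsv$. Fix $\termone$ arbitrary; I need to derive $\bsemv{\termone}{\Ssemv{\termone}}$ in one step from premises that also lie in $\sjone$. Here Lemma~\ref{lemma:sb2} does essentially all the work, because its three clauses are tailored exactly to match the three rules defining $\bsemv{}{}$. A short case analysis on the shape of $\termone$ gives:
\begin{varitemize}
\item If $\termone$ is a value $\valone$, clause~\ref{case1sb2} of Lemma~\ref{lemma:sb2} yields $\Ssemv{\termone} = \{\valone^1\}$, and rule $\bvv$ derives $\bsemv{\valone}{\{\valone^1\}}$ with no premises.
\item If $\termone = \app{\termtwo}{\termthree}$, clause~\ref{case2sb2} states exactly that $\Ssemv{\termone}$ equals the distribution produced by rule $\bav$ when its premises are instantiated with $\bsemv{\termtwo}{\Ssemv{\termtwo}}$, $\bsemv{\termthree}{\Ssemv{\termthree}}$ and, for every $\abstr{\varone}{\termfour} \in \supp{\Ssemv{\termtwo}}$ and $\valone \in \supp{\Ssemv{\termthree}}$, the judgment $\bsemv{\subst{\termfour}{\varone}{\valone}}{\Ssemv{\subst{\termfour}{\varone}{\valone}}}$. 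All of these premises are elements of $\sjone$.
\item If $\termone = \ps{\termtwo}{\termthree}$, clause~\ref{case3sb2} plays the same role with respect to rule $\bsv$, using $\bsemv{\termtwo}{\Ssemv{\termtwo}}, \bsemv{\termthree}{\Ssemv{\termthree}} \in \sjone$ as premises.
\end{varitemize}

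Hence $\sjone$ is $\opinf{\isone}$-consistent, and by the coinduction principle $\sjone \subseteq \gfp{\opinf{\isone}}$, i.e.\ $\bsemv{\termone}{\Ssemv{\termone}}$ holds for every $\termone$.

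There is really no hard step remaining at this stage: Lemma~\ref{lemma:sb2} already encapsulates the interaction between $\Ssemv{\cdot}$ and the syntactic constructs, and the coinductive principle turns that purely algebraic content into the desired big-step derivation. The only subtle point to double-check is that in the application case the family of premises indexed by $(\abstr{\varone}{\termfour}, \valone) \in \supp{\Ssemv{\termtwo}} \times \supp{\Ssemv{\termthree}}$ is exactly the family demanded by rule $\bav$; this is immediate once one reads the rule with the observation that enlarging the index set to all of $\LOP \times \val$ makes no difference, since $\Ssemv{\termtwo}$ and $\Ssemv{\termthree}$ vanish outside their supports.
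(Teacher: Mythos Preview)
Your proposal is correct and follows essentially the same approach as the paper: define $\sjone = \{\bsemv{\termone}{\Ssemv{\termone}} \mid \termone \in \LOP\}$, verify consistency by case analysis on the shape of $\termone$, and invoke Lemma~\ref{lemma:sb2} in each case to match the conclusion of the corresponding big-step rule. The paper's proof is the same argument, written out with the explicit rule instances.
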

\begin{proof} 
We will prove the thesis by coinduction:
We can prove that all judgments $\bsemv{\termone}{\Ssemv{\termone}}$
belong to the coinductive interpretation of the underlying formal system $\isone$ 
(in this case, the formal system is $\isone=\{{\bvv}, {\bav}, {\bsv}\}$).
To do that, we need to prove that the set $\sjone$ of all those judgments
is consistent, i.e. that $\sjone\subseteq\opinf{\isone}(\sjone)$.
This amounts to show that if $\juone\in\sjone$, then there is a derivation
for $\juone$ whose immediate premises are themselves in $\sjone$. Let's
distinguish some cases:
\begin{varitemize}
\item 
  If $\termone=\valone$ then $\bsemv{\valone}{\{V^{1}\}}$ by $\mathsf{bv_v}$ rule, 
  and $\{V^{1}\}=\Ssemv{\valone}$ because of Lemma~\ref{lemma:sb2}.
\item 
  If $\termone$ is an application $\app{\termtwo}{\termthree}$, take the 
  judgment $\juone_1=\bsemv{\termtwo}{\Ssemv{\termtwo}}$, 
  $\juone_2=\bsemv{\termthree}{\Ssemv{\termthree}}$ and the family of judgments 
  $\{\bsemv{\{\subst{\termfour}{\valone}{\varone}\}}
  {\Ssemv{\{\subst{\termfour}{\valone}{\varone}\}}}\}_{\abstr{\varone}{\termfour}\in\supp{\Ssemv{\termtwo}}, \valone\in\supp{\Ssemv{\termthree}}}$: 
  we will prove that the judgment $\bsemv{\app{\termtwo}{\termthree}}{\Ssemv{\app{\termtwo}{\termthree}}}$ can be derived in a 
  single step from $\juone_1$, $\juone_2$ and those in the family above
  by means of $\mathsf{ba_v}$ rule.
  Simply observe that
  $$
  \trule
      {\bsemv{\termtwo}{\Ssemv{\termtwo}}}
      {\bsemv{\termthree}{\Ssemv{\termthree}}}
      {\{\bsemv{\{\subst{\termfour}{\valone}{\varone}\}}
  {\Ssemv{\{\subst{\termfour}{\valone}{\varone}\}}}\}_{\abstr{\varone}{\termfour}\in\supp{\Ssemv{\termtwo}}, \valone\in\supp{\Ssemv{\termthree}}}}
      {\bsemv{\app{\termtwo}{\termthree}}{\sum_{\abstr{\varone}{\termfour}\in\supp{\Ssemv{\termtwo}}, \valone\in\supp{\Ssemv{\termthree}}}\Ssemv{\termtwo}(\abstr{\varone}{\termfour})\cdot\Ssemv{\termthree}(\valone)\cdot\Ssemv{\subst{\termfour}{\valone}{\varone}}}}
      {{\bav}}
  $$ 
  The thesis follows applying Lemma~\ref{lemma:sb2}.
\item 
  If $\termone$ is a sum $\ps{\termtwo}{\termthree}$, take the judgment $c_1=\bsemv{\termtwo}{\Ssemv{\termtwo}}$ 
  and $\juone_2=\bsemv{\termone}{\Ssemv{\termthree}}$: we will prove that the judgment 
  $\bsemv{\ps{\termtwo}{\termthree}}{\Ssemv{\app{\termtwo}{\termthree}}}$ can be inferred in a single step from 
  $\juone_1$ and $\juone_2$ by means of $\mathsf{bs_v}$ rules.
  Clearly, $\juone_1$ and $\juone_2$ belong to $\sjone$. Moreover
  $$
  \brule
      {\bsemv{\termtwo}{\Ssemv{\termtwo}}}
      {\bsemv{\termthree}{\Ssemv{\termthree}}}
      {\bsemv{\ps{\termtwo}{\termthree}}{\frac{1}{2}\cdot\Ssemv{\termtwo}\cdot\sumd{\Ssemv{\termthree}}}+\frac{1}{2}\cdot\Ssemv{\termthree}\cdot\sumd{\Ssemv{\termtwo}}}
      {{\bsv}}
  $$ 
  and by  Lemma~\ref{lemma:sb2}, case~\ref{case3sb2} we obtain the thesis. 
\end{varitemize}
This concludes the proof.
\end{proof}

The equality between big-step and small-step semantics is a corollary
of Proposition~\ref{prop:leqbs} and Proposition~\ref{prop:supbsem}.
\begin{theorem}\label{th:cbvSB}
$\Bsemv{\termone}=\Ssemv{\termone}$.
\end{theorem}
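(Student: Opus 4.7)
The plan is to prove the equality by a double inequality, leveraging the two propositions just established, which together characterize exactly where $\Ssemv{\termone}$ sits relative to the set $\Bsemvs{\termone}$.

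First, I would argue the $\leq$ direction by combining Proposition~\ref{prop:leqbs} with the suprema and infima defining the two semantics. Fix any $\disttwo$ with $\bsemv{\termone}{\disttwo}$. Proposition~\ref{prop:leqbs} tells us that $\distone\leq\disttwo$ for every $\distone$ with $\ssemv{\termone}{\distone}$, so $\disttwo$ is an upper bound on $\Ssemvs{\termone}$. Taking the supremum over $\Ssemvs{\termone}$ yields $\Ssemv{\termone}\leq\disttwo$. Since this holds for every such $\disttwo$, we may take the infimum over $\Bsemvs{\termone}$ to obtain $\Ssemv{\termone}\leq\Bsemv{\termone}$.

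For the reverse direction, Proposition~\ref{prop:supbsem} gives $\bsemv{\termone}{\Ssemv{\termone}}$, which means $\Ssemv{\termone}\in\Bsemvs{\termone}$. Since $\Bsemv{\termone}$ is defined as the infimum of $\Bsemvs{\termone}$, it is in particular a lower bound for any element of this set, so $\Bsemv{\termone}\leq\Ssemv{\termone}$. Combining both inequalities yields the claim.

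There is no real obstacle at this stage: the work has already been done in Propositions~\ref{prop:leqbs} and~\ref{prop:supbsem} (and further back, in Lemmas~\ref{lemma:struct}, \ref{lemma:structreverse}, and~\ref{lemma:sb2}). The only point to be careful about is that one must not confuse suprema in $\Ssemvs{\termone}$ with infima in $\Bsemvs{\termone}$; the two directions play dual roles, with Proposition~\ref{prop:leqbs} providing comparability between elements of the two sets and Proposition~\ref{prop:supbsem} providing that the specific distribution $\Ssemv{\termone}$ actually lies in $\Bsemvs{\termone}$. The entire proof should therefore be just a few lines of chaining inequalities.
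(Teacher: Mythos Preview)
Your proposal is correct and follows exactly the approach the paper itself sketches: the paper states that the theorem is a corollary of Proposition~\ref{prop:leqbs} and Proposition~\ref{prop:supbsem}, and the architecture described at the start of Section~\ref{sec:cbvSvsB} matches your two-step argument (domination gives $\Ssemv{\termone}\leq\Bsemv{\termone}$, membership gives $\Bsemv{\termone}\leq\Ssemv{\termone}$). Your write-up simply makes explicit the sup/inf manipulations that the paper leaves implicit.
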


\section{Call-by-Name}\label{sec:cbn}
In Section~\ref{sec:cbv} we endowed $\LOP$ with a \emph{call-by-value} probabilistic operational semantics and showed
that the distribution assigned to any term $\termone$ is the same in big-step and in small-step semantics, independently
on whether they are defined inductively or coinductively. Actually, the same holds in call-by-name: both big-step and 
small-step semantics can be defined (co)inducively and proved equivalent, following the same path used in call-by-value. 
In this section, we briefly sketch how this can be done, by defining
$\Bsemn{\termone}$ and $\Ssemn{\termone}$ and by proving they are equivalent.

\begin{definition}[Call-by-name Reduction]\label{def:leftRedN}
Leftmost reduction $\rn$ is the least binary relation between $\LOP$ and $\LOP^*$
such that:
\begin{align*}
(\abstr{\varone}{\termone}){\termtwo}&\rn\subst{\termone}{\varone}{\termtwo}&
\termone\termtwo&\rn\mul{\termthree}\termtwo\qquad\mbox{if $\termone\rn\mul{\termthree}$}\\
\ps{\termone}{\termtwo}&\rn\termone,\termtwo & &
\end{align*}
\end{definition}
Note that, contrarily to call-by-value, in call-by-name it \emph{is} possible to perform a choice between terms which
are not values. 

\subsection{Small-step Semantics} 
As for call-by-name, we model separately terminating and non-terminating computations. The rule 
schema is the same, up to the different reduction relation $\rn$. First of all, there is an inductively defined 
binary relation $\ssemn{}{}$ between $\LOP$ and distributions. Rules are as
follows:
$$
\zrule
    {
      \ssemn{\termone}{\emdist}
    }
    {{\sen}}
\qquad
\zrule
    {
      \ssemn{\valone}{\{\valone^1\}}
    }
    {{\svn}}
\qquad
\brule
    {
      \termone\rn\mul{\termtwo}
    }
    {
      \ssemn{\termtwo_i}{\distone_i}
    }
    {
      \ssemn{\termone}{\sum_{i=1}^n\frac{1}{n}\distone_i}
    }
    {{\smn}}
$$
$\Ssemn{\termone}$ is the distribution $\sup_{\ssemn{\termone}{\distone}}\distone$. Moreover,
there is also another, coinductively defined binary relation between 
$\LOP$ and $\RRp{0}{1}$ capturing divergence. Rules are as follows:
$$
\czrule
    {
      \sdsemn{\valone}{0}
    }
    {{\dvn}}
\qquad
\cbrule
    {
      \termone\rn\mul{\termtwo}
    }
    {
    \{  \sdsemn{\termtwo_i}{\probone_i}\}_{i\in\interv{1}{n}}
    }
    {
      \sdsemn{\termone}{\sum_{i=1}^n\frac{1}{n}\probone_i}
    }
    {{\dmn}}
$$
$\Sdsemn{\termone}$ is nothing more than $\sup_{\sdsemn{\termone}{\probone}}\probone$.
Notice how the differences between call-by-value and call-by-name small-step semantics all come from
the reduction relation, since the rules above are analogous to their call-by-value siblings.

As done in Section~\ref{sec:cobsv} for call-by-name,  a conductive version of call-by-name small step semantics can be easily defined.

\subsection{Big-step Semantics.}
We define call-by-name big-step semantics of terms in $\LOP$ as the co-inductive interpretation of a suitable set of rules.
Again, this allow us to capture infinite computations. A coinductively defined binary relation $\bsemn{}{}$ between 
$\LOP$ and distributions is obtained by taking all instances of the following rules:
$$
\czrule
    {
      \bsemn{\valone}{\{\valone\}}
    }
    {\bvn}
\qquad
\cbrule
    { 
      \bsemn{\termone}{\distone}
    }
    {
     \{ \bsemn{\subst{\termfour}{\termtwo}{\varone}}{\disttwo_{\termfour,\termtwo}}\}_{\abstr{\varone}{\termfour}\in\supp{\distone}}
    }
    {
      \bsemn{\app{\termone}{\termtwo}}{\sum_{\abstr{\varone}{\termfour}\in\supp{\distone}}\distone(\abstr{\varone}{\termfour})\cdot\disttwo_{\termfour,\termtwo}}
    }
    {{\ban}
    }
\qquad
\cbrule
    {
      \bsemn{\termone}{\distone}
    }
    {
      \bsemn{\termtwo}{\disttwo}
    }
    {
      \bsemn{\ps{\termone}{\termtwo}}{\frac{1}{2}\cdot\distone+\frac{1}{2}\cdot\disttwo}
    }
    {{\bsn}
    }
$$
$\Bsemn{\termone}$ is simply the subdistribution $\inf_{\bsemn{\termone}{\distone}}\distone$. The way
binary choices are managed reflects the reduction rules, which allow to evaluate a binary choice
to one of its components even if the latter are not values. Indeed, while in call-by-value 
normalization factors $\sumd{\distone}$ and $\sumd{\disttwo}$ were necessary, they are not here
anymore.

\begin{example}
Consider the term $\termone=\ps{\Omega}{(\abstr{\varone}{\varone})}$. Recall from 
Example~\ref{exa:omega}, that $\bsemv{\Omega}{\distone}$ for every $\distone$.
Moreover, $\bsemv{\abstr{\varone}{\varone}}{\{(\abstr{\varone}{\varone})^1\}}$.
This implies $\bsemv{\termone}{\emdist}$, and, as a consequence, that
$\Bsemv{\termone}=\emdist$. The same behavior cannot be mimicked in 
call-by-name. Indeed, while $\bsemn{\Omega}{\distone}$ (for every $\distone$) and
$\bsemn{\abstr{\varone}{\varone}}{\{(\abstr{\varone}{\varone})^1\}}$, the
smallest distribution which can be assigned to $\termone$ is
$\Bsemn{\termone}=\{(\abstr{\varone}{\varone})^{\frac{1}{2}}\}$.
\end{example}

Inductive call-by-name big step semantics can be obtained adding the rule which assigns the 
empty distribution to any term and taking the inductive interpretation of the system.

\subsection{Comparing the Different Notions}
We retrace here results of Section \ref{sec:cbvDvsC} and Section~\ref{sec:cbvSvsB} for call-by-name semantics.

\subsubsection{Divergence and Convergence in CbN Small-step Semantics}\label{sec:cbnDvsC}

Convergence and divergence call-by-name small step semantics are proved equivalent by means of the following results.

\begin{proposition}
If $\sdsemn{\termone}{\probone}$ and $\ssemn{\termone}{\distone}$, then 
$\probone+\sumd{\distone}\leq 1$.
\end{proposition}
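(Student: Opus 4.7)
The plan is to mirror the structure of the call-by-value analogue (Proposition~\ref{prop:limitcbv}) and proceed by induction on the (finite!) derivation of $\ssemn{\termone}{\distone}$, exploiting the fact that the call-by-name rule schemes for convergence and divergence are syntactically parallel.

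The three cases correspond to the three small-step rules. If the last rule is $\sen$, then $\distone=\emdist$, so $\sumd{\distone}=0$ and the inequality reduces to $\probone\leq 1$, which holds because $\probone\in\RRp{0}{1}$ by definition of $\sdsemn{}{}$. If the last rule is $\svn$, then $\termone$ is a value $\valone$ and $\sumd{\distone}=1$; here I need the fact that any judgment $\sdsemn{\valone}{\probone}$ in the coinductive interpretation forces $\probone=0$. This is a small extra observation: any consistent $\sjone\ni\sdsemn{\valone}{\probone}$ must have this judgment lie in $\opinf{\isone}(\sjone)$, and since $\valone$ does not $\rn$-reduce, rule $\dmn$ cannot be applied, leaving only $\dvn$, which forces $\probone=0$.

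The interesting case is when the derivation ends with $\smn$, so $\termone\rn\mul{\termtwo}$ with $\mul{\termtwo}=\termtwo_1,\ldots,\termtwo_n$, the premises are $\ssemn{\termtwo_i}{\distone_i}$, and $\distone=\sum_{i=1}^n\frac{1}{n}\distone_i$. Since $\termone$ is not a value, any consistent set witnessing $\sdsemn{\termone}{\probone}$ must derive it in one step via $\dmn$; because $\rn$ is deterministic (the leftmost sequence $\mul{\termtwo}$ is unique), the only possible decomposition gives $\sdsemn{\termtwo_i}{\probone_i}$ for some probabilities $\probone_i$ with $\probone=\sum_{i=1}^n\frac{1}{n}\probone_i$. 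Applying the induction hypothesis to each pair $(\ssemn{\termtwo_i}{\distone_i},\sdsemn{\termtwo_i}{\probone_i})$ yields $\probone_i+\sumd{\distone_i}\leq 1$, whence
\[
\probone+\sumd{\distone}=\sum_{i=1}^n\frac{1}{n}\probone_i+\sum_{i=1}^n\frac{1}{n}\sumd{\distone_i}=\sum_{i=1}^n\frac{1}{n}\bigl(\probone_i+\sumd{\distone_i}\bigr)\leq 1.
\]

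The main obstacle is the subtle interplay between the inductive hypothesis (which ranges over finite derivations of $\ssemn{}{}$) and the coinductively defined $\sdsemn{}{}$: we do not have a derivation of $\sdsemn{\termone}{\probone}$ to induct on, only membership in the greatest fixed point. The key observation that unlocks the proof is that for any fixed $\termone$, consistency of a witnessing set forces exactly one type of "immediate justification" for $\sdsemn{\termone}{\probone}$ (either $\dvn$ if $\termone$ is a value or $\dmn$ via the unique $\rn$-reduct), so that we can always match the shape of the divergence judgment to the shape dictated by the convergence derivation we are inducting on.
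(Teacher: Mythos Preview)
Your proposal is correct and follows essentially the same approach as the paper: induction on the structure of the derivation of $\ssemn{\termone}{\distone}$, case-splitting on the last rule applied, exactly as in the call-by-value analogue (Proposition~\ref{prop:limitcbv}). Your explicit justification that the coinductive judgment $\sdsemn{\termone}{\probone}$ must decompose via $\dvn$ (for values) or via $\dmn$ along the unique $\rn$-reduct (for non-values) is a welcome clarification that the paper's proof leaves implicit.
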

\begin{proof}
The proof goes by induction on the structure of the derivation
for $\ssemn{\termone}{\distone}$.\end{proof}

\begin{theorem}
For every term $\termone$, $\sdsemn{\termone}{1-\sumd{\Ssemn{\termone}}}$.
\end{theorem}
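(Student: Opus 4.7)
The plan is to mimic the call-by-value argument of Theorem~\ref{th:cbvconvdiv}, simply replacing the reduction relation $\rv$ with $\rn$ and the associated rules $\dvv, \dmv$ with their call-by-name counterparts $\dvn, \dmn$. Concretely, I would apply the coinduction principle of Section~\ref{sec:coind} to the set of judgments
$$
\sjone \;=\; \{\,\sdsemn{\termone}{1-\sumd{\Ssemn{\termone}}} \;\mid\; \termone\in\LOP\,\},
$$
and show that $\sjone\subseteq\opinf{\isone}(\sjone)$, where $\isone$ is the inference system built from $\dvn$ and $\dmn$. Once this consistency is in place, the coinduction principle immediately yields $\sjone\subseteq\gfp{\opinf{\isone}}$, which is the desired statement.

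First I would split on whether $\termone$ is a value. If $\termone=\valone$, then $\Ssemn{\termone}=\{\valone^1\}$ by the call-by-name analogue of Lemma~\ref{lemma:sb2}, so $1-\sumd{\Ssemn{\termone}}=0$, and the judgment $\sdsemn{\valone}{0}$ is derivable by rule $\dvn$ with no premises; hence it trivially lies in $\opinf{\isone}(\sjone)$. If $\termone$ is not a value, then $\termone\rn\mul{\termtwo}$ with $\mul{\termtwo}=\termtwo_1,\ldots,\termtwo_n$; the judgments $\sdsemn{\termtwo_i}{1-\sumd{\Ssemn{\termtwo_i}}}$ all belong to $\sjone$ by construction, and an application of $\dmn$ produces
$$
\sdsemn{\termone}{\textstyle\sum_{i=1}^n\frac{1}{n}\bigl(1-\sumd{\Ssemn{\termtwo_i}}\bigr)}.
$$

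The main technical step, then, is the identity
$$
\sumd{\Ssemn{\termone}} \;=\; \sum_{i=1}^n \frac{1}{n}\,\sumd{\Ssemn{\termtwo_i}},
$$
which guarantees that the conclusion produced by $\dmn$ is precisely $\sdsemn{\termone}{1-\sumd{\Ssemn{\termone}}}$, completing the consistency check. I would derive this exactly as in the call-by-value proof: unfolding the definition $\Ssemn{\termone}=\sup_{\ssemn{\termone}{\distone}}\distone$, using the call-by-name version of Fact~\ref{fact:one} to exchange the $\sup$ with the convex combination $\sum_{i=1}^n\frac{1}{n}(\cdot)$, and then observing that $\sumd{\cdot}$ is a complete-lattice homomorphism so it commutes with both the supremum and the finite sum.

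The only real obstacle is ensuring that the call-by-name analogues of Fact~\ref{fact:one} and Lemma~\ref{lemma:sb2} actually go through; but the inference rules $\sen,\svn,\smn$ are structurally identical to their call-by-value counterparts, differing only in their reliance on $\rn$ instead of $\rv$, and none of those structural arguments used anything specific to call-by-value (in particular, the ``restriction to values'' in the CbV reduction is not exploited in the proof of Theorem~\ref{th:cbvconvdiv}). Hence the same reasoning transfers verbatim, and no new ingredients are required beyond the (mechanical) rerun of the earlier argument in the call-by-name setting.
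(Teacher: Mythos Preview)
Your proposal is correct and matches the paper's approach exactly: the paper's own proof is a one-line ``By coinduction: it is possible to prove that all judgments $\sdsemn{\termone}{1-\sumd{\Ssemn{\termone}}}$ belong to the coinductive interpretation of the underlying formal system $\isone$,'' which is precisely the rerun of the call-by-value argument you spell out in detail.
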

\begin{proof} By coinduction: it is possible to  prove that all judgments $\sdsemn{\termone}{1-\sumd{\Ssemn{\termone}}}$
belong to the coinductive interpretation of the underlying formal system $\isone$.
\end{proof}

\subsubsection{Small-step vs. Big-step}\label{sec:cbnSvsB}

We prove here the equivalence between inductive call-by-name small step semantics $\Ssemn{\termone}$ 
and coinductive call-by-name big step semantics $\Bsemn{\termone}$.
We use the same proof technique of Section~\ref{sec:cbvDvsC}.

Given any term $\termone$, the sets of distributions
sets of probability distributions which can be attributed to $\termone$
in call-by-name small-step semantics and call-by-name big-step semantics, are respectively defined as 
$\Ssemns{\termone}=\{\distone\mid\ssemn{\termone}{\distone}\}$ and 
$ \Bsemns{\termone}=\{\distone\mid\bsemn{\termone}{\distone}\}$.
Firstly we prove that big-step semantics \emph{dominates} small-step
semantics (Proposition~\ref{prop:leqbsn}), 
and this imply that  $\Bsemn{\termone}\geq\Ssemn{\termone}$. Successively, we prove again that small step semantics can itself be derived using
big-step semantics, namely that $\Ssemn{\termone}\in\Bsemns{\termone}$ (Proposition~\ref{prop:supbsemn}). 
As a consequence,  $\Bsemn{\termone}\leq\Ssemn{\termone}$ and finally $\Ssemns{\termone}=\Bsemns{\termone}$. 

\begin{lemma}\label{lemma:supdistn}
  If $I$ is finite and $\derone_i:\ssemn{\termone}{\distone_i}$ for every $i\in I$, then
  there is $\dertwo:\ssemn{\termone}{\sup_{i\in I}\distone_i}$. Moreover
  $\dertwo\leqder\derone_i$ for some $i\in I$.
\end{lemma}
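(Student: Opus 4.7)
The proof mirrors the call-by-value analogue Lemma~\ref{lemma:supdist} almost verbatim: only the reduction relation changes ($\rv$ becomes $\rn$), and this is harmless because $\rn$, like $\rv$, is deterministic on closed non-value terms -- inspection of the rules of Definition~\ref{def:leftRedN} shows that at each stage a unique redex is exposed to the leftmost strategy. I would proceed by induction on the multiset of sizes of the derivations in the finite family $\{\derone_i\}_{i\in I}$.

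The first move is to discard any $\derone_i$ concluding $\ssemn{\termone}{\emdist}$, since they contribute nothing to $\sup_{i\in I}\distone_i$. If the pruned family is empty, take $\dertwo$ to be a single instance of $\sen$; if $\termone$ is a value $\valone$, every remaining $\derone_i$ is forced to be an instance of $\svn$ concluding $\{\valone^1\}$, and $\dertwo$ is that same axiom. The size clause is immediate in both base cases.

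For the inductive case, $\termone$ is a non-value and, by determinism of $\rn$, every remaining $\derone_i$ ends with an $\smn$ step based on the unique reduction $\termone\rn\termtwo_1,\ldots,\termtwo_n$. Writing $\derone_i^j:\ssemn{\termtwo_j}{\distone_i^j}$ for the immediate sub-derivations, so that $\distone_i = \sum_{j=1}^n\frac{1}{n}\distone_i^j$, the induction hypothesis applied to each family $\{\derone_i^j\}_{i\in I}$ supplies $\dertwo^j:\ssemn{\termtwo_j}{\sup_i \distone_i^j}$ together with a witness $k_j\in I$ satisfying $\dertwo^j\leqder\derone_{k_j}^j$. One further $\smn$ step combines these into $\dertwo$, whose conclusion $\sum_j\frac{1}{n}\sup_i \distone_i^j$ coincides with $\sup_i\distone_i$.

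The main obstacle is the size clause rather than the distributional one: the witnesses $k_j$ furnished by the inductive hypothesis for different components $j$ need not coincide, so one cannot immediately read off a single $i\in I$ for which $\dertwo\leqder\derone_i$. The resolution is to fix $i^{*}\in I$ maximising $|\derone_{i^{*}}|$ and then invoke the inductive hypothesis in a way that bounds each $|\dertwo^j|$ by $|\derone_{i^{*}}^j|$, so that $|\dertwo| = 1 + \sum_j |\dertwo^j| \leq 1 + \sum_j |\derone_{i^{*}}^j| = |\derone_{i^{*}}|$. This reconciliation of component-wise sub-derivations with a single global witness is precisely why, as the call-by-value proof remarks parenthetically, the induction must be carried out on the entire family of derivations simultaneously rather than on a single derivation.
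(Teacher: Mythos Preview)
Your inductive step has a genuine gap: the identity you rely on, $\sum_{j}\frac{1}{n}\sup_{i}\distone_i^j = \sup_{i}\sum_{j}\frac{1}{n}\distone_i^j$, is false in general --- only the inequality $\geq$ holds. A minimal counterexample is $\termone=\ps{\valone}{\valone}$ with $I=\{1,2\}$: let $\derone_1$ apply $\svn$ to the left copy and $\sen$ to the right (so $\distone_1^1=\{\valone^1\}$, $\distone_1^2=\emdist$, $\distone_1=\{\valone^{1/2}\}$), and let $\derone_2$ do the reverse. Then $\sup_i\distone_i=\{\valone^{1/2}\}$, yet your construction outputs $\frac{1}{2}\{\valone^1\}+\frac{1}{2}\{\valone^1\}=\{\valone^1\}$, so the $\dertwo$ you build does not derive the claimed supremum but something strictly larger. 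Your size argument is also incomplete: the inductive hypothesis hands you witnesses $k_j$ that may differ across $j$, and simply ``fixing $i^*$ maximising $|\derone_{i^*}|$'' does not force $|\dertwo^j|\leq|\derone_{i^*}^j|$, since the IH gives no control over which index is the witness.

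The paper's own proof is only the one-line remark ``induction on the structure of the proofs in the family'', so it does not address these points either. In fact the size clause appears to be false as written: for $\termone=\ps{(\ps{\valone_1}{\valone_2})}{(\ps{\valone_3}{\valone_4})}$ with four distinct values, let $\derone_1$ fully explore the left branch (size $5$, deriving $\{\valone_1^{1/4},\valone_2^{1/4}\}$) and $\derone_2$ fully explore the right (size $5$, deriving $\{\valone_3^{1/4},\valone_4^{1/4}\}$); any derivation of their pointwise supremum must reach all four leaves and hence has size at least $7$. What the downstream uses (in Lemma~\ref{lemma:structn}) actually need is only some derivable $\distthree\geq\distone_i$ for all $i$ with size control relative to the surrounding derivation, and your componentwise-sup construction does deliver \emph{that} weaker conclusion --- but not the lemma as stated.
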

\begin{proof}
The proof goes by induction on the structure of the proofs
in the family $\{\derone_i\}_{i\in I}$ (which can actually be
done, since $I$ is finite).
\end{proof}

\begin{lemma}\label{lemma:structn}
If $\ssemn{\termone}{\distone}$, then at least
one of the following conditions hold:
\begin{varenumerate}
\item\label{caseoneN}
  $\distone=\emptyset$;
\item\label{casetwoN}
  $\termone$ is a value $\valone$ and $\distone=\{\valone^1\}$;
\item\label{casethreeN}
  $\termone$ is an application $\app{\termtwo}{\termthree}$ and
  there is a (finite) distribution $\disttwo$
 and 
  for every $\abstr{\varone}{\termfour}\in\supp{\disttwo} $ a distribution
  $\distfour_{\termfour,\termthree}$ 
  such that:
  \begin{varenumerate}
  \item\label{case3oneN}
    $\dertwo:\ssemn{\termtwo}{\disttwo}$
    and $\derfour_{\termfour, \termthree}:\ssemn{\subst{\termfour}{\varone}{\termthree}}{\distfour_{\termfour, \termthree}}$;
  \item\label{case3twoN}
    $\dertwo,\derfour_{\termfour, \termthree}\ltder\derone$;
  \item\label{case3threeN}
    $\distone\leq\sum_{\abstr{\varone}{\termfour}\in\supp{\disttwo}}\disttwo(\abstr{\varone}{\termfour})\cdot\distfour_{\termfour, \termthree}$.
  \end{varenumerate}
\item\label{casefourN}
  $\termone$ is a sum $\ps{\termtwo}{\termthree}$ and there are (finite) distributions
  $\disttwo$ and
  $\distthree$ such that:
  \begin{varenumerate}
  \item\label{case4oneN}
    $\dertwo:\ssemn{\termtwo}{\disttwo}$ and $\derthree:\ssemn{\termthree}{\distthree}$;
  \item\label{case4twoN}
    $\dertwo,\derthree\ltder\derone$;
  \item\label{case4threeN}
    $\distone\leq \frac{1}{2}\cdot\disttwo\cdot+
    \frac{1}{2}\cdot\distthree$.   
  \end{varenumerate}
\end{varenumerate}
\end{lemma}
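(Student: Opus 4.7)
The plan is to mirror the proof of Lemma~\ref{lemma:struct}, proceeding by induction on the structure of $\derone : \ssemn{\termone}{\distone}$ and exploiting the simplifications brought by call-by-name. The two base cases dispatch immediately: if $\derone$ ends with $\sen$ we fall in clause~\ref{caseoneN}, and if it ends with $\svn$ in clause~\ref{casetwoN}. From now on assume $\derone$ ends with an instance of $\smn$, so $\termone \rn \mul{\termfive}$ with $\mul{\termfive} = \termfive_1,\ldots,\termfive_n$, subderivations $\derone_i : \ssemn{\termfive_i}{\distone_i}$, and $\distone = \sum_{i=1}^{n} \frac{1}{n}\distone_i$. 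I would then split on the shape of $\termone$ and on which reduction rule produced $\termone \rn \mul{\termfive}$.

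The sum case is markedly simpler than in Lemma~\ref{lemma:struct}. The only call-by-name reduction of $\ps{\termtwo}{\termthree}$ is $\ps{\termtwo}{\termthree} \rn \termtwo, \termthree$, so $n = 2$, $\distone = \frac{1}{2}\distone_1 + \frac{1}{2}\distone_2$, and setting $\disttwo := \distone_1$, $\distthree := \distone_2$ with witnesses $\dertwo := \derone_1$, $\derthree := \derone_2$ establishes clause~\ref{casefourN} with equality in item~\ref{case4threeN}; no normalization factor $\sumd{\cdot}$ intervenes, reflecting the fact that in call-by-name the choice is resolved before evaluating its branches. The application case splits in two. When $\termone$ is a head $\beta$-redex $\app{(\abstr{\varone}{\termfour})}{\termthree}$, $\mul{\termfive}$ is the unary sequence $\subst{\termfour}{\varone}{\termthree}$, and choosing $\disttwo := \{(\abstr{\varone}{\termfour})^1\}$ (derived in one step by $\svn$) together with $\distfour_{\termfour,\termthree} := \distone_1$ and $\derfour_{\termfour,\termthree} := \derone_1$ discharges clause~\ref{casethreeN} immediately.

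The genuinely involved subcase is $\termone = \app{\termtwo}{\termthree}$ with $\termtwo \rn \mul{\termsix}$ and $\mul{\termfive} = \app{\mul{\termsix}}{\termthree}$. For each $i$ I would apply the induction hypothesis to $\derone_i : \ssemn{\app{\termsix_i}{\termthree}}{\distone_i}$, obtaining either $\distone_i = \emdist$ or, for the non-trivial indices $k_1,\ldots,k_m$, a finite $\disttwo_i$ with $\dertwo_i : \ssemn{\termsix_{k_i}}{\disttwo_i}$ and, for each $\abstr{\varone}{\termfour} \in \supp{\disttwo_i}$, a $\distfive^{i}_{\termfour,\termthree}$ with $\derfive^{i}_{\termfour,\termthree} : \ssemn{\subst{\termfour}{\varone}{\termthree}}{\distfive^{i}_{\termfour,\termthree}}$, all of these derivations being strictly smaller than $\derone_{k_i}$, hence than $\derone$. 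I would then take $\disttwo := \sum_{i=1}^{m} \frac{1}{n}\disttwo_i$, packaged by a single application of $\smn$ on top of the $\dertwo_i$ and $\sen$-derivations on the discarded reducts $\termsix_j$, and for each $\abstr{\varone}{\termfour} \in \supp{\disttwo}$ define $\distfour_{\termfour,\termthree} := \sup_i \distfive^{i}_{\termfour,\termthree}$, whose derivability follows from Lemma~\ref{lemma:supdistn}. The required inequality $\distone \leq \sum_{\abstr{\varone}{\termfour}\in\supp{\disttwo}} \disttwo(\abstr{\varone}{\termfour})\cdot\distfour_{\termfour,\termthree}$ then follows by the same arithmetic chain as in Lemma~\ref{lemma:struct}, cleaner here because no distribution for $\termthree$ and no support over values of the argument appear.

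The main obstacle is precisely this last bookkeeping: extracting from the family of independent decompositions given by the induction hypothesis one coherent decomposition of $\termone$ while preserving the strict relation $\dertwo, \derfour_{\termfour,\termthree} \ltder \derone$ on derivation sizes. Strictness for $\derfour_{\termfour,\termthree}$ is guaranteed by Lemma~\ref{lemma:supdistn}, which bounds the size of the constructed derivation by that of one of the $\derfive^{i}_{\termfour,\termthree}$; strictness for $\dertwo$ follows from a simple size count, once one observes that the $\sen$-derivations for the discarded $\termsix_j$ contribute at most what the corresponding $\derone_j$ contributes to $\derone$. Everything else is the call-by-value argument of Lemma~\ref{lemma:struct} with the $\distthree$ factors erased.
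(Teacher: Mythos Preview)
Your proposal is correct and follows essentially the same route as the paper's own proof: induction on $\derone$, the base cases discharged by $\sen$ and $\svn$, and the $\smn$ case split into the head-redex subcase (with $\disttwo=\{(\abstr{\varone}{\termfour})^1\}$), the function-position reduction subcase (packaging $\disttwo=\sum_{i=1}^m\frac{1}{n}\disttwo_i$ and invoking Lemma~\ref{lemma:supdistn} for the $\distfour$'s), and the sum subcase handled directly since $\ps{\termtwo}{\termthree}\rn\termtwo,\termthree$. Your treatment of the strictness condition $\ltder$ is in fact more careful than the paper's, which leaves that bookkeeping largely implicit.
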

\begin{proof}
This is an induction on derivations for $\ssemv{\termone}{\distone}$.
The cases where $\derone$ is obtained by the rules without premises
are trivial. So, we can assume that $\derone$ has the form
$$
\brule
    {
      \termone\rv\mul{\termfive}
    }
    {
   \{   \derone_t:\ssemn{\termfive_i}{\distone_i}\}_{i\in\interv{1}{n}}
    }
    {
      \ssemn{\termone}{\sum_{i=1}^n\frac{1}{n}\distone_i}
    }
    {\smv}
$$
where $\mul{\termfive}$ is $\termfive_1,\ldots,\termfive_n$.
Let's distinguish some cases depending on
how the premise $\termone\rv\mul{\termfive}$ is derived
\begin{varitemize}
  \item
  
  Suppose $\termone=\app{\termtwo}{\termthree}$ and
  that $\termtwo\rv\mul{\termsix}$. Then
  $\mul{\termfive}=\app{\mul{\termsix}}{\termthree}$. From 
  the induction hypothesis applied to
  the derivations $\derone_i:\ssemn{\app{\termsix_i}{\termthree}}{\distone_i}$, we obtain that
  \begin{varitemize}
    \item
      Either all of the $\distone_i$ is $\emdist$. In this case $\distone$ is itself $\emdist$, and a 
      derivation for $\ssemn{\termone}{\emdist}$ can be defined.
    \item
      Or there is at least one among the derivation $\derone_i$ to which case \ref{casethreeN}\ applies.
      Suppose that $k_1,\ldots,k_m$ are the indices in $\interv{1}{n}$ to which case \ref{casethreeN} can
      be applied. This means that for every in $i$ in $\{1,\ldots,m\}$, there is
      a distribution $\disttwo_{i}$, 
      and for every $\abstr{\varone}{\termfour}\in{\supp{\disttwo_i}}$  
      a distribution
      $\distfive^{i}_{\termfour^{i}}$ such that: 
      \begin{align*}
         \dertwo_i&: \ssemv{\termsix_{k_i}}{\disttwo_{i}};\\ 
         \derfive^{i}_{\termfour}&: \ssemv{\termfour\{\termthree/\varone\}}{\distfive^{i}_{\termfour}}.
      \end{align*}
      Moreover, $ \dertwo_i,\derfive^{i}_{g,h}\ltder\derone_{t}$,  and 
      $$
      \distone_{k_i}\leq\sum_{\abstr{\varone}{\termfour^{}}\in{\supp{\disttwo_i}}}\disttwo_{i}(\termfour^{})\cdot\distfive^{i}_{\termfour}.
      $$
      Now, define $\disttwo$ as the distribution
      $$
      \sum_{i=1}^m\frac{1}{n}\disttwo_i
      $$
      
      Note that $\supp{\disttwo}=\bigcup_{i=1}^{m}\supp{\disttwo_i}$. 
      Clearly, a derivation $\dertwo$ for $\ssemv{\termtwo}{\disttwo}$ can be defined such that $\dertwo\ltder\derone$: simply
      construct it from the derivations $\dertwo_i$ and some derivations for $\ssemv{\termsix_j}{\emdist}$.
      Moreover,
      a derivations $\derfour_{\termfour}$ can be defined for every
      $\abstr{\varone}{\termfour}\in\supp{\disttwo}$ 
      in such a way that $\derfour_{\termfour}\ltder\derone$ (use Lemma \ref{lemma:supdistn}),
      $\derfour_{\termfour}:\ssemv{\termfour\{\termthree/\varone\}}{\distfour_{\termfour}}$
      and $\distfour_{\termfour}=\sup_{i=1}^m\distfive_{\termfour}^i$.
      Now:
      \begin{align*}
          \distone&=\sum_{i=1}^{n}\frac{1}{n}\distone_{i}\leq \sum_{i=1}^{m}\frac{1}{n}\distone_{k_i}\\
          &\leq\sum_{i=1}^{m}\frac{1}{n} \left(\sum_{\abstr{\varone}{\termfour}\in\supp{\disttwo_i}}\disttwo_{i}(\abstr{\varone}{\termfour}) \cdot\distfive^{i}_{\termfour}\right)\\
          &\leq\sum_{i=1}^{m}\frac{1}{n} \left(\sum_{\abstr{\varone}{\termfour}\in\supp{\disttwo},
               }\disttwo_{i}(\abstr{\varone}{\termfour})\cdot\distfive^{i}_{\termfour}\right)\\
          &=\sum_{\abstr{\varone}{\termfour}\in\supp{\disttwo}
              }\sum_{i=1}^{m}\frac{1}{n}
               \left(\disttwo_{i}(\abstr{\varone}{\termfour})\cdot
             \distfive^{i}_{\termfour}\right)\\
          &\leq\sum_{\abstr{\varone}{\termfour}\in\supp{\disttwo_i}}
               \sum_{i=1}^{m}\frac{1}{n}\left(\disttwo_{i}(\abstr{\varone}{\termfour})
               \cdot\distfour_{\termfour}\right)\\
          &=\sum_{\abstr{\varone}{\termfour}\in\supp{\disttwo}}
               \disttwo(\abstr{\varone}{\termfour})\cdot\distfour^{}_{\termfour}.
      \end{align*}             
  \end{varitemize}
  \item 
    Suppose $\termone=\app{\termtwo}{\termthree}$ and that we are in presence of a redex, i.e. $\termtwo$ is in the 
    form $\abstr{\varone}{\termsix}$. Then $\disttwo$ is simply $\{\abstr{\varone}{\termsix}^1\}$.
    From $\ssemv{\termsix\{\termthree/x\}}{\distone}$, the thesis easily follows.
  \item 
    Suppose $\termone=\ps{\termtwo}{\termthree}$ and that $\termone\rn{\termtwo, \termthree}$. 
    In this case the subderivations which we are looking for are $\derone:\ssemn{\termtwo}{\distone}$ 
    and $\dertwo:\ssemn{\termthree}{\disttwo}$.
\end{varitemize}
This concludes the proof.
\end{proof}

\begin{proposition}\label{prop:leqbsn}
If $\ssemn{\termone}{\distone}$ and $\bsemn{\termone}{\disttwo}$,
then $\distone\leq\disttwo$.
\end{proposition}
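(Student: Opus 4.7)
The plan is to proceed exactly as in the call-by-value case (Proposition~\ref{prop:leqbs}), replacing Lemma~\ref{lemma:struct} with its call-by-name counterpart Lemma~\ref{lemma:structn} and taking advantage of the fact that the big-step rules $\bvn$, $\ban$ and $\bsn$ are simpler in call-by-name (no normalizing factors, and no evaluation of the argument in an application). I would induct on the structure of a derivation $\derone:\ssemn{\termone}{\distone}$ and then perform a case analysis on the four alternatives offered by Lemma~\ref{lemma:structn}.

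The base cases are immediate: if $\distone=\emdist$ the inequality $\distone\leq\disttwo$ is trivial; if $\termone=\valone$ and $\distone=\{\valone^1\}$, the only big-step rule applicable to a value is $\bvn$, so $\disttwo=\{\valone^1\}$ as well. For the application case, Lemma~\ref{lemma:structn} supplies a distribution $\disttwo'$ and, for every $\abstr{\varone}{\termfour}\in\supp{\disttwo'}$, a distribution $\distfour_{\termfour,\termthree}$ such that $\ssemn{\termtwo}{\disttwo'}$, $\ssemn{\subst{\termfour}{\termthree}{\varone}}{\distfour_{\termfour,\termthree}}$, and $\distone\leq\sum_{\abstr{\varone}{\termfour}\in\supp{\disttwo'}}\disttwo'(\abstr{\varone}{\termfour})\cdot\distfour_{\termfour,\termthree}$, with all of these sub-derivations strictly smaller than $\derone$. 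By inverting the big-step derivation for $\app{\termtwo}{\termthree}$ (which must end in $\ban$), there exist $\distsix$ and, for every $\abstr{\varone}{\termfour}\in\supp{\distsix}$, a distribution $\disteight_{\termfour,\termthree}$ such that $\bsemn{\termtwo}{\distsix}$ and $\bsemn{\subst{\termfour}{\termthree}{\varone}}{\disteight_{\termfour,\termthree}}$, with $\disttwo=\sum_{\abstr{\varone}{\termfour}\in\supp{\distsix}}\distsix(\abstr{\varone}{\termfour})\cdot\disteight_{\termfour,\termthree}$. The induction hypothesis applied to $\disttwo'$ and $\distsix$ gives $\disttwo'\leq\distsix$ (hence $\supp{\disttwo'}\subseteq\supp{\distsix}$), and applied to each pair $\distfour_{\termfour,\termthree}$, $\disteight_{\termfour,\termthree}$ gives $\distfour_{\termfour,\termthree}\leq\disteight_{\termfour,\termthree}$. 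Chaining these inequalities yields $\distone\leq\disttwo$.

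For the sum case, Lemma~\ref{lemma:structn} provides $\dertwo:\ssemn{\termtwo}{\disttwo'}$ and $\derthree:\ssemn{\termthree}{\distthree'}$ with $\distone\leq\frac{1}{2}\cdot\disttwo'+\frac{1}{2}\cdot\distthree'$, while inverting the big-step derivation (which must end in $\bsn$) gives $\bsemn{\termtwo}{\distsix}$ and $\bsemn{\termthree}{\distseven}$ with $\disttwo=\frac{1}{2}\cdot\distsix+\frac{1}{2}\cdot\distseven$. The induction hypothesis delivers $\disttwo'\leq\distsix$ and $\distthree'\leq\distseven$, whence $\distone\leq\frac{1}{2}\cdot\distsix+\frac{1}{2}\cdot\distseven=\disttwo$.

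The only genuinely subtle point, and the one that I expect to require care, is the application case: I must make sure that the big-step derivation can really be inverted to expose sub-derivations for \emph{every} $\abstr{\varone}{\termfour}\in\supp{\disttwo'}$ and not merely for those in some potentially smaller support. This is handled because $\supp{\disttwo'}\subseteq\supp{\distsix}$ (by the induction hypothesis applied to $\termtwo$), so the required judgments $\bsemn{\subst{\termfour}{\termthree}{\varone}}{\disteight_{\termfour,\termthree}}$ are available as premises of the $\ban$ rule for each relevant $\abstr{\varone}{\termfour}$. Everything else is a routine calculation, noticeably lighter than in call-by-value since no $\sumd{\cdot}$ normalizers appear.
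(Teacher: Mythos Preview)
Your proposal is correct and follows essentially the same approach as the paper: induction on the small-step derivation, case analysis via Lemma~\ref{lemma:structn}, inversion of the (coinductive) big-step judgment to expose the matching sub-judgments, and then the induction hypothesis on the smaller derivations. The paper's proof is slightly terser (it defers the sum case to the call-by-value argument), but your treatment of the application case---including the observation that $\supp{\disttwo'}\subseteq\supp{\distsix}$ is what makes the required big-step premises available---is exactly the point the paper relies on.
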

\begin{proof}
By induction on the structure of a proof for $\ssemn{\termone}{\distone}$,
applying Lemma~\ref{lemma:structn} and doing some case analysis based 
on its outcome:
\begin{varitemize}
\item
  If $\distone=\emdist$, then $\distone\leq\disttwo$ trivially.
\item
  If $\termone=\valone$ and $\distone=\{\valone^1\}$, then
  $\disttwo=\distone$, because the only rule for values in the
  big step semantic is
  $$
  \czrule
      {
        \bsemn{\valone}{\{\valone\}}
      }
      {\bvn}
  $$    
\item
  If $\termone$ is an application $\app{\termtwo}{\termthree}$
  and distributions $\distthree, \distfive_{\termfour, \termthree}$ and
  derivations $\dertwo,\derfour_{\termfour, \termthree}$ 
  exist as in
  Lemma~\ref{lemma:structn}, we can observe,
  by induction hypothesis, that there exists a 
  distribution $\distsix$
  such that $\bsemn{\termtwo}{\distsix}$,    where
  $\distsix\geq\distthree$.
  Then we have that 
  $\supp{\distsix}\supseteq \supp{\distthree}$.
  Now, for every $\abstr{\varone}{\termfour}\in \supp{\distthree}$, suppose
  $\disteight_{\termfour, \termthree}$ is such that
  $\bsemn{\subst{\termfour}{\termthree}{\varone}}{\disteight_{\termfour, \termthree}}$.
  Again by induction hypothesis, we obtain that
  $\disteight_{\termfour, \termthree}\geq\distfive_{\termfour, \termthree}$.
  Finally:
  \begin{align*}
    \distone&\leq\sum_{\abstr{\varone}{\termfour}\in\supp{\distthree}}
      \distthree(\abstr{\varone}{\termfour})\cdot\distfive_{\termfour, \termthree}\\
    &\leq \sum_{\abstr{\varone}{\termfour}\in\supp{\distsix}}
      \distthree(\abstr{\varone}{\termfour})\cdot\distfive_{\termfour, \termthree}\\
    &\leq \sum_{\abstr{\varone}{\termfour}\in\supp{\distsix}}
      \distsix(\abstr{\varone}{\termfour})\cdot\disteight_{\termfour, \termthree}\\
      &=\disttwo.
  \end{align*}
\item
If $\termone$ is $\ps{\termtwo}{\termthree}$, we can proceed exactly as for Proposition~\ref{prop:leqbs}, sum case.
\end{varitemize}
\noindent
This concludes the proof.
\end{proof}

\begin{lemma}\label{lemma:structreversen}
Let $\termone\in\LOP$ be any term. Then: 
\begin{varenumerate}
\item\label{case1sb1n} 
  If $\termone$ is a value $\valone$ and $\ssemn{\termone}{\distone}$, then $\distone\leq\{\valone^1\}$.
\item\label{case2sb1n} 
  If $\termone$ is an application $\app{\termtwo}{\termthree}$, 
  $\ssemn{\termtwo}{\disttwo}$, 
  $\{\ssemn{\subst{\termfour}{\termthree}{\varone}}{\distfour_{\termfour}}\}_{\abstr{\varone}{\termfour}\in\supp{\disttwo}}$ 
  then there exists a distribution $\distone$ such that $\ssemn{\termone}{\distone}$ and 
  $\distone\geq\sum_{\abstr{\varone}{\termfour}\in\supp{\disttwo}}\disttwo(\abstr{\varone}{\termfour})\cdot\distfour_{\termfour}$.
\item \label{case3sb1n} 
  If $\termone$ is a sum $\ps{\termtwo}{\termthree}$, $\ssemn{\termtwo}{\disttwo}$, 
  $\ssemn{\termthree}{\distthree}$ then there exists a distribution $\distone$ such that $\ssemn{\termone}{\distone}$ and 
  $\distone\geq\frac{1}{2}\cdot\disttwo+\frac{1}{2}\cdot\distthree$.
\end{varenumerate}
\end{lemma}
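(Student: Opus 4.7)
The plan is to prove the three statements separately, paralleling the structure of the proof of Lemma~\ref{lemma:structreverse} but exploiting the simpler nature of call-by-name reduction. Statement~\ref{case1sb1n} is immediate: the only small-step rules whose conclusion matches $\ssemn{\valone}{\distone}$ are $\sen$, yielding $\distone=\emdist$, and $\svn$, yielding $\distone=\{\valone^1\}$, and both are bounded above by $\{\valone^1\}$. Statement~\ref{case3sb1n} is considerably easier than its call-by-value counterpart, because $\ps{\termtwo}{\termthree}\rn\termtwo,\termthree$ holds unconditionally; hence, from the given derivations $\ssemn{\termtwo}{\disttwo}$ and $\ssemn{\termthree}{\distthree}$ a single application of rule $\smn$ produces $\ssemn{\ps{\termtwo}{\termthree}}{\frac{1}{2}\disttwo+\frac{1}{2}\distthree}$, and we take $\distone=\frac{1}{2}\disttwo+\frac{1}{2}\distthree$.

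The substance of the proof is statement~\ref{case2sb1n}, which I would prove by induction on the derivation $\dertwo:\ssemn{\termtwo}{\disttwo}$. Two base cases arise. If $\dertwo$ is a single application of $\sen$ then $\disttwo=\emdist$, the right-hand sum is $\emdist$, and $\distone=\emdist$ trivially works. If $\termtwo$ is an abstraction $\abstr{\varone}{\termfive}$ and $\dertwo$ is an instance of $\svn$, then $\supp{\disttwo}=\{\abstr{\varone}{\termfive}\}$ with $\disttwo(\abstr{\varone}{\termfive})=1$; since $\termone\rn\subst{\termfive}{\termthree}{\varone}$ is a redex step, rule $\smn$ applied to the hypothesis $\ssemn{\subst{\termfive}{\termthree}{\varone}}{\distfour_{\termfive}}$ yields $\ssemn{\termone}{\distfour_{\termfive}}$, which coincides with the required sum. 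In the inductive step $\termtwo$ is not a value, and the last rule of $\dertwo$ is $\smn$ with $\termtwo\rn\mul{\termsix}$ and subderivations $\ssemn{\termsix_i}{\disttwo_i}$, so $\disttwo=\sum_{i=1}^n\frac{1}{n}\disttwo_i$ and $\supp{\disttwo_i}\subseteq\supp{\disttwo}$ for each $i$. The assumed family therefore restricts to appropriate premises indexed by $\supp{\disttwo_i}$, allowing the induction hypothesis to be invoked on each application $\app{\termsix_i}{\termthree}$ and producing derivations $\ssemn{\app{\termsix_i}{\termthree}}{\distone_i}$ with $\distone_i\geq\sum_{\abstr{\varone}{\termfour}\in\supp{\disttwo_i}}\disttwo_i(\abstr{\varone}{\termfour})\cdot\distfour_{\termfour}$. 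Using $\termone\rn\app{\mul{\termsix}}{\termthree}$ and one further application of $\smn$ gives $\ssemn{\termone}{\sum_{i=1}^n\frac{1}{n}\distone_i}$, and a routine interchange of summations (of exactly the form carried out in the inductive case of Lemma~\ref{lemma:structreverse}) shows that $\sum_{i=1}^n\frac{1}{n}\distone_i\geq\sum_{\abstr{\varone}{\termfour}\in\supp{\disttwo}}\disttwo(\abstr{\varone}{\termfour})\cdot\distfour_{\termfour}$.

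The only genuine obstacle is the bookkeeping in the inductive case above, namely tracking the inclusions $\supp{\disttwo_i}\subseteq\supp{\disttwo}$ and justifying the exchange of the inner and outer sums. This is strictly simpler than the corresponding computation in Lemma~\ref{lemma:structreverse}, since only a single argument is being ranged over and the normalisation factors $\sumd{\cdot}$ that appear in the call-by-value case are absent here; no genuinely new idea is required beyond what was used in the call-by-value proof.
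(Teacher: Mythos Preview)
Your proposal is correct and follows essentially the same approach as the paper's own proof: statement~\ref{case1sb1n} is immediate from the two applicable rules, statement~\ref{case3sb1n} is trivial because $\ps{\termtwo}{\termthree}\rn\termtwo,\termthree$ holds unconditionally (the paper simply says ``the proof is trivial''), and statement~\ref{case2sb1n} is handled by induction on the derivation of $\ssemn{\termtwo}{\disttwo}$ with the same case split and the same interchange-of-sums computation as in Lemma~\ref{lemma:structreverse}. If anything, your write-up is slightly more explicit than the paper's in separating out the base case where $\termtwo$ is an abstraction and the rule is $\svn$.
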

\begin{proof}
Let us prove the three statements separately:
\begin{varitemize}
\item
  If $\termone$ is a value $\valone$, then the only possible judgments involving $\termone$
  are $\ssemn{\termone}{\{\valone^1\}}$ and $\ssemn{\termone}{\emdist}$. The thesis trivially holds.
\item 
  If $\termone$ is an application $\app{\termtwo}{\termthree}$, then
  we prove statement~\ref{case2sb1} by induction on the derivations 
  $\ssemn{\termtwo}{\disttwo}$ and $\ssemn{\termthree}{\distthree}$. Let's distinguish some cases:
  \begin{varitemize}
  \item  
    If $\ssemn{\termtwo}{\emdist}$, then 
    $\sum_{}\disttwo(\abstr{\varone}{\termfour})\cdot\distfour_{\termfour,\valone}=\emdist$. 
    We derive $\ssemv{\termone}{\emdist}$ by means of the rule $\sen$ and the thesis holds.
  \item 
    If $\ssemn{\termtwo}{\disttwo}\neq\emdist$, suppose that the least rule applied in the derivation is 
    $$\brule
    {
      \termtwo\rv\mul{\termsix}
    }
    {
      \ssemn{\termsix_i}{\disttwo_i}
    }
    {
      \ssemn{\termtwo}{\sum_{i=1}^n\frac{1}{n}\disttwo_i}
    }
    {\smn}
    $$
    Note that $\disttwo=\sum_{i=1}^{n}\frac{1}{n}\disttwo_{i}$.
    It is possible to apply the induction hypothesis on every ${\app{\termsix_{i}}{\termthree}}$:
    if  $\ssemn{\termsix_{i}}{\disttwo_i}$, 
    and $\ssemn{\subst{\termfour}{\termthree}{\varone}}{\distfour_{\termfour}$ 
    (for $\abstr{\varone}{\termfour}\in\supp{\disttwo_i}\subseteq\supp{\disttwo}}$) 
    follows that  $\ssemn{\app{\termsix_{i}}{\termthree}}{\distfive_i}$ 
    with $\distfive_i \geq\sum_{\abstr{\varone}{\termfour}\in\supp{\disttwo_i}}
    \disttwo_i(\abstr{\varone}{\termfour})\cdot\distfour_{\termfour}$.
    We are able to construct a derivation  $\distone$ in the following way:
    $$
    \brule
        {
          \app{\termtwo}{\termthree}\rv{\app{\mul{\termsix}}{\termthree}}
        }
        {
          \ssemn{\app{\termsix_i}\termthree}{\distfive_i}
        }
        {
          \ssemn{\app{\termtwo}{\termthree}}{\sum_{i=1}^{n}\frac{1}{n}\distfive_i}
        }
        {\smn}
        $$
    And finally we have:
   \begin{align*}
    \distone&=\sum_{i=1}^{n}\frac{1}{n}\distfive_{i}\geq \sum_{i=1}^{n}\frac{1}{n}
       \left( \sum_{\abstr{\varone}{\termfour}\in\supp{\disttwo_i}}
       \disttwo_i(\abstr{\varone}{\termfour}) \cdot\distfour_{\termfour}\right)\\
    &=\sum_{i=1}^{n}\frac{1}{n}
       \left( \sum_{\abstr{\varone}{\termfour}\in\supp{\disttwo}}
       \disttwo_i(\abstr{\varone}{\termfour}) \cdot\distfour_{\termfour}\right)\\
    &=\sum_{\abstr{\varone}{\termfour}\in\supp{\disttwo}}
       \left(\sum_{i=1}^{n}\frac{1}{n}\left(\disttwo_i(\abstr{\varone}{\termfour})\right) \cdot
       \distfour_{\termfour}\right)\\
    &=\sum_{\abstr{\varone}{\termfour}\in\supp{\disttwo}}\disttwo
       (\abstr{\varone}{\termfour}) \cdot\distfour_{\termfour}.
   \end{align*}
\item 
  Other cases can be handled similarly to the previous one.
\end{varitemize}
\item 
  $\termone$ is a sum $\ps{\termtwo}{\termthree}$. 
  The proof is trivial.
\end{varitemize}
\end{proof}

\begin{lemma}\label{lemma:sb2n}
 For each term $\termone\in\LOP$, then
\begin{varenumerate}
\item\label{case1sb2n} 
  If $\termone$ is a value $\valone$, then $\Ssemn{\termone}=\{\valone^1\}$;
\item\label{case2sb2n} 
  If $\termone$ is an application $\app{\termtwo}{\termthree}$, then 
  $\Ssemn{\termone}=\sum_{\abstr{\varone}{\termfour}\in\supp{\Ssemn{\termtwo}}} \Ssemn{\termtwo}(\abstr{\varone}{\termfour})\cdot
 \Ssemn{\subst{\termfour}{\termthree}{\varone}}$;
\item\label{case3sb2n} 
  If $\termone$ is a sum $\ps{\termtwo}{\termthree}$, then
  $\Ssemn{\termone}=\frac{1}{2}\cdot \Ssemn{\termtwo}+
  \frac{1}{2}\cdot \Ssemn{\termthree}$.
\end{varenumerate}
\end{lemma}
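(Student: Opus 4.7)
The plan is to retrace the proof of Lemma~\ref{lemma:sb2} step by step, replacing each call-by-value ingredient by the corresponding call-by-name tool already developed above. The crucial preparatory observation is the call-by-name analogue of Fact~\ref{fact:one}: whenever $\termone\rn\mul{\termtwo}$ with $\mul{\termtwo}=\termtwo_1,\ldots,\termtwo_n$, the identity
\[
\sup_{\ssemn{\termone}{\distone}}\distone \;=\; \sup_{\{\ssemn{\termtwo_i}{\distone_i}\}_i}\sum_{i=1}^{n}\frac{1}{n}\distone_i
\]
holds by direct inspection of the rules $\sen$, $\svn$, $\smn$, since every non-$\emdist$ derivation for $\termone$ must fire $\smn$ on $\mul{\termtwo}$, while conversely the family on the right can always be packaged into a derivation for $\termone$.

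Case~\ref{case1sb2n} is handled first and is immediate: the only applicable rules for a value $\valone$ are $\sen$ and $\svn$, so $\Ssemn{\valone}=\sup\{\emdist,\{\valone^1\}\}=\{\valone^1\}$. For Cases~\ref{case2sb2n} and~\ref{case3sb2n} I would split each equality into two inequalities.

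For the $(\leq)$ direction, I would fix an arbitrary derivation $\derone:\ssemn{\termone}{\distone}$ and invoke Lemma~\ref{lemma:structn}: either $\distone=\emdist$ (which trivially satisfies the inequality), or we obtain subderivations $\dertwo:\ssemn{\termtwo}{\disttwo}$ together with, in the application case, a family $\{\derfour_{\termfour,\termthree}:\ssemn{\subst{\termfour}{\termthree}{\varone}}{\distfour_{\termfour,\termthree}}\}_{\abstr{\varone}{\termfour}\in\supp{\disttwo}}$ and the estimate $\distone\leq\sum_{\abstr{\varone}{\termfour}\in\supp{\disttwo}}\disttwo(\abstr{\varone}{\termfour})\cdot\distfour_{\termfour,\termthree}$; in the sum case we get $\dertwo:\ssemn{\termtwo}{\disttwo}$ and $\derthree:\ssemn{\termthree}{\distthree}$ with $\distone\leq\tfrac{1}{2}\disttwo+\tfrac{1}{2}\distthree$. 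Since each $\disttwo$, $\distthree$, $\distfour_{\termfour,\termthree}$ is bounded above by $\Ssemn{\termtwo}$, $\Ssemn{\termthree}$, $\Ssemn{\subst{\termfour}{\termthree}{\varone}}$ respectively, passing to the supremum over all derivations $\derone$ yields the claimed upper bound.

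For the $(\geq)$ direction, I would apply Lemma~\ref{lemma:structreversen}. In the sum case, given derivations attaining suprema (or approximating them) for $\termtwo$ and $\termthree$, the lemma produces a derivation for $\termone$ whose distribution majorizes $\tfrac{1}{2}\disttwo+\tfrac{1}{2}\distthree$, so taking suprema gives $\Ssemn{\termone}\geq\tfrac{1}{2}\Ssemn{\termtwo}+\tfrac{1}{2}\Ssemn{\termthree}$. The application case is the main obstacle: we need to combine a derivation for $\termtwo$ with a family of derivations $\ssemn{\subst{\termfour}{\termthree}{\varone}}{\distfour_{\termfour}}$ indexed by $\supp{\disttwo}$ that simultaneously approximate the suprema $\Ssemn{\subst{\termfour}{\termthree}{\varone}}$. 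Since any fixed derivation for $\termtwo$ has a \emph{finite} support, Lemma~\ref{lemma:supdistn} lets us, for any finite $\epsilon$-approximation of each $\Ssemn{\subst{\termfour}{\termthree}{\varone}}$, build a single derivation realising them all together and then apply Lemma~\ref{lemma:structreversen} to conglomerate the result. Taking sups exchanges with the finite sum indexed by $\supp{\disttwo}$, and a final sup over approximants of $\Ssemn{\termtwo}$ yields the desired $\Ssemn{\termone}\geq\sum_{\abstr{\varone}{\termfour}\in\supp{\Ssemn{\termtwo}}}\Ssemn{\termtwo}(\abstr{\varone}{\termfour})\cdot\Ssemn{\subst{\termfour}{\termthree}{\varone}}$. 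Note that the absence of $\sumd{\cdot}$ normalization factors (compared with the CbV sum rule) makes Case~\ref{case3sb2n} noticeably cleaner than its CbV counterpart.
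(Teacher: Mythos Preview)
Your proposal is correct and follows essentially the same route as the paper: establish the call-by-name analogue of Fact~\ref{fact:one}, then obtain the $(\leq)$ direction from Lemma~\ref{lemma:structn} and the $(\geq)$ direction from Lemma~\ref{lemma:structreversen}. The paper's own proof is in fact a one-line pointer to exactly these ingredients, so your write-up is a faithful (and considerably more detailed) expansion of it.

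One small remark: your invocation of Lemma~\ref{lemma:supdistn} in the application case of $(\geq)$ is unnecessary. That lemma combines several derivations \emph{for the same term} into one realising their sup; here the terms $\subst{\termfour}{\termthree}{\varone}$ are distinct as $\termfour$ ranges over $\supp{\disttwo}$, so you simply choose one approximant per $\termfour$ and feed the resulting finite family directly into Lemma~\ref{lemma:structreversen}. The finiteness of $\supp{\disttwo}$ (which you correctly note) is already all that is needed.
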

\begin{proof}
The inequalities above can be proved separately:
\begin{varitemize}
\item
  If $\termone$ is a value, the thesis follows by small step semantics rules. Indeed,
  $\ssemn{\termone}{\emdist}$. 
\item
  For the other cases, the 
  $\mathbf{(\leq)}$ 
  direction follows from Lemma~\ref{lemma:structn} and Fact~\ref{fact:one};
  $\geq$ direction follows from Lemma~\ref{lemma:structreversen}  and Fact~\ref{fact:one}.
\end{varitemize}
\end{proof}

\begin{proposition}\label{prop:supbsemn}
$\bsemn{\termone}{\Ssemn{\termone}}$.
\end{proposition}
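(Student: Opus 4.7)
The plan is to mirror the call-by-value argument of Proposition~\ref{prop:supbsem}, exploiting the fact that the big-step rules $\bvn$, $\ban$, $\bsn$ are interpreted coinductively. I will apply the coinduction principle introduced in Section~\ref{sec:coind}: it suffices to exhibit a set of judgments $\sjone$ containing every $\bsemn{\termone}{\Ssemn{\termone}}$ and show that $\sjone$ is consistent with respect to the inference operator $\opinf{\isone}$ associated with the system $\isone = \{\bvn, \ban, \bsn\}$.

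The natural candidate is
\[
\sjone = \{\bsemn{\termone}{\Ssemn{\termone}} \mid \termone \in \LOP\}.
\]
To prove $\sjone \subseteq \opinf{\isone}(\sjone)$, I will case split on the structure of $\termone$ and, for each case, display a single-step derivation whose premises are themselves members of $\sjone$. The three cases are:
\begin{varitemize}
\item If $\termone = \valone$ is a value, then by Lemma~\ref{lemma:sb2n}, case~\ref{case1sb2n}, $\Ssemn{\valone} = \{\valone^1\}$, and rule $\bvn$ (which has no premises) derives $\bsemn{\valone}{\{\valone^1\}}$ directly.
\item If $\termone = \app{\termtwo}{\termthree}$, I apply rule $\ban$ with premises $\bsemn{\termtwo}{\Ssemn{\termtwo}}$ and the family $\{\bsemn{\subst{\termfour}{\termthree}{\varone}}{\Ssemn{\subst{\termfour}{\termthree}{\varone}}}\}_{\abstr{\varone}{\termfour}\in\supp{\Ssemn{\termtwo}}}$. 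All of these premises lie in $\sjone$ by definition, and the conclusion produced by $\ban$ is exactly $\bsemn{\app{\termtwo}{\termthree}}{\Ssemn{\app{\termtwo}{\termthree}}}$ by Lemma~\ref{lemma:sb2n}, case~\ref{case2sb2n}.
\item If $\termone = \ps{\termtwo}{\termthree}$, I apply rule $\bsn$ with premises $\bsemn{\termtwo}{\Ssemn{\termtwo}}$ and $\bsemn{\termthree}{\Ssemn{\termthree}}$, both in $\sjone$, and conclude $\bsemn{\ps{\termtwo}{\termthree}}{\tfrac12\cdot\Ssemn{\termtwo}+\tfrac12\cdot\Ssemn{\termthree}}$, which equals $\bsemn{\ps{\termtwo}{\termthree}}{\Ssemn{\ps{\termtwo}{\termthree}}}$ by Lemma~\ref{lemma:sb2n}, case~\ref{case3sb2n}.
\end{varitemize}
Consistency of $\sjone$ is then immediate, and by the coinduction principle $\sjone \subseteq \gfp{\opinf{\isone}}$, which is the definition of $\bsemn{}{}$; hence $\bsemn{\termone}{\Ssemn{\termone}}$.

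The real work has already been absorbed into Lemma~\ref{lemma:sb2n}: its three cases are precisely the compositional identities required to make the big-step rules reproduce the small-step semantics in one step. The only subtle point is that the call-by-name sum rule $\bsn$ does not involve the normalization factors $\sumd{\cdot}$ present in the call-by-value version $\bsv$, which is exactly why case~\ref{case3sb2n} of Lemma~\ref{lemma:sb2n} gives $\tfrac12\cdot\Ssemn{\termtwo}+\tfrac12\cdot\Ssemn{\termthree}$ rather than the weighted form used in call-by-value; this makes the matching in the sum case actually simpler than its call-by-value counterpart. I do not anticipate any genuine obstacle: provided Lemma~\ref{lemma:sb2n} is in place, the argument is a routine coinductive verification.
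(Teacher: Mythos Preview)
Your proposal is correct and follows essentially the same approach as the paper: you exhibit the set $\sjone=\{\bsemn{\termone}{\Ssemn{\termone}}\mid\termone\in\LOP\}$, verify its consistency with respect to the big-step system by case analysis on $\termone$, and invoke Lemma~\ref{lemma:sb2n} in each case to match the conclusion of the relevant rule with $\Ssemn{\termone}$. This is exactly the paper's argument, down to the structure of the three cases.
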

\begin{proof} 
We will prove the thesis by coinduction:
We can prove that all judgments $\bsemn{\termone}{\Ssemv{\termone}}$
belong to the coinductive interpretation of the underlying formal system $\isone$ 
(in this case, the formal system is $\isone=\{{\bvn}, {\ban}, {\bsn}\}$).
To do that, we need to prove that the set $\sjone$ of all those judgment
is consistent, i.e. that $\sjone\subseteq\opinf{\isone}(\sjone)$.
This amounts to show that if $\juone\in\sjone$, then there is a derivation
for $\juone$ whose immediate premises are themselves in $\sjone$. Let's
distinguish some cases:
\begin{varitemize}
\item 
  If $\termone=\valone$ then $\bsemn{\valone}{\{V^{1}\}}$ by $\mathsf{bv_v}$ rule, 
  and $\{V^{1}\}= \Ssemn{\valone}$ because of Lemma~\ref{lemma:sb2n}.
\item 
  If $\termone$ is an application $\app{\termtwo}{\termthree}$, take the 
  judgment $\juone_1=\bsemn{\termtwo}{\Ssemn{\termtwo}}$, 
  and the family of judgments 
  $\{\bsemn{\{\subst{\termfour}{\termthree}{\varone}\}}
  {\Ssemn{\{\subst{\termfour}{\termthree}{\varone}\}}}\}_{\abstr{\varone}{\termfour}\in\supp{\Ssemn{\termtwo}}}$: 
  we will prove that the judgment $\bsemn{\app{\termtwo}{\termthree}}{\Ssemn{\app{\termtwo}{\termthree}}}$ can be derived in a 
  single step from $\juone_1$ and those in the family above
  by means of $\ban$ rule.
  Simply observe that
  $$
  \brule
      {\bsemn{\termtwo}{\Ssemn{\termtwo}}}
      {\{\bsemn{\{\subst{\termfour}{\termthree}{\varone}\}}
  {\Ssemv{\{\subst{\termfour}{\termthree}{\varone}\}}}\}_{\abstr{\varone}{\termfour}\in\supp{\Ssemv{\termtwo}}}}
      {\bsemn{\app{\termtwo}{\termthree}}{\sum_{\abstr{\varone}{\termfour}\in\supp{\Ssemn{\termtwo}}} \Ssemn{\termtwo}(\abstr{\varone}{\termfour})\cdot \Ssemn{\subst{\termfour}{\termthree}{\varone}}}}
      {{\bav}}
  $$ 
  The thesis follows applying Lemma~\ref{lemma:sb2n}.
\item 
  If $\termone$ is a sum $\ps{\termtwo}{\termthree}$, take the judgment $c_1= \bsemn{\termtwo}{\Ssemn{\termtwo}}$ 
  and $\juone_2= \bsemn{\termone}{\Ssemn{\termthree}}$: we will prove that the judgment 
  $\bsemn{\ps{\termtwo}{\termthree}}{\Ssemn{\app{\termtwo}{\termthree}}}$ can be inferred in a single step from 
  $\juone_1$ and $\juone_2$ by means of $\mathsf{bs_v}$ rules.
  Clearly, $\juone_1$ and $\juone_2$ belong to $\sjone$. Moreover
  $$
  \brule
      {\bsemn{\termtwo}{\Ssemn{\termtwo}}}
      {\bsemn{\termthree}{\Ssemn{\termthree}}}
      {\bsemn{\ps{\termtwo}{\termthree}}{\frac{1}{2}\cdot\Ssemv{\termtwo}}+\frac{1}{2}\cdot \Ssemn{\termthree}}
      {{\bsn}}
  $$ 
  and by  Lemma~\ref{lemma:sb2n}, case~\ref{case3sb2n} we obtain the thesis. 
\end{varitemize}
This concludes the proof.
\end{proof}

\begin{theorem}
$\Bsemn{\termone}=\Ssemn{\termone}$.
\end{theorem}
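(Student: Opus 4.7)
The plan is to mirror exactly the architecture used for the call-by-value equivalence (Theorem~\ref{th:cbvSB}), exploiting the two preparatory propositions already proved in Section~\ref{sec:cbnSvsB}. Concretely, I would deduce the theorem as an immediate corollary of Proposition~\ref{prop:leqbsn} (big-step dominates small-step) and Proposition~\ref{prop:supbsemn} (small-step is realized in big-step), with no further inductive or coinductive argument required.

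First I would argue the inequality $\Bsemn{\termone} \geq \Ssemn{\termone}$. By Proposition~\ref{prop:leqbsn}, for every pair $\distone, \disttwo$ with $\ssemn{\termone}{\distone}$ and $\bsemn{\termone}{\disttwo}$, one has $\distone \leq \disttwo$. Taking the supremum over $\distone$ on the left and the infimum over $\disttwo$ on the right (both of which exist in $\pdists$ since $(\pdists,\leq)$ is a complete meet-semilattice and $\sumd{(\cdot)}$ is bounded by $1$), we obtain
\[
\Ssemn{\termone} = \sup_{\ssemn{\termone}{\distone}}\distone \leq \inf_{\bsemn{\termone}{\disttwo}}\disttwo = \Bsemn{\termone}.
\]

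Then I would argue the reverse inequality $\Bsemn{\termone} \leq \Ssemn{\termone}$. By Proposition~\ref{prop:supbsemn}, $\Ssemn{\termone}$ is itself one of the distributions $\disttwo$ for which $\bsemn{\termone}{\disttwo}$ holds. Hence $\Ssemn{\termone}$ belongs to the set over which the infimum defining $\Bsemn{\termone}$ is taken, so by Definition~\ref{def:bsv} (adapted to the call-by-name case)
\[
\Bsemn{\termone} = \inf_{\bsemn{\termone}{\disttwo}}\disttwo \leq \Ssemn{\termone}.
\]
Combining the two inequalities yields the thesis.

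I do not expect any genuine obstacle here: all the difficulty has already been absorbed into Proposition~\ref{prop:leqbsn} (the structural Lemma~\ref{lemma:structn} plus induction on small-step derivations) and Proposition~\ref{prop:supbsemn} (the coinduction based on Lemma~\ref{lemma:sb2n}). The only point deserving a brief mention is the well-definedness of $\Ssemn{\termone}$ and $\Bsemn{\termone}$ as sup and inf respectively, which follows from the call-by-name analogue of Lemma~\ref{lemma:linord} (the set of small-step distributions is directed) and from the fact that $\pdists$ is a complete meet-semilattice.
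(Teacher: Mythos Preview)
Your proposal is correct and matches the paper's approach exactly: the paper states the theorem as a one-line corollary of Proposition~\ref{prop:supbsemn} and Proposition~\ref{prop:leqbsn}, and you have simply spelled out the two inequalities that this corollary encapsulates.
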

\begin{proof}
This is a corollary of Proposition~\ref{prop:supbsemn} and Proposition~\ref{prop:leqbsn}.
\end{proof}

\section{CPS Translations and Simulations}\label{sec:trans}

In this section we show that in $\LOP$  it is possible to simulate call-by-value by call-by-name and vice versa. 
To do that, we follow Plotkin's CPS translation~\cite{Plotkin75}, extended to accommodate binary choice.

It is well known that in the \emph{weak} untyped lambda calculus, call-by-value and 
call-by-name are not equivalent notions of reduction. 
Moreover, the presence of binary choice exacerbates the confluence problem, as shown in Section~\ref{sec:obs}.
It is clear that duplications play a central role: in particular, the order in which 
probabilistic choices and duplications are performed matters.

The presence of binary choices opens a related question about the possibility of including some  
``administrative rules'' distributing sums over the other constructs. 
The following example shows that it may be critical to include administrative rules:
\begin{example}\label{ex:adm}
Let us consider the following terms: $\termone\equiv \idt{(\ps{\termtt}{\termff}})$ and  $\termtwo\equiv 
\ps{(\idt{\termtt})}{(\idt{\termff})}$. Note that $\termone$ can 
be obtained from $\termtwo$ by means of administrative rules like 
$\termthree(\ps{\termfour}{\termfive})= \ps{(\termthree\termfour)}{(\termthree\termfive)}$. 
Consider the term  
$\termxorcont\equiv(\abstr{\varthree}{\termxor\varthree\varthree})$.
We have 
$\Ssemv{\termone\termxorcont}=\{\termff^1\}=\distone$ and 
$\Ssemn{\termone\termxorcont}=\{\termff^{\frac{1}{2}},\termtt^{\frac{1}{2}}\}=
\disttwo$, whereas $\Ssemv{\termtwo\termxorcont}=\Ssemn{\termtwo\termxorcont}=\distone$ and therefore
$\termone $ and $\termtwo $ have different call-by-name observational behaviors, 
thus $\termone$ and $\termtwo$ can not be considered equivalent terms.
\end{example}
A study about the observational behavior of terms is a fascinating subject, but is out of the scope of this paper.
By the way, this is extensively investigated in the non deterministic setting~\cite{deLigPip95}, in which an 
algebraic semantics of terms is defined by way of a generalization of B\"ohm's trees.
A further generalization to the probabilistic setting is left to future work.

What we are interested here is to develop an operational study of $\LOP$. And an interesting
question is clearly whether call-by-value and call-by-name, although being distinct notions of reduction,
can be somehow made equivalent through a suitable CPS translation, even in presence of
binary choices (with a probabilistic semantics).

In this section, a simulation between call-by-value and call-by-name in $\LOP$ is proved.
We begin with the simulation of call-by-value by call-by-name (Section~\ref{sec:cbvbycbn}), 
then we will carry on with the simulation of call-by-name by call-by-value (Section~\ref{sec:cbnbycbv}).
\subsection{Simulating Call-by-Value with Call-By-Name}\label{sec:cbvbycbn}
Suppose to extend $\LOP$ with a denumerable set of continuation variables $\setvarg=\{\svarone,\svartwo,\svarfive\ldots\}$, 
disjoint from the original set $\setvar$ of variables of the language. We will call $\LPplus$ 
the language of lambda terms extended this way. All definitions and constructions on $\LOP$
(including its operational semantics) extend smoothly to $\LPplus$

Call-by-value reduction on $\LOP$ can be simulated by call-by-name reduction of $\LPplus$ by translating
every term $\termone$ in $\LOP$ to a term in $\LPplus$, which will be proved to
be equivalent to $\termone$ in a certain sense.
\begin{definition}[Call-by-value Translation]\label{def:mapvn}\mbox{}
The translation map $ \mapvn{\;\cdot\;}$ from $\LOP$ to $\LPplus$ is recursively defined as follows:
\begin{align*}
\mapvn{\varone}&=\lambda \svarfive.\app{\svarfive}{\varone}; &
\mapvn{\termone\termtwo}&=\lambda\svarfive.(\mapvn{\termone}(\lambda\svarone.\mapvn{\termtwo}(\lambda\svartwo.\svarone\svartwo\svarfive)));\\
\mapvn{\abstr{\varone}{\termone}}&=\lambda\svarfive.\app{\svarfive}{\lambda x.\mapvn{\termone}};&
\mapvn{\ps{\termone}{\termtwo}}&=\lambda\svarfive. 
   \mapvn{\termone}(\abstr{\svarone}{\mapvn{\termtwo}}(\abstr{\svartwo}{\app{(\ps{(\abstr{\svarthree}{\app{\svarthree}{\svarone}})}
   {(\abstr{\svarthree}{\app{\svarthree}{\svartwo}})})}{\svarfive}})).
\end{align*}
\end{definition}
As expected, call-by-value is simulated by way of so-called continuations.
As a consequence, (call-by-value) reduction on $\termone$ is \emph{not} simulated
simply reducing (in call-by-name), $\mapvn{\termone}$, but by feeding it with
the identity continuation $\abstr{\varone}{\varone}$. Furthermore, we do not 
obtain this way the same value(s) as the one(s) we would obtain by evaluating
$\termone$, but something related to that by a function $\Psi$, 
which sends values in $\LOP$ into values in $\LPplus$ as follows: 
\begin{align*}
\Psi(x)&=x; &\Psi(\abstr{\varone}{\termone})&=\abstr{\varone}{\mapvn{\termone}}.
\end{align*}
Clearly, $\Psi$ can be naturally extended to a map on distributions (of values). 

The rest of this section is devoted to showing
the following theorem, which retraces in $\LOP$ the classic result from~\cite{Plotkin75}:  
\begin{theorem}[Simulation]\label{th:simVbyN}
For every $\termone $, $\Psi(\Ssemv{\termone})=\Ssemn{\mapvn{\termone}(\idt)}$.
\end{theorem}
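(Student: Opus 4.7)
The plan is to pass to big-step semantics, which is more amenable to compositional reasoning. By Theorem~\ref{th:cbvSB} and its call-by-name analogue (proved in Section~\ref{sec:cbnSvsB}), it suffices to show $\Psi(\Bsemv{\termone}) = \Bsemn{\mapvn{\termone}(\idt)}$. Since $\Psi$ is injective on values we have $\Psi(\distone)(\Psi(\valone)) = \distone(\valone)$, and one easily checks that $\Bsemn{\idt\,\Psi(\valone)} = \{\Psi(\valone)^1\}$. The desired equality is therefore the $\ctone = \idt$ instance of the stronger, continuation-indexed statement
\[
\Bsemn{\mapvn{\termone}(\ctone)} = \sum_{\valone \in \supp{\Bsemv{\termone}}} \Bsemv{\termone}(\valone) \cdot \Bsemn{\ctone\,\Psi(\valone)},
\]
which I aim to prove for every $\termone \in \LOP$ and every continuation value $\ctone$ in $\LPplus$.

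The continuation lemma is proved in two directions, following the pattern of Section~\ref{sec:cbvSvsB}. First I would establish an auxiliary substitution lemma $\mapvn{\subst{\termone}{\varone}{\valone}} = \subst{\mapvn{\termone}}{\varone}{\Psi(\valone)}$, proved by routine induction on the structure of $\termone$. The $(\geq)$ direction then goes by induction on a derivation $\ssemv{\termone}{\distone}$ (invoking Theorem~\ref{th:cbvSB}), unfolding each CbV step through the administrative CbN reductions baked into the translation: for a value, $\mapvn{\valone}(\ctone)$ CbN-reduces in a few steps to $\ctone\,\Psi(\valone)$; for an application $\app{\termtwo}{\termthree}$, the CPS structure forces CbN to evaluate first $\mapvn{\termtwo}$ to some $\abstr{\varone}{\mapvn{\termfour}}$, then $\mapvn{\termthree}$ to some $\Psi(\valone)$, and finally to $\beta$-reduce, yielding $\mapvn{\subst{\termfour}{\varone}{\valone}}(\ctone)$ by the substitution lemma---this exactly mirrors the CbV evaluation order; for a sum, the inner $\ps{(\abstr{\svarthree}{\app{\svarthree}{\svarone}})}{(\abstr{\svarthree}{\app{\svarthree}{\svartwo}})}$ triggers the probabilistic choice \emph{only after} both branches have been evaluated to values, which matches the call-by-value rule $\ps{\valone}{\valtwo} \rv \valone, \valtwo$. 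The $(\leq)$ direction is proved by coinduction on the CbN big-step relation: the set of judgments $\bsemn{\mapvn{\termone}(\ctone)}{\distone}$ with $\distone$ at least the right-hand side above is shown to be consistent with respect to the CbN big-step inference operator, hence included in its coinductive interpretation.

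The principal obstacle is the sum case in the $(\geq)$ direction. The CbV rule $\bsv$ produces the weighted combination $\tfrac{1}{2}\cdot\distone\cdot\sumd{\disttwo}+\tfrac{1}{2}\cdot\disttwo\cdot\sumd{\distone}$, in which the normalization factors $\sumd{\disttwo}$ and $\sumd{\distone}$ reflect the fact that divergence of either branch collapses the whole. One must verify that the CbN big-step semantics of $\mapvn{\ps{\termtwo}{\termthree}}(\ctone)$, which evaluates $\mapvn{\termtwo}$ and then $\mapvn{\termthree}$ before performing the choice, reproduces exactly those normalization factors---in particular, that a branch diverging in CbV translates to a divergent CbN computation nested inside the continuation, and that this divergence is correctly apportioned between the two summands. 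A secondary subtlety is that the coinductive $(\leq)$ step cannot afford to overshoot on diverging terms: the consistent set must be chosen so as not to overestimate the distribution at any subterm, which is exactly why taking $\inf$ in Definition~\ref{def:bsv} (and its CbN analogue) is essential.
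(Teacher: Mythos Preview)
Your route is genuinely different from the paper's. The paper stays entirely at the small-step level and introduces Plotkin's infix operator ``$:$'' (Definition~\ref{def:infix1}), which records the result of performing all administrative reductions on $\mapvn{\termone}\ctone$. Lemma~\ref{lemma:redK1} shows $\mapvn{\termone}\ctone\rn^*\adm{\termone}{\ctone}$; Lemma~\ref{lemma:auxredK} shows that a single CbV step $\termone\rv\mul{\termtwo}$ is simulated by CbN steps from $\adm{\termone}{\ctone}$ to the $\adm{\termtwo_i}{\ctone}$; Lemmas~\ref{lemma:redK2} and~\ref{lemma:redK3} lift this to distributions by induction on the small-step derivations. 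No big-step semantics is used at all. Your continuation-indexed big-step identity is a reasonable alternative target, and the calculation you sketch for the sum case (with the normalization factors $\sumd{\distone},\sumd{\disttwo}$ emerging from the nested $\sum_V\sum_W$) is exactly right. The advantage of your formulation is that it is compositional in $\termone$; the advantage of the paper's is that the step-by-step simulation via ``$:$'' avoids any coinductive bookkeeping on the CbN side.

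There are, however, two genuine gaps in your sketch. First, the $(\geq)$ direction: you say ``induction on a derivation $\ssemv{\termone}{\distone}$'' but then analyze cases by the \emph{shape} of $\termone$ (value, application, sum). A small-step derivation has cases $\sev$, $\svv$, $\smv$, and in the $\smv$ case the premise is about the one-step reducts $\termtwo_i$, not about the syntactic subterms of $\termone$. What your case analysis actually matches is induction on an \emph{inductive big-step} derivation $\ibsemv{\termone}{\distone}$, whose premises for an application are precisely judgments about $\termtwo$, $\termthree$, and the $\subst{\termfour}{\varone}{\valone}$. You should either switch to that, or adopt the paper's route of simulating individual $\rv$-steps. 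Second, the $(\leq)$ coinduction: the consistent set you name contains only judgments of the form $\bsemn{\mapvn{\termone}(\ctone)}{\ldots}$, but the CbN big-step rule $\ban$ unfolds $\mapvn{\termtwo\termthree}(\ctone)$ through intermediate terms such as $\ctone'\Psi(\valone)$ with $\ctone'=\abstr{\svarone}{\mapvn{\termthree}(\abstr{\svartwo}{\svarone\svartwo\ctone})}$, which are not of that form until one or two further unfolds. Your set is therefore not $\opinf{\isone}$-consistent as stated; you must enlarge it to include these administrative intermediates (or argue coinduction up to deterministic $\rn$-steps). Both issues are repairable, but they are not cosmetic.
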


Let us get back to the proof. The following fact and the following substitution lemma are necessary.
\begin{fact}
$(\ps{\termone}{\termtwo})\{\valone/\varone\}= \ps{\termone\{\valone/\varone\}}{\termtwo\{\valone/\varone\}}$.
\end{fact}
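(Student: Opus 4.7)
The plan is to observe that this is essentially a definitional fact about capture-avoiding substitution on $\LOP$-terms. Although the definition of $\subst{\cdot}{\varone}{\valone}$ is not spelled out explicitly in the paper, it is the standard one, defined by structural induction on terms. In particular, since the grammar of $\LOP$ includes the clause $\ps{\termone}{\termtwo}$ as a term constructor (see Definition 1, item 4), the corresponding clause of the substitution function must propagate the substitution to both immediate subterms, giving exactly
\[
(\ps{\termone}{\termtwo})\{\valone/\varone\} \;=\; \ps{\termone\{\valone/\varone\}}{\termtwo\{\valone/\varone\}}.
\]

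So the proof is simply a one-line appeal to the definition of substitution on the $\oplus$-case. No induction on the structure of $\termone$ and $\termtwo$ is needed, because the statement concerns only the outermost constructor: $\oplus$ neither binds variables nor interacts with capture-avoidance, so the standard homomorphic clause applies uniformly and no freshness side-condition intervenes.

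The only (very mild) obstacle is that the paper did not display the defining clauses of substitution in full; a reader who wants a self-contained justification must either be told ``by the standard inductive definition of capture-avoiding substitution, extended to $\oplus$ by the homomorphic clause'' or be given that clause explicitly. I would therefore present the proof as a single sentence stating precisely this, and leave it at that; no calculation or case analysis is required.
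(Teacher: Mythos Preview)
Your proposal is correct and matches the paper's treatment: the paper states this as a \emph{Fact} without proof, since it is immediate from the standard inductive definition of capture-avoiding substitution applied to the $\oplus$ constructor. Your explanation that $\oplus$ binds no variables and hence the homomorphic clause applies directly is exactly the right justification.
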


Lemma~\ref{lemma:subst} extends Plotkin's substitution lemma (see ~\cite{Plotkin75}, Lemma 1, page 149)
\begin{lemma}[Substitution]\label{lemma:subst}
$\mapvn{\termone}\{\Psi(\valone)/\varone\}= \mapvn{\termone\{\valone/\varone\}}$.
\end{lemma}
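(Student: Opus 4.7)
The plan is to proceed by structural induction on $\termone$, relying crucially on the fact that the continuation variables $\svarone,\svartwo,\svarthree,\svarfive \in \setvarg$ used in the translation are drawn from an alphabet disjoint from $\setvar$, so no capture issues arise when substituting $\Psi(\valone)$ (whose free variables all lie in $\setvar$) under the continuation binders introduced by $\mapvn{\cdot}$.

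First I would handle the two variable cases. If $\termone = \varone$, then $\mapvn{\varone}\{\Psi(\valone)/\varone\} = \lambda\svarfive.\svarfive\,\Psi(\valone)$; on the other side, $\varone\{\valone/\varone\} = \valone$, and a quick case split on the shape of the value $\valone$ (either $\valone = \vartwo$ for some variable, or $\valone = \abstr{\vartwo}{\termtwo}$) shows that $\mapvn{\valone}$ is exactly $\lambda\svarfive.\svarfive\,\Psi(\valone)$, matching the left-hand side by the definition of $\Psi$. If $\termone = \vartwo \neq \varone$, both sides reduce to $\mapvn{\vartwo}$ trivially.

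Next I would dispatch the compound cases. For $\termone = \abstr{\vartwo}{\termtwo}$ (taking $\vartwo \neq \varone$ and $\vartwo \notin \FV{\valone}$ by $\alpha$-conversion), the translation gives $\lambda\svarfive.\svarfive(\lambda\vartwo.\mapvn{\termtwo})$; substitution commutes past the binders, and the inductive hypothesis on $\termtwo$ closes the case. For $\termone = \termtwo\termthree$, unfolding the definition gives a term in which $\varone$ occurs only inside the two subterms $\mapvn{\termtwo}$ and $\mapvn{\termtwo}$; the continuation binders $\svarone,\svartwo,\svarfive$ are fresh, so substitution passes under them cleanly and the inductive hypotheses applied to $\termtwo$ and $\termthree$ yield the result. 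The case $\termone = \ps{\termtwo}{\termthree}$ is entirely analogous: by the preceding fact the substitution distributes over $\oplus$, and the translation's use of $\svarone,\svartwo,\svarthree,\svarfive$ keeps them out of the way of $\varone$, so the two inductive hypotheses combine to give $\mapvn{\ps{\termtwo}{\termthree}\{\valone/\varone\}}$.

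I expect no genuinely hard step here; this is a standard CPS substitution lemma, essentially Plotkin's original argument extended with a single new clause for $\oplus$. The only point demanding a little care is bookkeeping on fresh names: one must verify that the continuation variables introduced at each translation clause are chosen outside $\FV{\Psi(\valone)} \cup \{\varone\}$, which is immediate because $\setvarg$ and $\setvar$ are disjoint and $\Psi(\valone)$ has no continuation variables. With that observation, every clause in the induction becomes a direct syntactic computation.
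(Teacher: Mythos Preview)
Your proposal is correct and follows essentially the same approach as the paper: a straightforward induction on the structure (the paper says ``size'') of $\termone$, extending Plotkin's original substitution lemma with the new $\oplus$ clause. The paper in fact only spells out the sum case and declares the rest ``very similar'', so your treatment is more detailed than the paper's own; aside from a harmless typo (you wrote $\mapvn{\termtwo}$ twice in the application case where one should be $\mapvn{\termthree}$), there is nothing to correct.
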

\begin{proof}
By induction on the size of $\termone$. An interesting case:
\begin{varitemize}
\item 
  If $\termone$ is $\ps{\termtwo}{\termthree}$, then 
  \begin{align*}
    \mapvn{\termone}\{\Psi(\valone)/\varone\}&=\lambda\epsilon. \mapvn{\termtwo}(\lambda\alpha. 
          \mapvn{\termthree}(\abstr{\svartwo}{(\app{(\ps{(\abstr{\svarthree}{\app{\svarthree}{\svarone}})}
          {(\abstr{\svarthree}{\app{\svarthree}{\svartwo}})})}{\svarfive})})\{\Psi(\valone)/\varone\}\\ 
       &=\lambda\epsilon.\mapvn{\termtwo}\{\Psi(\valone)/\varone\}(\lambda\alpha. \mapvn{\termthree}\{\Psi(\valone)/\varone\}
          (\abstr{\svartwo}{(\app{(\ps{(\abstr{\svarthree}{\app{\svarthree}{\svarone}})}
          {(\abstr{\svarthree}{\app{\svarthree}{\svartwo}})})}{\svarfive})})\\
       &=\lambda\epsilon.\mapvn{\termtwo\{\Psi(\valone)/\varone\}}(\lambda\alpha. \mapvn{\termthree\{\Psi(\valone)/\varone\}}
          (\abstr{\svartwo}{(\app{(\ps{(\abstr{\svarthree}{\app{\svarthree}{\svarone}})}
          {(\abstr{\svarthree}{\app{\svarthree}{\svartwo}})})}{\svarfive})})\\
       &=\mapvn{\ps{\termtwo\{\Psi(\valone)/\varone\}}{\termthree\{\Psi(\valone)/\varone\}}}=\mapvn{\ps{\termtwo}{\termthree}\{\Psi(\valone)/\varone\}}.
  \end{align*}
\end{varitemize}
\item Other cases are very similar to the previous one.
This concludes the proof.
\end{proof}
We define now the suitable extension of the infix operator ``$:$'' introduced by Plotkin in \cite{Plotkin75}.

\begin{definition}[Infix operator ``:" for $\mapvn{\cdot}$]\label{def:infix1}\mbox{}
The infix operator ``:'' for the map $\mapvn{\cdot}$ is defined as follows:
\begin{align*}
  \valone:\ctone&=\ctone\Psi(\valone)\\
  \app{\termthree}{\termfour}:\ctone&=\termthree:(\lambda\alpha. \mapvn{\termfour}(\lambda\beta.\alpha\beta \ctone))\mbox{ if }\termthree\notin\val\\
  \app{\valone}{\termthree}:\ctone&=\termthree:((\lambda\beta.\Psi(\valone)\beta \ctone))\mbox{ if }\termthree\notin\val\\
  \app{\valone}{\valtwo}:\ctone&=\Psi(\valone)\Psi(\valtwo) \ctone\\
  \termthree\oplus_{}\termfour : \ctone&= \termthree:(\lambda\alpha. \mapvn{\termfour}
    (\abstr{\svartwo}{(\app{(\ps{(\abstr{\svarthree}{\app{\svarthree}{\svarone}})}
    {(\abstr{\svarthree}{\app{\svarthree}{\svartwo}})})}{\ctone})}))
    \mbox{ if } \termthree\notin\val\\
  \valone\oplus_{}\termthree : \ctone&= \termthree:
    (\abstr{\svartwo}{(\app{(\ps{(\abstr{\svarthree}{\app{\svarthree}{\Psi(\valone)}})}
    {(\abstr{\svarthree}{\app{\svarthree}{\svartwo}})})}{\ctone})})
    \mbox{ if } \termthree\notin\val\\
  \valone\oplus_{}\valtwo : \ctone&=\app{(\ps{(\abstr{\svarthree}{\app{\svarthree}{\Psi(\valone)}})}
    {(\abstr{\svarthree}{\app{\svarthree}{\Psi(\valtwo)}})})}{\ctone}
\end{align*}
\end{definition}
The operator ``:'' can be naturally extended to distributions of values. In this case we will use the notation $\distone: \ctone$.

The following lemmas exploit operator ``$:$'' and give an important intermediate results in order to prove Theorem~\ref{th:simVbyN}. 
As previously declared, Lemma~\ref{lemma:redK1} shows that $\adm{\termone}{\ctone}$ is the result of the computation from 
$\mapvn{\termone}\ctone$ involving all structural reductions:
\begin{lemma}\label{lemma:redK1}
For all $\termone\in\LOP$, $\mapvn{\termone}\ctone\rn^{*}\termone:\ctone$.
\end{lemma}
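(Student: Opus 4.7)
The plan is to proceed by structural induction on $\termone$, in the same style as Plotkin's original argument, with the application and sum cases split further according to whether the immediate subterms are values. Since $\rn$ includes $\beta$ reduction in head position of applications (and not inside $\lambda$-abstractions), the reductions we need are all outermost $\beta$-steps feeding continuations into $\lambda$-abstractions, so each step is legal in call-by-name.

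The base cases are immediate from the definitions: if $\termone = \varone$, then $\mapvn{\varone}\ctone = (\abstr{\svarfive}{\app{\svarfive}{\varone}})\ctone \rn \app{\ctone}{\varone} = \ctone\,\Psi(\varone) = \varone : \ctone$; similarly if $\termone = \abstr{\varone}{\termtwo}$, one step of $\rn$ yields $\app{\ctone}{(\abstr{\varone}{\mapvn{\termtwo}})} = \ctone\,\Psi(\abstr{\varone}{\termtwo}) = \abstr{\varone}{\termtwo} : \ctone$. For the application case $\termone = \termthree\termfour$, I would first do the outermost $\beta$-step in $\mapvn{\termthree\termfour}\ctone$, producing $\mapvn{\termthree}(\abstr{\svarone}{\mapvn{\termfour}(\abstr{\svartwo}{\svarone\svartwo\ctone})})$, and then split on whether $\termthree$ is a value. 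If $\termthree \notin \val$, applying the induction hypothesis to $\termthree$ with continuation $\ctone' = \abstr{\svarone}{\mapvn{\termfour}(\abstr{\svartwo}{\svarone\svartwo\ctone})}$ gives exactly $\termthree : \ctone' = \termthree\termfour : \ctone$ by the second clause of Definition~\ref{def:infix1}. If $\termthree = \valone$ is a value, one further $\rn$-step absorbs it into the continuation and we again split on whether $\termfour$ is a value, closing either by a final $\beta$-step (both-values subcase) or by invoking the induction hypothesis on $\termfour$.

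The sum case $\termone = \ps{\termthree}{\termfour}$ is the truly new one compared to Plotkin. After one outermost $\beta$, $\mapvn{\ps{\termthree}{\termfour}}\ctone$ reduces to $\mapvn{\termthree}(\abstr{\svarone}{\mapvn{\termfour}(\abstr{\svartwo}{\app{(\ps{(\abstr{\svarthree}{\app{\svarthree}{\svarone}})}{(\abstr{\svarthree}{\app{\svarthree}{\svartwo}})})}{\ctone}})})$. I would then mirror the application analysis: if $\termthree \notin \val$, the induction hypothesis applied to $\termthree$ with the outer $\lambda$ as continuation delivers exactly $\ps{\termthree}{\termfour} : \ctone$ by the corresponding clause of Definition~\ref{def:infix1}; if $\termthree = \valone$, one more $\rn$-step substitutes $\Psi(\valone)$ (using Lemma~\ref{lemma:subst}) and we repeat on $\termfour$; if both are values, a last $\rn$-step produces the explicit $\ps{}{}$-redex in the continuation, matching the last clause of the definition.

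The main obstacle I expect is purely bookkeeping: the continuations $\ctone'$ built at each recursion level have to match exactly the continuations appearing in the clauses of Definition~\ref{def:infix1}, and the substitution of a value into the continuation after a $\beta$-step is where Lemma~\ref{lemma:subst} is needed to turn $\mapvn{\termtwo}\{\Psi(\valone)/\varone\}$ into $\mapvn{\termtwo\{\valone/\varone\}}$, keeping the inductive invariant in shape. Provided one is careful about this match, the whole argument is a routine induction with a bounded number of $\rn$-steps between each invocation of the induction hypothesis.
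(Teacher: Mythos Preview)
Your proposal is correct and follows essentially the same route as the paper: structural induction on $\termone$, with the application and sum cases each split into the three subcases dictated by which immediate components are values, and the induction hypothesis applied to the strict subterm with the appropriately built continuation. One small remark: your invocation of Lemma~\ref{lemma:subst} is not actually needed here, since the $\beta$-steps in this argument substitute terms for \emph{continuation} variables $\svarfive,\svarone,\svartwo$, which by construction do not occur free inside any subterm of the form $\mapvn{\termtwo}$; the substitution lemma is used elsewhere (e.g.\ in Lemma~\ref{lemma:auxredK}) where genuine program variables are replaced.
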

\begin{proof}
By induction on $\termone$ and on the definition of ``$:$''. Some interesting cases:
\begin{varitemize}
\item 
  If $\termone$ is a value, then 
  $$
  \mapvn{\termone}\ctone=\app{(\lambda\epsilon.\epsilon\Psi(\termone))}{\ctone}\rn\ctone\Psi(\termone)
   =\termone:\ctone.
  $$
\item
  If $\termone$ is $\ps{\termthree}{\termsix}$. There are three subcases:
  \begin{varenumerate}
     \item 
       $\termthree\notin\val$. Then
         \begin{align*}
           \mapvn{\ps{\termthree}{\termsix}}\ctone&=(\lambda\epsilon.\mapvn{\termthree}(\lambda\alpha. 
              \mapvn{\termsix}(\abstr{\svartwo}{(\app{(\ps{(\abstr{\svarthree}{\app{\svarthree}{\svarone}})}
              {(\abstr{\svarthree}{\app{\svarthree}{\svartwo}})})}{\svarfive})})))\ctone\\
              &\rn\mapvn{\termthree}(\lambda\alpha. 
              \mapvn{\termsix}(\abstr{\svartwo}{(\app{(\ps{(\abstr{\svarthree}{\app{\svarthree}{\svarone}})}
              {(\abstr{\svarthree}{\app{\svarthree}{\svartwo}})})}{\ctone})}))\\
              &\rn^*\termthree:(\lambda\alpha. 
              \mapvn{\termsix}(\abstr{\svartwo}{(\app{(\ps{(\abstr{\svarthree}{\app{\svarthree}{\svarone}})}
              {(\abstr{\svarthree}{\app{\svarthree}{\svartwo}})})}{\ctone})}))\\
              &= \termone:\ctone
         \end{align*}
     \item 
       $\termthree\in\val$ and $\termsix\notin\val$. Then:
       \begin{align*}
         \mapvn{\ps{\termthree}{\termsix}}\ctone&=(\lambda\svarfive.\mapvn{\termthree}(\lambda\alpha. 
            \mapvn{\termsix}(\abstr{\svartwo}{(\app{(\ps{(\abstr{\svarthree}{\app{\svarthree}{\svarone}})}
            {(\abstr{\svarthree}{\app{\svarthree}{\svartwo}})})}{\svarfive})})))\ctone\\
         &\rn \mapvn{\termthree}(\lambda\alpha. 
            \mapvn{\termsix}(\abstr{\svartwo}{(\app{(\ps{(\abstr{\svarthree}{\app{\svarthree}{\svarone}})}
            {(\abstr{\svarthree}{\app{\svarthree}{\svartwo}})})}{\ctone})}))\\
         &\rn^* \termthree:(\lambda\alpha. 
            \mapvn{\termsix}(\abstr{\svartwo}{(\app{(\ps{(\abstr{\svarthree}{\app{\svarthree}{\svarone}})}
            {(\abstr{\svarthree}{\app{\svarthree}{\svartwo}})})}{\ctone})}))\\
         &= (\lambda\svarone. 
            \mapvn{\termsix}(\abstr{\svartwo}{(\app{(\ps{(\abstr{\svarthree}{\app{\svarthree}{\svarone}})}
            {(\abstr{\svarthree}{\app{\svarthree}{\svartwo}})})}{\ctone})}))\Psi(\termthree)\\
         &\rn \mapvn{\termsix}(\abstr{\svartwo}{(\app{(\ps{(\abstr{\svarthree}{\app{\svarthree}{\Psi(\termthree)}})}
            {(\abstr{\svarthree}{\app{\svarthree}{\svartwo}})})}{\ctone})})\\
         &\rn^* \termsix:(\abstr{\svartwo}{(\app{(\ps{(\abstr{\svarthree}{\app{\svarthree}{\Psi(\termthree)}})}
            {(\abstr{\svarthree}{\app{\svarthree}{\svartwo}})})}{\ctone})})= \termone:\ctone.
       \end{align*}
     \item 
       $\termthree, \termsix\in\val$. Then:
       \begin{align*}
         (\mapvn{\ps{\termthree}{\termsix}})\ctone&=\lambda\epsilon.\mapvn{\termthree}(\lambda\alpha. \mapvn{\termsix}(\lambda\beta.(\ps{\epsilon\alpha}{\epsilon\beta})))\ctone\\
         &\rn \mapvn{\termthree}(\lambda\alpha. \mapvn{\termsix}(\lambda\beta.(\ps{\ctone\alpha}{\ctone\beta})))\\
         &\rn{\termthree}:(\lambda\alpha. \mapvn{\termsix}(\lambda\beta.(\ps{\ctone\alpha}{\ctone\beta})))\\
         &=(\lambda\alpha. \mapvn{\termsix}(\lambda\beta.(\ps{\ctone\alpha}{\ctone\beta})))(\Psi(\termthree))\\
         &\rn \mapvn{\termsix}(\lambda\beta.(\ps{\ctone\Psi(\termthree)}{\ctone\beta})))\\
         &\rn\termsix:(\lambda\beta.(\ps{\ctone\Psi(\termthree)}{\ctone\beta})))=\\
         & (\lambda\beta.(\ps{\ctone\Psi(\termthree)}{\ctone\beta})))(\Psi(\termsix))\\
         &\rn(\ps{\ctone\Psi(\termthree)}{\ctone\Psi(\termsix)})=\ps{\termthree}{\termsix}:\ctone.
        \end{align*}
    \end{varenumerate}
\end{varitemize}
This concludes the proof.
\end{proof}

As a technical tool, we here need a generalization of the small step semantic relation $\ssemn{\cdot}{\cdot}$,
that we denote as $\vssemn{\cdot}{\cdot}$. The new relation is defined inductively as $\ssemn{\cdot}{\cdot}$, but  
the first two rules are replaced by the two rules
$$
\zrule
    {
      \vssemn{\termone}{\{\termone^1\}}
    }
    {}
$$
$$
\zrule
    {
      \vssemn{\termone}{\emdist}
    }
    {}
$$
This means, in particular, that the relation above maps terms to distribution over terms, rather than
distributions over values.

Of course, relations $\ssemn{\cdot}{\cdot}$ and  $\vssemn{\cdot}{\cdot}$ are strongly related, as shown in the following lemma:
\begin{lemma}\label{lemma:varvsnorm}
If $\vssemn{\termone}{\distone}$, where $\supp{\distone}\subseteq\val$, then
$\ssemn{\termone}{\distone}$. Conversely, if $\ssemn{\termone}{\distone}$
then $\vssemn{\termone}{\distone}$.
\end{lemma}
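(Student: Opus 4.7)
The plan is to handle the two directions separately, each by induction on the derivation, exploiting the fact that $\vssemn{\cdot}{\cdot}$ and $\ssemn{\cdot}{\cdot}$ differ only in their axiom for the singleton distribution: $\vssemn{}{}$ allows $\vssemn{\termone}{\{\termone^1\}}$ for arbitrary $\termone$, while $\ssemn{}{}$ only permits it when $\termone\in\val$. Both systems share the empty-distribution axiom and the step rule indexed by $\rn$. Consequently, any $\ssemn{}{}$-derivation is literally a $\vssemn{}{}$-derivation, while the reverse embedding only requires verifying that the offending axiom is never applied to a non-value.

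For the converse direction, I would induct on the structure of a derivation of $\ssemn{\termone}{\distone}$. In the axiom case $\ssemn{\termone}{\emdist}$ the corresponding $\vssemn{\termone}{\emdist}$ is immediate from the second axiom of $\vssemn{}{}$. The case $\ssemn{\valone}{\{\valone^1\}}$ is a specialisation of the first axiom of $\vssemn{}{}$ at the term $\valone$. The step rule is syntactically identical in the two systems, so the induction hypothesis gives $\vssemn{\termtwo_i}{\distone_i}$ from $\ssemn{\termtwo_i}{\distone_i}$ and we just reapply the rule.

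For the forward direction, I would induct on the derivation of $\vssemn{\termone}{\distone}$ with the standing hypothesis $\supp{\distone}\subseteq\val$. The empty case is trivial. In the axiom case $\vssemn{\termone}{\{\termone^1\}}$, the hypothesis forces $\termone\in\supp{\distone}\subseteq\val$, so the $\ssemn{}{}$-axiom for values applies and yields $\ssemn{\termone}{\{\termone^1\}}$. The interesting case is the step rule, where $\termone\rn\termtwo_1,\ldots,\termtwo_n$ and $\distone=\sum_{i=1}^{n}\frac{1}{n}\distone_i$ with $\vssemn{\termtwo_i}{\distone_i}$ in the premises. Here one must first propagate the support assumption to the premises, i.e. show $\supp{\distone_i}\subseteq\val$ for each $i$. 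This follows because each $\distone_i$ is pointwise non-negative and, for every $t\notin\val$, $0=\distone(t)=\sum_{i=1}^{n}\frac{1}{n}\distone_i(t)$ forces $\distone_i(t)=0$. With this in hand, the induction hypothesis gives $\ssemn{\termtwo_i}{\distone_i}$ for each $i$, and rule $\smn$ assembles them into $\ssemn{\termone}{\distone}$.

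The only real subtlety is this support-propagation observation in the step case; everything else is a rule-by-rule rewriting. I expect no serious obstacle, since $\vssemn{}{}$ is a conservative relaxation of $\ssemn{}{}$ whose extra expressive power (attributing unit mass to a non-value) is precisely what the hypothesis $\supp{\distone}\subseteq\val$ rules out.
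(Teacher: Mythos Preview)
Your proposal is correct and follows exactly the approach the paper indicates (``simple inductions'' on the derivations). You have in fact supplied more detail than the paper does, including the support-propagation observation in the step case, which is the only point requiring any thought.
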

\begin{proof}
Simple inductions.
\end{proof}

Lemma~\ref{lemma:auxredK} is an auxiliary tool for the proof of Lemma~\ref{lemma:redK2}. In particular, these 
lemmas extend Plotkin's lemma which states that whenever a term $\termone$ reduces in call by value to a term 
$\termtwo$, then there exists a call-by-name computation from $\termone:\ctone$ to $\termtwo:\ctone$. In $\LOP$
 we have to deal with the non deterministic nature of our relations $\rv$ and $\rn$ (which map terms into sequence 
of terms), and with the fact that the evaluation of a term returns a probability distribution on values and not 
a single values.

\begin{lemma}\label{lemma:auxredK}
If $\termone\rv\termtwo_1,\ldots,\termtwo_\natone$ and 
$\ctone$ is a value, then there are terms $\termfour_1,\ldots,\termfour_\natone$ such that
$\adm{\termone}{\ctone}\rn^* \termfour_1,\ldots,\termfour_\natone $
and $\termfour_i\rn^*\adm{\termtwo_i}{\ctone}$ for every $i\in\{1,\ldots,\natone\}$.
\end{lemma}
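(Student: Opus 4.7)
The proof proceeds by induction on the derivation of $\termone\rv\termtwo_1,\ldots,\termtwo_\natone$ given by Definition~\ref{def:leftRedV}. There are two base cases and four inductive cases coming from the contextual closures on application and on sum. For the beta step $(\abstr{\varone}{\termthree})\valone\rv\subst{\termthree}{\varone}{\valone}$, unfolding Definition~\ref{def:infix1} gives $\adm{(\abstr{\varone}{\termthree})\valone}{\ctone}=(\abstr{\varone}{\mapvn{\termthree}})\Psi(\valone)\ctone$, which reduces in one $\rn$-step to $(\mapvn{\termthree}\{\Psi(\valone)/\varone\})\ctone=\mapvn{\subst{\termthree}{\varone}{\valone}}\ctone$ by Lemma~\ref{lemma:subst}; Lemma~\ref{lemma:redK1} then yields $\rn^*\adm{\subst{\termthree}{\varone}{\valone}}{\ctone}$, so the single witness $\termfour_1=\mapvn{\subst{\termthree}{\varone}{\valone}}\ctone$ works. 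For the base case $\ps{\valone}{\valtwo}\rv\valone,\valtwo$, Definition~\ref{def:infix1} gives $\adm{\ps{\valone}{\valtwo}}{\ctone}=(\ps{(\abstr{\svarthree}{\svarthree\Psi(\valone)})}{(\abstr{\svarthree}{\svarthree\Psi(\valtwo)})})\ctone$; one application of the $\oplus$ rule of $\rn$ under the application context produces the pair $\termfour_1=(\abstr{\svarthree}{\svarthree\Psi(\valone)})\ctone$ and $\termfour_2=(\abstr{\svarthree}{\svarthree\Psi(\valtwo)})\ctone$, each of which reduces in one further step to $\ctone\Psi(\valone_i)=\adm{\valone_i}{\ctone}$.

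For the inductive cases, consider the representative contextual rule $\termthree\termfour\rv\mul{\termsix}\termfour$ with $\termthree\rv\mul{\termsix}$; here $\termthree\notin\val$, so Definition~\ref{def:infix1} gives $\adm{\termthree\termfour}{\ctone}=\adm{\termthree}{\ctone'}$ where $\ctone'=\abstr{\svarone}{\mapvn{\termfour}(\abstr{\svartwo}{\svarone\svartwo\ctone})}$ is itself a value. The inductive hypothesis applied to $\ctone'$ yields $\termfour_1,\ldots,\termfour_n$ with $\adm{\termthree}{\ctone'}\rn^*\termfour_1,\ldots,\termfour_n$ and $\termfour_i\rn^*\adm{\termsix_i}{\ctone'}$. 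It remains to exhibit a $\rn^*$-path from $\adm{\termsix_i}{\ctone'}$ to $\adm{\termsix_i\termfour}{\ctone}$: when $\termsix_i\notin\val$ this is an equality by the clause $\termthree'\termfour:\ctone=\termthree':(\abstr{\svarone}{\mapvn{\termfour}(\abstr{\svartwo}{\svarone\svartwo\ctone})})$; when $\termsix_i=\valone$ is a value, $\adm{\valone}{\ctone'}=\ctone'\Psi(\valone)$ reduces in one step to $\mapvn{\termfour}(\abstr{\svartwo}{\Psi(\valone)\svartwo\ctone})$, and Lemma~\ref{lemma:redK1} pushes this to $\adm{\termfour}{\abstr{\svartwo}{\Psi(\valone)\svartwo\ctone}}=\adm{\valone\termfour}{\ctone}$ (splitting further on whether $\termfour$ is a value using Definition~\ref{def:infix1}). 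The remaining three contextual cases for $\valone\termone$, $\ps{\termone}{\termtwo}$, and $\ps{\valone}{\termone}$ are handled completely analogously, each time using that the continuation built around the inductive subterm is a value (so the IH can be invoked).

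The main obstacle is precisely this bridging step between $\adm{\termsix_i}{\ctone'}$ and $\adm{\termsix_i\termfour}{\ctone}$ (and its $\oplus$-analogues), because the definition of the infix operator $:$ branches on whether its first argument is a value or not, and a reduct $\termsix_i$ can switch category relative to its ancestor $\termthree$. The trick is to use Lemma~\ref{lemma:redK1} uniformly to re-absorb the administrative traffic into a single $\mapvn{\cdot}\ctone$ form whenever needed; once this is in place, each contextual case reduces to a direct unfolding of Definition~\ref{def:infix1} together with the $\rn^*$-path produced by the inductive hypothesis.
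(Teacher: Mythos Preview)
Your proof is correct and follows essentially the same approach as the paper's: an induction on the structure of $\termone$ (equivalently, on the derivation of $\termone\rv\mul{\termtwo}$), with the key idea that $\adm{\termone}{\ctone}$ unfolds to $\adm{\termthree}{\ctone'}$ for an inner redex $\termthree$ and a larger value-continuation $\ctone'$, so that the inductive hypothesis applies and one then bridges $\adm{\termsix_i}{\ctone'}$ back to $\adm{\termtwo_i}{\ctone}$ via Lemma~\ref{lemma:redK1}. If anything, you are more explicit than the paper, which only spells out the three $\oplus$ cases and leaves the application cases implicit; you also flag the case split on whether the reduct $\termsix_i$ is a value, which the paper handles somewhat tersely (writing an equality where a further $\rn$-step through Lemma~\ref{lemma:redK1} is actually needed).
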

\begin{proof}
By induction on the structure of $\termone$. Some interesting cases.
\begin{varitemize}
\item 
  If $\termone=\ps{\termthree}{\termsix}$ with $\termthree, \termsix\in\val$, 
  then $\termone\rv\termthree, \termsix $ and by definition
  \begin{align*}
    \adm{\termone}{\ctone}&=\app{(\ps{(\abstr{\svarthree}{\app{\svarthree}{\Psi(\termthree)}})}
    {(\abstr{\svarthree}{\app{\svarthree}{\Psi(\termsix)}})})}{\ctone}\\
    &\rn \app{(\abstr{\svarthree}{\app{\svarthree}{\Psi(\termthree)}})}{\ctone},
         \app{(\abstr{\svarthree}{\app{\svarthree}{\Psi(\termsix)}})}{\ctone}\\
    \app{(\abstr{\svarthree}{\app{\svarthree}{\Psi(\termthree)}})}{\ctone}&\rn \ctone\Psi(\termthree)=\termthree:\ctone\\
    \app{(\abstr{\svarthree}{\app{\svarthree}{\Psi(\termsix)}})}{\ctone}&\rn \ctone\Psi(\termsix)=\termsix:\ctone
  \end{align*}
\item 
  If  $\termone=\ps{\termthree}{\termsix}$ with $\termthree\notin\val$, 
  then $\termthree\rv\termseven_1,\ldots,\termseven_\natone$ and
  $\termtwo_i=\ps{\termseven_i}{\termsix}$.
  But by induction hypothesis, there are terms $\termfive_1,\ldots,\termfive_\natone$ such that
  \begin{align*}
    \adm{\termone}{\ctone}&=\adm{\termthree}{(\lambda\alpha. \mapvn{\termsix}
      (\abstr{\svartwo}{(\app{(\ps{(\abstr{\svarthree}{\app{\svarthree}{\svarone}})}
      {(\abstr{\svarthree}{\app{\svarthree}{\svartwo}})})}{\ctone})}))}=\adm{\termthree}{\cttwo}\\
      &\rn\termfive_1,\ldots,\termfive_\natone\\
    \termfive_i&\rn^*\adm{\termseven_i}{\cttwo}
  \end{align*}
  Notice that, on the other hand:
  $$
    \adm{\termseven_i}{\cttwo}=\adm{\termseven_i}{(\lambda\alpha. \mapvn{\termsix}
      (\abstr{\svartwo}{(\app{(\ps{(\abstr{\svarthree}{\app{\svarthree}{\svarone}})}
      {(\abstr{\svarthree}{\app{\svarthree}{\svartwo}})})}{\ctone})}))}\\
  $$
  Now, if $\termseven_i$ is a value, then
  \begin{align*}
    \adm{\termseven_i}{\cttwo}&=\app{\cttwo}{\Psi(\termseven_i)}\rn\mapvn{\termsix}
      (\abstr{\svartwo}{(\app{(\ps{(\abstr{\svarthree}{\app{\svarthree}{\Psi(\termseven_i)}})}
      {(\abstr{\svarthree}{\app{\svarthree}{\svartwo}})})}{\ctone})})\\
    &=\adm{\ps{\termseven_i}{\termsix}}{\ctone}
  \end{align*}
  If $\termseven_i$ is not a value, then $\adm{\termseven_i}{\cttwo}$ is itself
  $\adm{\ps{\termseven_i}{\termsix}}{\ctone}$.
\item 
  If  $\termone=\ps{\termthree}{\termsix}$ with $\termthree\in\val$ then 
  $\termsix\rv\termseven_1,\ldots,\termseven_\natone$ and
  $\termtwo_i=\ps{\termthree}{\termseven_i}$.
Similar to the previous cases.
\end{varitemize}
This concludes the proof.
\end{proof}

\begin{lemma}\label{lemma:redK2}
If $\ssemv{\termone}{\distone}$ and $\ctone$ closed value, then
$\vssemn{\adm{\termone}{\ctone}}{\adm{\distone}{\ctone}}$.
\end{lemma}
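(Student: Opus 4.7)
The plan is to proceed by induction on the structure of the derivation $\derone:\ssemv{\termone}{\distone}$. For the axiom rule $\sev$ we have $\distone=\emdist$, so $\adm{\distone}{\ctone}=\emdist$, and the extended rule $\vssemn{M}{\emdist}$ (available for every term $M$) closes the case. For the axiom rule $\svv$, $\termone$ is a value $\valone$ and $\distone=\{\valone^1\}$. By Definition~\ref{def:infix1} we have $\adm{\valone}{\ctone}=\ctone\Psi(\valone)$, and lifting ``:'' pointwise to distributions yields $\adm{\{\valone^1\}}{\ctone}=\{(\ctone\Psi(\valone))^1\}$; the other extended axiom $\vssemn{M}{\{M^1\}}$ then applies.

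The inductive case is the rule $\smv$, where $\termone\rv\termtwo_1,\ldots,\termtwo_n$, each $\ssemv{\termtwo_i}{\distone_i}$ holds, and $\distone=\sum_{i=1}^n\frac{1}{n}\distone_i$. The induction hypothesis gives $\vssemn{\adm{\termtwo_i}{\ctone}}{\adm{\distone_i}{\ctone}}$ for each $i$, and Lemma~\ref{lemma:auxredK} provides terms $\termfour_1,\ldots,\termfour_n$ with $\adm{\termone}{\ctone}\rn^*\termfour_1,\ldots,\termfour_n$ and $\termfour_i\rn^*\adm{\termtwo_i}{\ctone}$. I would then establish a short sublemma: whenever $\termone'\rn\termtwo'$ is a single-target reduction, the rule $\smn$ with $n=1$ derives $\vssemn{\termone'}{\distone'}$ from $\vssemn{\termtwo'}{\distone'}$, so $\vssemn{}{}$ is closed under inverse deterministic $\rn^*$. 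Applying this to each tail $\termfour_i\rn^*\adm{\termtwo_i}{\ctone}$ yields $\vssemn{\termfour_i}{\adm{\distone_i}{\ctone}}$; then a single use of $\smn$ at the branching step emitting the $n$ terms $\termfour_i$, combined with the sublemma for the deterministic prefix, produces $\vssemn{\adm{\termone}{\ctone}}{\sum_{i=1}^n\frac{1}{n}\adm{\distone_i}{\ctone}}$. Since ``:'' distributes over convex combinations by definition, this equals $\adm{\distone}{\ctone}$ and the case is complete.

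The main obstacle is justifying that the reduction $\adm{\termone}{\ctone}\rn^*\mul{\termfour}$ delivered by Lemma~\ref{lemma:auxredK} really decomposes as \emph{deterministic prefix, followed by exactly one branching step, followed by deterministic tails}. In the flat case $\termone=\ps{\valone}{\valtwo}$ this is transparent from the CPS definition, but in the recursive case $\termone=\ps{\termthree}{\termsix}$ with $\termthree\notin\val$ the branching step lives inside the CPS-image of $\termthree$ and is surrounded by administrative reductions introduced by the continuation. To make the bookkeeping rigorous one can either strengthen Lemma~\ref{lemma:auxredK} so that it explicitly records the position of the unique branching step, or prove an auxiliary lemma stating that deterministic call-by-name contexts preserve the uniform $1/n$ weighting. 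Either route keeps the inductive structure intact while ensuring that the probabilities attached to the $n$ targets of $\rv$ match those produced by tracing their CPS images through $\rn$.
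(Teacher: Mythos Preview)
Your proof is correct and follows the same inductive skeleton as the paper's: induction on the derivation of $\ssemv{\termone}{\distone}$, with Lemma~\ref{lemma:auxredK} supplying the bridge in the $\smv$ case. The paper's proof is terser than yours---it omits the $\svv$ case and the ``closed under inverse deterministic $\rn^*$'' sublemma, simply writing ``we can easily form a derivation''---but the content is identical.

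Your ``main obstacle'' is overstated. Because $\termone\rv\mul{\termtwo}$ always has $n\in\{1,2\}$, the shape of $\adm{\termone}{\ctone}\rn^*\termfour_1,\ldots,\termfour_n$ is forced: when $n=1$ it is a purely deterministic chain, and when $n=2$ any reduction tree from one term to two leaves contains exactly one binary branching node (each branch increases the leaf count by one). So the decomposition ``deterministic prefix, at most one branch, deterministic tails'' is automatic, and no strengthening of Lemma~\ref{lemma:auxredK} is needed. The paper silently relies on this; you have merely made explicit what it leaves implicit.
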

\begin{proof}
By induction on the complexity of a derivation for $\ssemv{\termone}{\distone}$:
\begin{varitemize}
\item
  If $\ssemv{\termone}{\emdist}$, then clearly
  $\vssemn{\adm{\termone}{\ctone}}{\adm{\emdist}{\ctone}}$,
  simply because $\adm{\emdist}{\ctone}=\emdist$.
\item
  If $\termone\rv\mul{\termtwo}$ and $\ssemv{\termtwo_i}{\disttwo_i}$,
  where $\distone=\sum_{i=1}^\natone\frac{1}{\natone}\disttwo_i$, then,
  by Lemma~\ref{lemma:auxredK},
  $\adm{\termone}{\ctone}\rn\termthree_1,\ldots,\termthree_\natone$ and
  $\termthree_i\rn^*\adm{\termtwo_i}{\ctone}$.
  By induction hypothesis,
  $\vssemn{\adm{\termtwo_i}{\ctone}}{\adm{\disttwo_i}{\ctone}}$ and
  so we can easily form a derivation for
  $\vssemn{\adm{\termone}{\ctone}}{\sum_{i=1}^\natone(\frac{1}{\natone}\adm{\disttwo_i}{\ctone})}$
  But clearly, 
  $$
  \sum_{i=1}^\natone\left(\frac{1}{\natone}\adm{\disttwo_i}{\ctone}\right)=
  \adm{\left(\sum_{i=1}^\natone\frac{1}{\natone}\disttwo_i\right)}{\ctone}
  $$
\end{varitemize}
This concludes the proof.
\end{proof}

Lemma~\ref{lemma:redK3} shows that  the call-by-value evaluation of a term $\termone$  dominates, in term of distributions, 
the call by-name evaluation  of the continuation  $\adm{{\termone}}{\idt}$, up to a suitable translation by function $\Psi$:

\begin{lemma}\label{lemma:redK3}
If $\ssemn{\adm{\termone}{\idt}}{\distone}$, then $\ssemv{\termone}{\disttwo}$, where
$\Psi(\disttwo)\geq\distone$.
\end{lemma}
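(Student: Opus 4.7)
The plan is to prove a strengthened statement, generalized to arbitrary closed value continuations $\ctone$, and then specialize to $\ctone=\idt$. The strengthened claim reads: whenever $\vssemn{\adm{\termone}{\ctone}}{\distone}$ with $\ctone$ a closed value, there exist a distribution $\disttwo$ with $\ssemv{\termone}{\disttwo}$ and, for each $\valone\in\supp{\disttwo}$, a distribution $\distone_{\valone}$ with $\vssemn{\app{\ctone}{\Psi(\valone)}}{\distone_{\valone}}$, such that
\[
\distone\ \leq\ \sum_{\valone\in\supp{\disttwo}}\disttwo(\valone)\cdot\distone_{\valone}.
\]
Specializing to $\ctone=\idt$, one has $\app{\idt}{\Psi(\valone)}\rn\Psi(\valone)$, so the only relevant value distribution is $\{\Psi(\valone)^{1}\}$, whence $\sum_{\valone}\disttwo(\valone)\cdot\{\Psi(\valone)^1\}=\Psi(\disttwo)$ and the conclusion $\Psi(\disttwo)\geq\distone$ follows. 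The use of $\vssemn{}{}$ rather than $\ssemn{}{}$ in the generalized statement is needed because, along the induction, intermediate terms such as $\adm{\termone}{\ctone}$ are not values but applications; by Lemma~\ref{lemma:varvsnorm} this is compatible with the original hypothesis in terms of $\ssemn{}{}$.

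The proof of the strengthened statement proceeds by induction on the structure of the derivation of $\vssemn{\adm{\termone}{\ctone}}{\distone}$. The rule producing $\emdist$ is handled by taking $\disttwo=\emdist$; the rule producing $\{(\adm{\termone}{\ctone})^1\}$ is handled by taking $\disttwo=\emdist$ and exploiting the fact that $\vssemn{\app{\ctone}{\Psi(\valone)}}{\emdist}$. For the inductive rule, which unfolds one step of CbN reduction from $\adm{\termone}{\ctone}$, we distinguish cases on the shape of $\termone$ (value, application $\app{\termthree}{\termsix}$, or choice $\ps{\termthree}{\termsix}$), and within each case on whether the immediate subterms are values, following the clauses of Definition~\ref{def:infix1}. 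In each subcase, we track the single administrative $\beta$-step (or $\oplus$-unfolding) that $\adm{\termone}{\ctone}$ performs, apply the induction hypothesis to the resulting subderivation(s) to obtain a CbV distribution $\disttwo^{\prime}$ for the appropriate subterm of $\termone$, and then assemble a CbV derivation for $\termone$ via rule $\smv$ applied to the corresponding leftmost CbV reduction $\termone\rv\mul{\termtwo}$ from Definition~\ref{def:leftRedV}. This is, in spirit, the reverse of Lemma~\ref{lemma:auxredK} and Lemma~\ref{lemma:redK2}: instead of translating CbV steps into CbN computations on the CPS image, we extract a CbV step from each ``essential'' CbN reduction performed on the translated term.

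The main obstacle is the choice case $\termone=\ps{\termthree}{\termsix}$ with $\termthree,\termsix\in\val$. Here $\adm{\termone}{\ctone}=\app{(\ps{(\abstr{\svarthree}{\app{\svarthree}{\Psi(\termthree)}})}{(\abstr{\svarthree}{\app{\svarthree}{\Psi(\termsix)}})})}{\ctone}$, and CbN reduction first produces the length-two sequence of the two branches, each of which $\beta$-reduces to $\app{\ctone}{\Psi(\termthree)}$ and $\app{\ctone}{\Psi(\termsix)}$ respectively. The resulting $\frac{1}{2}$-weighted sum of two subderivations on $\LPplus$ must be matched against a CbV derivation for $\ps{\termthree}{\termsix}$ obtained from the step $\ps{\termthree}{\termsix}\rv\termthree,\termsix$ and the two trivial premises $\ssemv{\termthree}{\{\termthree^1\}}$ and $\ssemv{\termsix}{\{\termsix^1\}}$, yielding $\disttwo=\frac{1}{2}\{\termthree^1\}+\frac{1}{2}\{\termsix^1\}$ and witnesses $\distone_{\termthree},\distone_{\termsix}$ from the inductive hypothesis. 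The inequality $\distone\leq\frac{1}{2}\distone_{\termthree}+\frac{1}{2}\distone_{\termsix}$ follows, but the bookkeeping of supports and coefficients, and the symmetric sub-subcases $\termthree\in\val,\termsix\notin\val$ and $\termthree\notin\val$ (which alternate administrative reductions with choice steps), require the most careful argument. The application subcases are analogous but simpler, since no splitting into two branches occurs: a single CbV step $\app{\termthree}{\termsix}\rv\app{\mul{\termseven}}{\termsix}$ (or its variants with the argument or the redex) matches a corresponding administrative-plus-essential chain of CbN reductions on $\adm{\app{\termthree}{\termsix}}{\ctone}$.
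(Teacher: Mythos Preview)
Your generalization to arbitrary closed value continuations $\ctone$ is sound and yields a strictly stronger statement, but it is not needed for this lemma and it makes the induction considerably heavier than the paper's argument. The paper stays with $\ctone=\idt$ throughout and proceeds by a straightforward induction on the derivation $\derone$ of $\ssemn{\adm{\termone}{\idt}}{\distone}$: when $\termone$ is not a value and $\distone\neq\emdist$, it takes the CbV step $\termone\rv\mul{\termtwo}$ and invokes Lemma~\ref{lemma:auxredK} (together with determinism of CbN head reduction) to conclude that $\derone$ contains strictly smaller subderivations $\dertwo_i:\ssemn{\adm{\termtwo_i}{\idt}}{\disttwo_i}$ with $\distone=\sum_i\frac{1}{n}\disttwo_i$. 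The induction hypothesis on each $\dertwo_i$ then gives $\ssemv{\termtwo_i}{\distthree_i}$ with $\Psi(\distthree_i)\geq\disttwo_i$, and a single application of $\smv$ yields $\ssemv{\termone}{\sum_i\frac{1}{n}\distthree_i}$ with $\Psi(\sum_i\frac{1}{n}\distthree_i)\geq\distone$. Because Lemma~\ref{lemma:auxredK} already keeps the continuation fixed, no generalization is required; your syntax-directed case analysis on $\termone$ is in effect re-deriving that lemma inside the proof.

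There is also one point where your induction, as written, is not obviously well-founded. In the recursive clauses of the colon operator (e.g.\ $\termone=\app{\termthree}{\termsix}$ with $\termthree\notin\val$, where by definition $\adm{\termone}{\ctone}=\adm{\termthree}{\cttwo}$ for a new continuation $\cttwo$), passing from the pair $(\termone,\ctone)$ to $(\termthree,\cttwo)$ does not change the term $\adm{\termone}{\ctone}$ at all, hence does not shrink the derivation; so ``induction on the structure of the derivation'' alone does not justify applying the hypothesis to $\termthree$. You need a secondary measure---for instance a lexicographic order on (size of the derivation, depth of the recursive unfolding of Definition~\ref{def:infix1}, i.e.\ length of the left spine of $\termone$)---to make this descent legitimate. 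Once that is added the argument goes through, but what you gain (a statement parametric in $\ctone$) is bought at the price of exactly the bookkeeping that the paper packages once and for all into Lemma~\ref{lemma:auxredK}.
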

\begin{proof}
The proof goes by induction on the complexity of a derivation $\derone$ of
$\ssemn{\adm{\termone}{\abstr{\varone}{\varone}}}{\distone}$. Some interesting cases:
\begin{varitemize}
\item If $\termone$ is a value $\valone$:
\begin{varitemize}
\item
  if $\ssemn{\adm{\valone}{\ide}}{\emdist}$, then $\disttwo=\emdist$ itself;
  \item
  if $\ssemn{\adm{\valone}{\ide}}{\distone}$ with $\distone\neq\emdist$,  then $\distone=\{\valone^{1}\}$.
\end{varitemize}
\item If $\termone$ is not a value:
\begin{varitemize}
\item if $\ssemn{\adm{\termone}{\ide}}{\emdist}$ the thesis follows trivially;
\item if $\ssemn{\adm{\termone}{\ide}}{\distone}$ with $\distone\neq\emdist$, let
  $\termtwo_1,\ldots\termtwo_\natone$ be such that
  $\termone\rv\mul{\termtwo}$. By Lemma~\ref{lemma:auxredK},
  there must be derivations $\dertwo_1,\ldots,\dertwo_n$ (all smaller than $\derone$)
  such that $\dertwo_i:\ssemn{\adm{\termtwo_i}{\idt}}{\disttwo_i}$ and
  $\distone=\sum_{i=1}^\natone\frac{1}{\natone}\disttwo_i$. By induction hypothesis,
  $\ssemv{\termtwo_i}{\distthree_i}$ where $\Psi(\distthree_i)\geq\disttwo_i$ and
  so we can form a derivation of $\ssemv{\termone}{\left(\sum_{i=1}^\natone\distthree_i\right)}$.
  But clearly,
  $$
  \Psi(\sum_{i=1}^\natone\distthree_i)=\sum_{i=1}^\natone\Psi(\distthree_i)\geq\sum_{i=1}^\natone\disttwo_i=\distone.
  $$
\end{varitemize}
\end{varitemize}
This concludes the proof.
\end{proof}

\noindent

By means of Lemma~\ref{lemma:redK1}, Lemma~\ref{lemma:redK2}, Lemma~\ref{lemma:varvsnorm} and 
Lemma~\ref{lemma:redK3}, it is possible to prove Theorem~\ref{th:simVbyN}.

\begin{proof}[Theorem~\ref{th:simVbyN}]
To prove that $\Psi(\Ssemv{\termone})=\Ssemn{\app{\mapvn{\termone}}{\idt}}$,
we proceed by showing the following:
\begin{varenumerate}
\item\label{point:vnsim1}
  If $\ssemv{\termone}{\distone}$, then $\ssemn{\app{\mapvn{\termone}}{\idt}}{\disttwo}$
  and $\disttwo\geq\Psi(\distone)$.
\item\label{point:vnsim2}
  If $\ssemn{\app{\mapvn{\termone}}{\idt}}{\distone}$, then $\ssemv{\termone}{\disttwo}$
  where $\Psi(\disttwo)\geq\distone$.
\end{varenumerate}
To prove point~\ref{point:vnsim1}, we observe that
if $\ssemv{\termone}{\distone}$, then both
$\mapvn{\termone}\ctone\rn^{*}\termone:\ctone$ (by Lemma~\ref{lemma:redK1}) and
$\ssemv{\adm{\termone}{\ctone}}{\adm{\distone}{\ctone}}$ (by Lemma~\ref{lemma:redK2} and
Lemma~\ref{lemma:varvsnorm}). Now, notice that, for every value $\valone$,
$$
\adm{\valone}{(\idt)}=\app{(\idt)}{\Psi(\valone)}\rv\Psi(\valone).
$$
As a consequence, it's clear that
$$
\ssemn{\adm{\mapvn{\termone}}{\idt}}{\distone}.
$$
Point~\ref{point:vnsim2} is nothing more than Lemma~\ref{lemma:redK3}.
This concludes the proof.
\end{proof}

\subsection{Simulating Call-By-Name by Call-By-Value}\label{sec:cbnbycbv}

In Section~\ref{sec:cbvbycbn} we proved the ``imperfect'' simulation of call-by-value by call-by name strategy. The inverse direction
is provable in a very similar way.

Let us define the extended language $\LOPplus$ as in Section~\ref{sec:cbvbycbn}.

\begin{definition}[Call-by-value Translation]\label{def:mapnv}\mbox{}
The translation map $ \mapnv{\;\cdot\;}$ from $\LOP$ to $\LPplus$ is recursively defined as follows:
\begin{align*}
\mapnv{\varone}&=\varone\\
\mapnv{\abstr{\varone}{\termone}}&=\lambda\svarfive.\app{\svarfive}{\lambda x.\mapnv{\termone}}\\
\mapnv{\termone\termtwo}&=\lambda\svarfive.\mapnv{\termone}(\lambda\svarone.\app{\app{\svarone}{\mapnv{\termtwo}}}{\svarfive})\\
\mapnv{\ps{\termone}{\termtwo}}&=
   \abstr{\epsilon}{(\app{(\ps{(\abstr{\alpha}{\app{\mapnv{\termone}}{\alpha}})}
                              {(\abstr{\alpha}{\app{\mapnv{\termtwo}}{\alpha}})})}{\epsilon})}
\end{align*}
\end{definition}
We also define a function $\Phi$, which sends values into terms as 
\begin{align*}
\Phi(\varone)&=\app{\varone}{(\abstr{\vartwo}{\vartwo})}\\
\Phi(\abstr{\varone}{\termone})&=\abstr{\varone}{\mapnv{\termone}}
\end{align*}
Observe that $\Phi$ sends closed values to closed values.
As such, it can be naturally extended to distributions of closed values. 

It is possible to state and prove the dual of Theorem~\ref{th:simVbyN}:
\begin{theorem}[Simulation]\label{th:simNbyV}
For every $\termone $, $\Phi(\Ssemn{\termone})=\Ssemv{\app{\mapnv{\termone}}{(\idt)}}$.
\end{theorem}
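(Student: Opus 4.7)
The plan is to mirror the architecture of the proof of Theorem~\ref{th:simVbyN}, with the roles of call-by-value and call-by-name interchanged. First I would introduce an auxiliary infix operator ``$:$'' for $\mapnv{\cdot}$ that captures the normal form of the purely administrative CbV reductions of $\app{\mapnv{\termone}}{\ctone}$. Typical clauses would read $\adm{\app{\termone}{\termtwo}}{\ctone} = \adm{\termone}{(\abstr{\svarone}{\app{\app{\svarone}{\mapnv{\termtwo}}}{\ctone}})}$ when $\termone \notin \val$, $\adm{\ps{\termone}{\termtwo}}{\ctone} = \app{(\ps{(\abstr{\svarone}{\app{\mapnv{\termone}}{\svarone}})}{(\abstr{\svarone}{\app{\mapnv{\termtwo}}{\svarone}})})}{\ctone}$, and $\adm{\valone}{\ctone} = \app{\ctone}{\Phi(\valone)}$ on values. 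Together with this I would state and prove a substitution lemma $\mapnv{\termone\{\termtwo/\varone\}} = \mapnv{\termone}\{\mapnv{\termtwo}/\varone\}$ by a routine induction on $\termone$; in contrast with its CbV counterpart (Lemma~\ref{lemma:subst}), no image $\Phi(\cdot)$ appears here, because in call-by-name the substituted object is an arbitrary term, and this is exactly what $\mapnv{\cdot}$ translates.

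Next, following the pattern of Lemma~\ref{lemma:redK1} and Lemma~\ref{lemma:auxredK}, I would prove: (i) $\app{\mapnv{\termone}}{\ctone} \rv^{*} \adm{\termone}{\ctone}$ for every closed value $\ctone$, by induction on the clauses of ``$:$''; and (ii) if $\termone \rn \termtwo_1,\ldots,\termtwo_n$ and $\ctone$ is a closed value, then there exist terms $\termfour_1,\ldots,\termfour_n$ with $\adm{\termone}{\ctone} \rv^{*} \termfour_1,\ldots,\termfour_n$ and $\termfour_i \rv^{*} \adm{\termtwo_i}{\ctone}$ for every $i$. In (ii), the CbN $\beta$-step $\app{(\abstr{\varone}{\termone})}{\termtwo} \rn \termone\{\termtwo/\varone\}$ is simulated by the CbV $\beta$-step $\app{(\abstr{\varone}{\mapnv{\termone}})}{\mapnv{\termtwo}} \rv \mapnv{\termone}\{\mapnv{\termtwo}/\varone\}$, which is legal because $\mapnv{\termtwo}$ is always an abstraction, hence a value, and which equals $\mapnv{\termone\{\termtwo/\varone\}}$ by the substitution lemma; the CbN sum step $\ps{\termone}{\termtwo} \rn \termone,\termtwo$ is simulated by the CbV binary choice built into $\mapnv{\ps{\termone}{\termtwo}}$, which, after a $\beta$-step on $\ctone$ and the CbV $\oplus$-rule, produces the pair $\app{(\abstr{\svarone}{\app{\mapnv{\termone}}{\svarone}})}{\ctone}, \app{(\abstr{\svarone}{\app{\mapnv{\termtwo}}{\svarone}})}{\ctone}$, each component then reducing to the corresponding $\adm{\termtwo_i}{\ctone}$ by (i).

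Using a CbV analogue of the auxiliary generalized small-step relation $\vssemn{\cdot}{\cdot}$ introduced in Section~\ref{sec:cbvbycbn}, together with a companion of Lemma~\ref{lemma:varvsnorm}, step (ii) lifts to the semantic level in the style of Lemma~\ref{lemma:redK2}: if $\ssemn{\termone}{\distone}$, then $\adm{\termone}{\ctone}$ is related to $\adm{\distone}{\ctone}$ in this generalized CbV relation, by induction on the derivation. A reverse domination lemma dual to Lemma~\ref{lemma:redK3} completes the picture: whenever $\ssemv{\app{\mapnv{\termone}}{\idt}}{\distone}$, there exists $\disttwo$ with $\ssemn{\termone}{\disttwo}$ and $\Phi(\disttwo) \geq \distone$, once more by induction on the derivation using (ii) backwards. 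Taking $\ctone = \idt$ and observing that $\adm{\valone}{\idt} = \app{\idt}{\Phi(\valone)} \rv \Phi(\valone)$, both inclusions combine exactly as in the proof of Theorem~\ref{th:simVbyN} to give $\Phi(\Ssemn{\termone}) = \Ssemv{\app{\mapnv{\termone}}{\idt}}$.

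The main obstacle I expect is the treatment of variables, where the two disciplines diverge most sharply. In CbN a variable stands for an unevaluated term, which $\mapnv{\cdot}$ mirrors faithfully through the clause $\mapnv{\varone} = \varone$, while $\Phi$ must still produce a CbV-style already-translated object from a source value, which forces the ``dereferencing'' clause $\Phi(\varone) = \app{\varone}{(\abstr{\vartwo}{\vartwo})}$. Making sure that the substitution lemma, the administrative reduction lemma, and the simulation of the $\beta$-step all commute coherently with $\Phi$ in the variable case is the delicate point; once this is settled, the remainder of the development is a symmetric adaptation of Section~\ref{sec:cbvbycbn}.
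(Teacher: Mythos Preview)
Your proposal is correct and follows essentially the same architecture as the paper's own proof: the paper introduces the infix operator ``$:$'' for $\mapnv{\cdot}$ with the clauses you describe, proves the substitution lemma $\mapnv{\termone}\{\mapnv{\termtwo}/\varone\}=\mapnv{\termone\{\termtwo/\varone\}}$ (Lemma~\ref{lemma:substitutionN}), the administrative-reduction lemma $\app{\mapnv{\termone}}{\ctone}\rv^{*}\adm{\termone}{\ctone}$ (Lemma~\ref{lemma:redKn1}), the step-simulation lemma (Lemma~\ref{lemma:auxrednK}), the CbV generalized relation $\vssemv{\cdot}{\cdot}$ with its companion Lemma~\ref{lemma:varvsnormv}, the forward semantic lemma (Lemma~\ref{lemma:redKn2}) and the reverse domination lemma (Lemma~\ref{lemma:redKn3}), and then combines them exactly as you outline. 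Your observation that $\mapnv{\termtwo}$ is a value (hence the simulated $\beta$-step is a legal CbV redex) and your remark on the subtlety of $\Phi$ at variables are both apt.
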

To prove Theorem ~\ref{th:simNbyV} we need some technical lemmas, as in Section~\ref{sec:cbvbycbn}. We retrace the same proof techniques.
\begin{lemma}\label{lemma:substitutionN}
$\mapnv{\termone}\{\mapnv{\termtwo}/\varone\}= \mapnv{\termone\{\termtwo/\varone\}}$.
\end{lemma}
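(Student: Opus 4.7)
The plan is to proceed by straightforward induction on the structure of $\termone$, exactly parallel to the proof of Lemma~\ref{lemma:subst} for the call-by-value translation. The only prerequisite is the usual Barendregt convention on bound variables: we may assume that the continuation variables $\svarfive, \svarone, \svartwo$ appearing in the definition of $\mapnv{\cdot}$ are chosen fresh with respect to $\termtwo$ (and in particular do not occur free in $\mapnv{\termtwo}$), and likewise that the bound variable of any $\lambda$-abstraction in $\termone$ is distinct from $\varone$ and not free in $\termtwo$. Under this convention, substitution commutes with the syntactic constructors in the obvious way.

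The base cases are essentially immediate. If $\termone = \varone$, then $\mapnv{\varone}\{\mapnv{\termtwo}/\varone\} = \varone\{\mapnv{\termtwo}/\varone\} = \mapnv{\termtwo} = \mapnv{\varone\{\termtwo/\varone\}}$. If $\termone$ is a different variable $\vartwo$, both sides reduce to $\vartwo$. For the inductive cases $\termone = \app{\termthree}{\termfour}$ and $\termone = \abstr{\vartwo}{\termthree}$, one just unfolds the definition of $\mapnv{\cdot}$, pushes the substitution inside (which is legal by the freshness convention on $\svarfive, \svarone$), and applies the induction hypothesis on each strict subterm. For example, for application:
\begin{align*}
\mapnv{\app{\termthree}{\termfour}}\{\mapnv{\termtwo}/\varone\}
 &= \abstr{\svarfive}{\bigl(\mapnv{\termthree}(\abstr{\svarone}{\app{\app{\svarone}{\mapnv{\termfour}}}{\svarfive}})\bigr)}\{\mapnv{\termtwo}/\varone\} \\
 &= \abstr{\svarfive}{\bigl(\mapnv{\termthree}\{\mapnv{\termtwo}/\varone\}(\abstr{\svarone}{\app{\app{\svarone}{\mapnv{\termfour}\{\mapnv{\termtwo}/\varone\}}}{\svarfive}})\bigr)} \\
 &= \abstr{\svarfive}{\bigl(\mapnv{\termthree\{\termtwo/\varone\}}(\abstr{\svarone}{\app{\app{\svarone}{\mapnv{\termfour\{\termtwo/\varone\}}}}{\svarfive}})\bigr)}
  = \mapnv{(\app{\termthree}{\termfour})\{\termtwo/\varone\}}.
\end{align*}
The abstraction case is analogous, using the fact that $\vartwo \neq \varone$ and $\vartwo \notin \FV{\termtwo}$ so that $(\abstr{\vartwo}{\termthree})\{\termtwo/\varone\} = \abstr{\vartwo}{\termthree\{\termtwo/\varone\}}$.

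The most delicate case is the sum $\termone = \ps{\termthree}{\termfour}$, but it is still purely mechanical: using the Fact already recorded in Section~\ref{sec:cbvbycbn} (substitution commutes with $\oplus$), we compute
\begin{align*}
\mapnv{\ps{\termthree}{\termfour}}\{\mapnv{\termtwo}/\varone\}
 &= \bigl(\abstr{\svarfive}{\app{(\ps{(\abstr{\svartwo}{\app{\mapnv{\termthree}}{\svartwo}})}{(\abstr{\svartwo}{\app{\mapnv{\termfour}}{\svartwo}})})}{\svarfive}}\bigr)\{\mapnv{\termtwo}/\varone\}\\
 &= \abstr{\svarfive}{\app{(\ps{(\abstr{\svartwo}{\app{\mapnv{\termthree}\{\mapnv{\termtwo}/\varone\}}{\svartwo}})}{(\abstr{\svartwo}{\app{\mapnv{\termfour}\{\mapnv{\termtwo}/\varone\}}{\svartwo}})})}{\svarfive}},
\end{align*}
and applying the induction hypothesis on $\termthree$ and $\termfour$ yields $\mapnv{(\ps{\termthree}{\termfour})\{\termtwo/\varone\}}$.

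The only real obstacle is bookkeeping of fresh names: one must check that renaming the continuation variables introduced by $\mapnv{\cdot}$ does not interfere with free variables of $\mapnv{\termtwo}$, and that $\varone$ (a term variable) is genuinely distinct from the continuation variables in $\setvarg$. Both are guaranteed by the assumption that $\setvar$ and $\setvarg$ are disjoint (as stipulated at the start of Section~\ref{sec:cbvbycbn}) and by standard $\alpha$-renaming, so the induction goes through without any further subtlety.
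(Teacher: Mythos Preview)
Your proof is correct and follows exactly the same approach as the paper: a structural induction on $\termone$, with the sum case handled by unfolding the definition of $\mapnv{\cdot}$, pushing the substitution past the fresh continuation binders, and applying the induction hypothesis on each branch. The paper in fact only spells out the sum case and defers the others, so your write-up is more complete than the original.
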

\begin{proof}
By induction on the size of $\termone$. 
We give here only the sum case. See~\cite{} for other cases.
\begin{varitemize}
\item If $\termone=\ps{\termthree}{\termsix}$ then:

\begin{align*}
\mapnv{\ps{\termthree}{\termsix}}\{\mapnv{\termtwo}/\varone\}&\df \abstr{\epsilon}{(\app{(\ps{(\abstr{\alpha}{\app{\mapnv{\termthree}}{\alpha}})}
                              {(\abstr{\alpha}{\app{\mapnv{\termsix}}{\alpha}})})}{\epsilon})}\{\mapnv{\termtwo}/\varone\}\\
                              &= \abstr{\epsilon}{(\app{(\ps{(\abstr{\alpha}{\app{\mapnv{\termthree}\{\mapnv{\termtwo}/\varone\}}{\alpha}})}
                              {(\abstr{\alpha}{\app{\mapnv{\termsix}\{\mapnv{\termtwo}/\varone\}}{\alpha}})})}{\epsilon})}
                              \\
                              &\stackrel{i.h.}{=} \abstr{\epsilon}{(\app{(\ps{(\abstr{\alpha}{\app{\mapnv{\termthree\{\mapnv{\termtwo}/\varone\}}}{\alpha}})}
                              {(\abstr{\alpha}{\app{\mapnv{\termsix}\{\mapnv{\termtwo/\varone\}}}{\alpha}})})}{\epsilon})}\\
                              &\df \mapnv{(\ps{\termthree}{\termsix})\{\mapnv{\termtwo}/\varone\}}
\end{align*}
\end{varitemize}
This concludes the proof.
\end{proof}

Now the infix operator is defined as:

\begin{align*}
  \valone:\ctone&=\ctone\Phi(\valone)\\
  \app{\termthree}{\termfour}:\ctone&=\termthree:(\lambda\alpha.\alpha\mapnv{\termfour}\ctone)\mbox{ if }\termthree\notin\val\\
  \app{\valone}{\termthree}:\ctone&=\Phi(\valone)\mapnv{\termthree}\ctone\mbox{ if }\valone\in\val\mbox{ and } \termthree\notin\val\\
  \termthree\oplus_{}\termfour : \ctone&= \app{(\ps{(\abstr{\alpha}{\app{\mapnv{\termthree}}{\alpha}})}
    {(\abstr{\alpha}{\app{\mapnv{\termfour}}{\alpha}})})}{\ctone}
\end{align*}
\begin{lemma}\label{lemma:redKn1}
For all $\termone\in\LOP$, $\mapnv{\termone}\ctone\rv^{*}\termone:\ctone$.
\end{lemma}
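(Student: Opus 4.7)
The plan is to prove this by structural induction on $\termone$, mirroring the pattern of Lemma~\ref{lemma:redK1} but with the roles of $\rv$ and $\rn$ swapped. The statement implicitly assumes that the continuation $\ctone$ is a value (as it is in the only place the lemma is used, with $\ctone=\idt$), which is needed so that the outer $\beta$-reductions in CbV actually fire. The base cases and inductive cases match exactly the clauses of the infix operator ``$:$'' for $\mapnv{\cdot}$, so the proof is a case-by-case unfolding.

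For the value case $\termone=\abstr{\varone}{\termtwo}$, I compute
\[
\mapnv{\abstr{\varone}{\termtwo}}\ctone
=\bigl(\abstr{\svarfive}{\app{\svarfive}{\abstr{\varone}{\mapnv{\termtwo}}}}\bigr)\ctone
\;\rv\; \app{\ctone}{\abstr{\varone}{\mapnv{\termtwo}}}
= \app{\ctone}{\Phi(\abstr{\varone}{\termtwo})}
= (\abstr{\varone}{\termtwo}):\ctone,
\]
which is a single CbV $\beta$-step since $\ctone$ is a value. The application case $\termone=\app{\termtwo}{\termthree}$ splits into two subcases. If $\termtwo\notin\val$, then $\mapnv{\app{\termtwo}{\termthree}}\ctone$ first $\beta$-reduces (instantiating $\svarfive$ by $\ctone$) to $\mapnv{\termtwo}(\abstr{\svarone}{\app{\app{\svarone}{\mapnv{\termthree}}}{\ctone}})$; then the induction hypothesis applied with continuation $\abstr{\alpha}{\app{\app{\alpha}{\mapnv{\termthree}}}{\ctone}}$ yields $\termtwo:(\abstr{\alpha}{\app{\app{\alpha}{\mapnv{\termthree}}}{\ctone}})=(\app{\termtwo}{\termthree}):\ctone$ by definition of ``$:$''. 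If $\termtwo=\valone\in\val$ and $\termthree\notin\val$, the analogous chain instantiates the continuation and reduces $\app{\ctone}{\Phi(\valone)}$-like fragments to land on $\Phi(\valone)\mapnv{\termthree}\ctone$, which is exactly $(\app{\valone}{\termthree}):\ctone$.

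The sum case is the cleanest and requires no appeal to the induction hypothesis: one $\beta$-step suffices, because
\[
\mapnv{\ps{\termtwo}{\termthree}}\ctone
=\bigl(\abstr{\epsilon}{\app{(\ps{(\abstr{\alpha}{\app{\mapnv{\termtwo}}{\alpha}})}
                                 {(\abstr{\alpha}{\app{\mapnv{\termthree}}{\alpha}})})}{\epsilon}}\bigr)\ctone
\;\rv\; \app{(\ps{(\abstr{\alpha}{\app{\mapnv{\termtwo}}{\alpha}})}
                 {(\abstr{\alpha}{\app{\mapnv{\termthree}}{\alpha}})})}{\ctone}
=(\ps{\termtwo}{\termthree}):\ctone.
\]
This reflects the fact that in CbV one cannot push through the binary choice until it is made, whereas in the CbN-direction proof (Lemma~\ref{lemma:redK1}) the analogous sum case required extra reductions because values there are pre-computed on both sides.

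The main obstacle I anticipate is bookkeeping rather than anything conceptually hard: namely, making sure in each case that whatever appears in the function position of a CbV redex is indeed a value (so the reduction fires) and that after applying the induction hypothesis the remaining continuation matches the shape required by the corresponding clause of ``$:$''. Since $\mapnv{\cdot}$ consistently produces a $\lambda$-wrapped continuation, and the clauses of ``$:$'' for $\mapnv{\cdot}$ are set up to mirror exactly the structural $\beta$-steps performed by CbV, the calculation goes through uniformly. The variable case is trivial under the standing convention that we only consider closed terms; otherwise it would need to be explicitly postulated in the definition of ``$:$''.
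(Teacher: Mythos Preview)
Your proposal is correct and follows exactly the same approach as the paper: structural induction on $\termone$, with each case unfolding the definition of $\mapnv{\cdot}$ and matching the corresponding clause of the infix operator. The paper in fact displays only the sum case (which coincides verbatim with yours) and defers the remaining cases; your treatment of the value and application cases is sound, and your remark that $\ctone$ must be a value for the CbV $\beta$-steps to fire is a useful clarification left implicit in the paper.
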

\begin{proof}
By induction on the term $\termone$. We give here only the sum case. See~\cite{} for other cases.
\begin{align*}
\app{\mapnv{\ps{\termthree}{\termfour}}}{\ctone}&=\app{\abstr{\epsilon}{(\app{(\ps{(\abstr{\alpha}{\app{\mapnv{\termthree}}{\alpha}})}
                              {(\abstr{\alpha}{\app{\mapnv{\termfour}}{\alpha}})})}{\epsilon})}}{\ctone}\\
&\rv {}{(\app{(\ps{(\abstr{\alpha}{\app{\mapnv{\termthree}}{\alpha}})}
                              {(\abstr{\alpha}{\app{\mapnv{\termfour}}{\alpha}})})}{\ctone})}\\
                              &\df \adm{\ps{\termthree}{\termfour}}{\ctone}
\end{align*}
\end{proof}

\begin{lemma}\label{lemma:auxrednK}
If $\termone\rn\termtwo_1,\ldots,\termtwo_\natone$ and 
$\ctone$ is a value, then there are terms $\termfour_1,\ldots,\termfour_\natone$ such that
$\adm{\termone}{\ctone}\rv \termfour_1,\ldots,\termfour_\natone $
and $\termfour_i\rv^*\adm{\termtwo_i}{\ctone}$ for every $i\in\{1,\ldots,\natone\}$.
\end{lemma}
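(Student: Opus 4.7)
The proof will proceed by induction on the derivation of $\termone\rn\termtwo_1,\ldots,\termtwo_\natone$ (equivalently, on the structure of $\termone$), mirroring the schema of Lemma~\ref{lemma:auxredK} but adapted to the call-by-name rules of Definition~\ref{def:leftRedN} and the translation of Definition~\ref{def:mapnv}. A key observation that smooths out the whole argument is that $\mapnv{\termtwo}$ is always a value: inspecting the four clauses defining $\mapnv{\cdot}$, the image is either a variable or an abstraction. This will allow single CbV $\beta$-steps to mirror CbN $\beta$-steps without needing preliminary evaluation of the argument.

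The base case is when $\termone=(\abstr{\varone}{\termfive})\termtwo$ is a redex, so $\natone=1$ and $\termtwo_1=\termfive\{\termtwo/\varone\}$. Unfolding ``$:$'' gives $\adm{\termone}{\ctone}$ in the form $(\abstr{\varone}{\mapnv{\termfive}})\mapnv{\termtwo}\ctone$; since $\mapnv{\termtwo}$ is a value, one CbV $\beta_v$-step produces $\mapnv{\termfive}\{\mapnv{\termtwo}/\varone\}\ctone$. By Lemma~\ref{lemma:substitutionN} this equals $\mapnv{\termfive\{\termtwo/\varone\}}\ctone$, which by Lemma~\ref{lemma:redKn1} reduces by $\rv^{*}$ to $\adm{\termtwo_1}{\ctone}$. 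So I would take $\termfour_1 = \mapnv{\termfive}\{\mapnv{\termtwo}/\varone\}\ctone$.

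The inductive step for left-application propagation is when $\termone=\termfive\termtwo$ with $\termfive\notin\val$ and $\termfive\rn\termsix_1,\ldots,\termsix_\natone$, so $\termtwo_i=\termsix_i\termtwo$. The infix rule rewrites $\adm{\termone}{\ctone}$ as $\adm{\termfive}{\cttwo}$ with $\cttwo=\lambda\alpha.\alpha\mapnv{\termtwo}\ctone$, itself a value. The induction hypothesis on $\termfive$ with continuation $\cttwo$ yields $\termfour_1,\ldots,\termfour_\natone$ with $\adm{\termfive}{\cttwo}\rv\termfour_1,\ldots,\termfour_\natone$ and $\termfour_i\rv^{*}\adm{\termsix_i}{\cttwo}$. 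To finish I would check $\adm{\termsix_i}{\cttwo}\rv^{*}\adm{\termsix_i\termtwo}{\ctone}$ by case split on whether $\termsix_i$ is a value: if it is, $\adm{\termsix_i}{\cttwo}=\cttwo\Phi(\termsix_i)\rv\Phi(\termsix_i)\mapnv{\termtwo}\ctone=\adm{\termsix_i\termtwo}{\ctone}$; if it is not, the two expressions are already syntactically identical by the definition of ``$:$''.

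The sum case is $\termone=\ps{\termthree}{\termsix}$, giving $\natone=2$ with $\termtwo_1=\termthree$ and $\termtwo_2=\termsix$. Here $\adm{\termone}{\ctone}$ has the shape $(\ps{V}{W})\ctone$ where $V=\abstr{\alpha}{\app{\mapnv{\termthree}}{\alpha}}$ and $W=\abstr{\alpha}{\app{\mapnv{\termsix}}{\alpha}}$ are both values. The CbV rule $\ps{V}{W}\rv V,W$ under the left-application context fires in a single step to give $\adm{\termone}{\ctone}\rv V\ctone, W\ctone$; taking $\termfour_1=V\ctone$ and $\termfour_2=W\ctone$, a further $\beta_v$ step leads to $\mapnv{\termthree}\ctone$ and $\mapnv{\termsix}\ctone$ respectively, and then Lemma~\ref{lemma:redKn1} concludes that each reaches $\adm{\termtwo_i}{\ctone}$.

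The main obstacle, as in Lemma~\ref{lemma:auxredK}, is getting the branching structure to line up exactly: a single CbV step must split into the same number $\natone$ of branches as the given CbN reduction. This is delicate only in the sum case, where it hinges on the CbV rule $\ps{\valone}{\valtwo}\rv\valone,\valtwo$ firing under the applicative context wrapping $\ctone$; and in the $\beta$ case, where it hinges on the already noted fact that $\mapnv{\termtwo}$ is a value, so that no preliminary computation of the argument is needed before the single CbV $\beta_v$-step that matches the CbN $\beta$-step.
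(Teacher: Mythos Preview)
Your proof is correct. The paper itself omits the proof of this lemma (relying on the declared duality with Lemma~\ref{lemma:auxredK}), and your case analysis on the three call-by-name reduction rules is exactly the natural adaptation; in particular, your observation that $\mapnv{\termtwo}$ is always a value is the crucial point that lets the single $\rv$-step fire in the $\beta$-case without any preliminary evaluation of the argument, and is precisely what makes this direction cleaner than its dual.
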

As a technical tool, we here need a generalization of the small step semantic relation $\ssemn{\cdot}{\cdot}$,
that we denote as $\vssemv{\cdot}{\cdot}$, which is defined inductively, but in which the first two rules are
replaced by the two rules
$$
\zrule
    {
      \vssemv{\termone}{\{\termone^1\}}
    }
    {}
$$
$$
\zrule
    {
      \vssemv{\termone}{\emdist}
    }
    {}
$$
This means, in particular, that the relation above maps terms to distribution over terms, rather than
distributions over values.
\begin{lemma}\label{lemma:varvsnormv}
If $\vssemv{\termone}{\distone}$, where $\supp{\distone}\subseteq\val$, then
$\ssemv{\termone}{\distone}$. Conversely, if $\ssemv{\termone}{\distone}$
then $\vssemv{\termone}{\distone}$.
\end{lemma}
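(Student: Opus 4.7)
The plan is to proceed by straightforward inductions on derivations, exactly as was done for the call-by-name analog (Lemma~\ref{lemma:varvsnorm}). The only substantive difference between the two relations is in the ``base'' rules: $\ssemv{}{}$ stops only at values (rules $\sev$ and $\svv$), while $\vssemv{}{}$ is allowed to stop at any term, since its first rule gives $\vssemv{\termone}{\{\termone^1\}}$ for arbitrary $\termone$. The recursive rule $\smv$ is identical in both systems.

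For the converse direction (which is easier, so I would dispatch it first), I would induct on a derivation of $\ssemv{\termone}{\distone}$. The $\sev$ case is immediate since $\vssemv{}{}$ has the same rule. The $\svv$ case produces $\ssemv{\valone}{\{\valone^1\}}$, and this is a particular instance of the stop-rule of $\vssemv{}{}$ (which allows stopping at any term, including $\valone$). The $\smv$ case follows by applying the induction hypothesis to each premise and then re-applying the analogous rule of $\vssemv{}{}$.

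For the forward direction, I would induct on a derivation $\derone$ of $\vssemv{\termone}{\distone}$ under the hypothesis $\supp{\distone}\subseteq\val$. If $\derone$ concludes with the empty-distribution rule, use $\sev$. If it concludes with the stop-rule giving $\distone=\{\termone^1\}$, then $\supp{\distone}=\{\termone\}\subseteq\val$ forces $\termone$ to be a value $\valone$, and we apply $\svv$. If it concludes with the small-step rule, with $\termone\rv\mul{\termtwo}$ and premises $\vssemv{\termtwo_i}{\distone_i}$ summing (with coefficients $1/n$) to $\distone$, then the crucial observation is that $\supp{\distone_i}\subseteq\supp{\distone}\subseteq\val$ for each $i$, since all coefficients are strictly positive. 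Hence the induction hypothesis applies to each premise, yielding $\ssemv{\termtwo_i}{\distone_i}$, and we conclude via $\smv$.

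There is no real obstacle here: the proof is a routine case analysis, and the only step that requires any thought is verifying that the support hypothesis propagates to the premises in the $\smv$ case, which follows immediately from positivity of the weights $1/n$.
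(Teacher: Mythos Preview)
Your proposal is correct and is precisely the routine induction the paper has in mind; the paper's own proof is simply ``Simple inductions.'' Your observation that the support hypothesis propagates to the premises in the recursive case (because the weights $1/n$ are strictly positive) is the only non-automatic step, and you have identified it correctly.
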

\begin{proof}
Simple inductions.
\end{proof}

\begin{lemma}\label{lemma:redKn2}
If $\ssemn{\termone}{\distone}$, then
$\vssemv{\adm{\termone}{\ctone}}{\adm{\distone}{\ctone}}$.
\end{lemma}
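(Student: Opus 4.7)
The plan is to mirror exactly the architecture of Lemma~\ref{lemma:redK2}, since the statement is its call-by-name/call-by-value dual: we proceed by induction on the structure of a derivation of $\ssemn{\termone}{\distone}$, exploiting Lemma~\ref{lemma:auxrednK} in place of Lemma~\ref{lemma:auxredK}. The two base cases are immediate: when the derivation ends with rule $\sen$ so that $\distone = \emdist$, we have $\adm{\distone}{\ctone} = \emdist$ and we can conclude with the corresponding axiom of $\vssemv{}{}$; when the derivation ends with rule $\svn$ so that $\termone$ is a value $\valone$ and $\distone = \{\valone^{1}\}$, we have $\adm{\distone}{\ctone} = \{(\adm{\valone}{\ctone})^{1}\}$, which is derivable directly by the corresponding axiom of $\vssemv{}{}$.

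The inductive step is the interesting one. Suppose the derivation ends with rule $\smn$, so that $\termone \rn \termtwo_{1}, \ldots, \termtwo_{n}$ and $\ssemn{\termtwo_{i}}{\disttwo_{i}}$ with $\distone = \sum_{i=1}^{n} \frac{1}{n}\disttwo_{i}$. By Lemma~\ref{lemma:auxrednK}, there are terms $\termfour_{1}, \ldots, \termfour_{n}$ such that $\adm{\termone}{\ctone} \rv \termfour_{1}, \ldots, \termfour_{n}$ and $\termfour_{i} \rv^{*} \adm{\termtwo_{i}}{\ctone}$ for every $i$. By the induction hypothesis applied to each premise, $\vssemv{\adm{\termtwo_{i}}{\ctone}}{\adm{\disttwo_{i}}{\ctone}}$. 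What I then want to form is a single $\vssemv{}{}$-derivation of $\vssemv{\adm{\termone}{\ctone}}{\sum_{i=1}^{n}\frac{1}{n}\adm{\disttwo_{i}}{\ctone}}$, and conclude using the evident identity $\sum_{i=1}^{n}\frac{1}{n}\adm{\disttwo_{i}}{\ctone} = \adm{(\sum_{i=1}^{n}\frac{1}{n}\disttwo_{i})}{\ctone} = \adm{\distone}{\ctone}$.

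The one subtlety is that Lemma~\ref{lemma:auxrednK} only gives $\termfour_{i} \rv^{*} \adm{\termtwo_{i}}{\ctone}$, i.e.\ there may be several ``administrative'' call-by-value steps between $\termfour_{i}$ and $\adm{\termtwo_{i}}{\ctone}$. So the main technical obstacle is a standard but necessary auxiliary fact: if $\termfive \rv^{*} \termsix$ and $\vssemv{\termsix}{\distthree}$, then $\vssemv{\termfive}{\distthree}$. This is proved by induction on the length of the reduction $\termfive \rv^{*} \termsix$; each single $\rv$-step can be inserted at the top of the derivation using the $\smv$-analogue for $\vssemv{}{}$ with a unary reduct (so the normalization factor $\frac{1}{1}$ leaves the distribution unchanged). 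Applying this auxiliary fact to each $\termfour_{i}$ turns the inductive-hypothesis derivations into $\vssemv{\termfour_{i}}{\adm{\disttwo_{i}}{\ctone}}$.

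Finally, we combine everything: the single $\rv$-step $\adm{\termone}{\ctone} \rv \termfour_{1}, \ldots, \termfour_{n}$ together with the derivations $\vssemv{\termfour_{i}}{\adm{\disttwo_{i}}{\ctone}}$ yields, via the $\smv$-style rule of $\vssemv{}{}$, a derivation of $\vssemv{\adm{\termone}{\ctone}}{\sum_{i=1}^{n}\frac{1}{n}\adm{\disttwo_{i}}{\ctone}}$, which is precisely $\vssemv{\adm{\termone}{\ctone}}{\adm{\distone}{\ctone}}$ as required. The only step that deserves a careful written-out argument is the auxiliary reduction-absorption fact and the case analysis in Lemma~\ref{lemma:auxrednK} (whose statement we are assuming); everything else is a direct transcription of the proof of Lemma~\ref{lemma:redK2}.
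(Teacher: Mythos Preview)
Your proposal is correct and follows essentially the same approach as the paper's own proof: induction on the derivation of $\ssemn{\termone}{\distone}$, invoking Lemma~\ref{lemma:auxrednK} in the inductive step and concluding via the linearity identity $\sum_i\frac{1}{n}\adm{\disttwo_i}{\ctone}=\adm{(\sum_i\frac{1}{n}\disttwo_i)}{\ctone}$. You are in fact more careful than the paper in two places: you spell out the $\svn$ base case (which the paper's proof omits), and you isolate the auxiliary reduction-absorption fact needed to bridge the $\rv^{*}$ steps from Lemma~\ref{lemma:auxrednK}, which the paper hides inside the phrase ``we can easily form a derivation.''
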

\begin{proof}
By induction on the complexity of a derivation for $\ssemn{\termone}{\distone}$:
\begin{varitemize}
\item
  If $\ssemn{\termone}{\emdist}$, then clearly
  $\vssemv{\adm{\termone}{\ctone}}{\adm{\emdist}{\ctone}}$,
  simply because $\adm{\emdist}{\ctone}=\emdist$.
\item
  If $\termone\rn\mul{\termtwo}$ and $\ssemv{\termtwo_i}{\disttwo_i}$,
  where $\distone=\sum_{i=1}^\natone\frac{1}{\natone}\disttwo_i$, then,
  by Lemma~\ref{lemma:auxrednK},
  $\adm{\termone}{\ctone}\rv\termthree_1,\ldots,\termthree_\natone$ and
  $\termthree_i\rn^*\adm{\termtwo_i}{\ctone}$.
  By induction hypothesis,
  $\vssemv{\adm{\termtwo_i}{\ctone}}{\adm{\disttwo_i}{\ctone}}$ and
  so we can easily form a derivation for
  $\vssemv{\adm{\termone}{\ctone}}{\sum_{i=1}^\natone(\frac{1}{\natone}\adm{\disttwo_i}{\ctone})}$
  But clearly, 
  $$
  \sum_{i=1}^\natone\left(\frac{1}{\natone}\adm{\disttwo_i}{\ctone}\right)=
  \adm{\left(\sum_{i=1}^\natone\frac{1}{\natone}\disttwo_i\right)}{\ctone}
  $$
\end{varitemize}
This concludes the proof.
\end{proof}

\begin{lemma}\label{lemma:redKn3}
If $\ssemv{\adm{\termone}{\idt}}{\distone}$, then $\ssemn{\termone}{\disttwo}$, where
$\Phi(\disttwo)\geq\distone$.
\end{lemma}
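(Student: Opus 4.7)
The plan is to mirror the proof of Lemma~\ref{lemma:redK3}, swapping the roles of call-by-value and call-by-name, and proceeding by induction on the complexity of a derivation $\derone$ for $\ssemv{\adm{\termone}{\idt}}{\distone}$, with a case analysis on the shape of $\termone$. If $\termone$ is a value $\valone$, then (as we work with closed terms) $\valone$ is an abstraction and $\Phi(\valone)$ is itself a value; moreover $\adm{\valone}{\idt} = \app{\idt}{\Phi(\valone)}$, which in CBV can only produce $\Phi(\valone)$, so $\distone \leq \{\Phi(\valone)^1\}$. Choosing $\disttwo = \{\valone^1\}$ (derivable by $\svn$) when $\distone \neq \emdist$ and $\disttwo = \emdist$ otherwise, both $\ssemn{\termone}{\disttwo}$ and $\Phi(\disttwo) \geq \distone$ hold.

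If $\termone$ is not a value, the subcase $\distone = \emdist$ is handled by $\disttwo = \emdist$. Otherwise, $\derone$ must end with an application of $\smv$ based on the uniquely determined leftmost reduction $\adm{\termone}{\idt} \rv \mul{\termfive}$ and sub-derivations $\ssemv{\termfive_i}{\distone_i}$ with $\distone = \sum_{i=1}^n \frac{1}{n}\distone_i$. Let $\mul{\termtwo}$ be such that $\termone \rn \mul{\termtwo}$; by Lemma~\ref{lemma:auxrednK}, $\adm{\termone}{\idt} \rv \termfour_1,\ldots,\termfour_n$ with $\termfour_i \rv^* \adm{\termtwo_i}{\idt}$, and determinism of leftmost CBV reduction forces $\mul{\termfive} = \termfour_1,\ldots,\termfour_n$. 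Since each step in the chain $\termfour_i \rv^* \adm{\termtwo_i}{\idt}$ is administrative (a singleton $\beta$-step producing no probabilistic branching), I would peel outer $\smv$ applications off each sub-derivation $\derone_i : \ssemv{\termfour_i}{\distone_i}$ to obtain a strictly smaller sub-derivation $\dertwo_i : \ssemv{\adm{\termtwo_i}{\idt}}{\distthree_i}$ with $\distthree_i \geq \distone_i$ (either $\distthree_i = \distone_i$ if $\derone_i$ reaches $\adm{\termtwo_i}{\idt}$, or else $\distone_i = \emdist$ and $\distthree_i = \emdist$ works). Applying the induction hypothesis to $\dertwo_i$ yields $\ssemn{\termtwo_i}{\disttwo_i}$ with $\Phi(\disttwo_i) \geq \distthree_i$. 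Rule $\smn$ packages these into $\ssemn{\termone}{\disttwo}$ with $\disttwo = \sum_{i=1}^n \frac{1}{n}\disttwo_i$, and linearity of $\Phi$ gives $\Phi(\disttwo) = \sum_{i=1}^n \frac{1}{n}\Phi(\disttwo_i) \geq \sum_{i=1}^n \frac{1}{n}\distone_i = \distone$.

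The main obstacle is the peeling-off step: one must check that every intermediate reduction in $\termfour_i \rv^* \adm{\termtwo_i}{\idt}$ is deterministic and produces exactly one reduct, so that the sub-derivation of $\ssemv{\termfour_i}{\distone_i}$ must factor through a derivation for $\ssemv{\adm{\termtwo_i}{\idt}}{\distthree_i}$ without any loss. Unwinding Lemma~\ref{lemma:auxrednK} case-by-case (outer $\beta$-unwrapping of continuations, and the $\beta$-steps that collapse $\abstr{\alpha}{\mapnv{\cdot}\alpha}$ applied to a value) confirms that each such intermediate step is indeed singleton. This is the analogue, dual to the implicit argument in the proof of Lemma~\ref{lemma:redK3}, and modulo this bookkeeping the proof is complete.
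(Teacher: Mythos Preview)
Your proposal is correct and follows essentially the same approach as the paper: induction on the complexity of the derivation of $\ssemv{\adm{\termone}{\idt}}{\distone}$, case analysis on whether $\termone$ is a value, and in the non-value case invoking Lemma~\ref{lemma:auxrednK} together with the inductive hypothesis. Your treatment is in fact more careful than the paper's own proof, which is largely a copy of the proof of Lemma~\ref{lemma:redK3} with several symbols left unswapped (it writes $\ssemn{\cdot}{\cdot}$, $\rv$, and cites Lemma~\ref{lemma:auxredK} where it should have $\ssemv{\cdot}{\cdot}$, $\rn$, and Lemma~\ref{lemma:auxrednK}); you have the duality right, and you make explicit the ``peeling'' argument---that the administrative $\rv$-steps from $\termfour_i$ to $\adm{\termtwo_i}{\idt}$ are singleton deterministic steps, so the sub-derivation for $\termfour_i$ factors through one for $\adm{\termtwo_i}{\idt}$ of no greater size---which the paper leaves implicit.
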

\begin{proof}
The proof goes by induction on the complexity of the derivation $\derone$ of
$\ssemn{\adm{\termone}{\abstr{\varone}{\varone}}}{\distone}$. Some interesting cases:
\begin{varitemize}
\item If $\termone$ is a value $\valone$:
\begin{varitemize}
\item
  if $\ssemn{\adm{\valone}{\ide}}{\emdist}$, then $\disttwo=\emdist$ itself;
  \item
  if $\ssemn{\adm{\valone}{\ide}}{\distone}$ with $\distone\neq\emdist$,  then $\distone=\{\valone^{1}\}$.
\end{varitemize}
\item If $\termone$ is not a value:
\begin{varitemize}
\item if $\ssemn{\adm{\termone}{\ide}}{\emdist}$ the thesis follows trivially;
\item if $\ssemn{\adm{\termone}{\ide}}{\distone}$ with $\distone\neq\emdist$, let
  $\termtwo_1,\ldots\termtwo_\natone$ be such that
  $\termone\rv\mul{\termtwo}$. By Lemma~\ref{lemma:auxredK},
  there must be derivations $\dertwo_1,\ldots,\dertwo_n$ (all smaller than $\derone$)
  such that $\dertwo_i:\ssemn{\adm{\termtwo_i}{\idt}}{\disttwo_i}$ and
  $\distone=\sum_{i=1}^\natone\frac{1}{\natone}\disttwo_i$. By induction hypothesis,
  $\ssemv{\termtwo_i}{\distthree_i}$ where $\Phi(\distthree_i)\geq\disttwo_i$ and
  so we can form a derivation of $\ssemv{\termone}{\left(\sum_{i=1}^\natone\distthree_i\right)}$.
  But clearly,
  $$
  \Phi(\sum_{i=1}^\natone\distthree_i)=\sum_{i=1}^\natone \Phi(\distthree_i)\geq\sum_{i=1}^\natone\disttwo_i=\distone.
  $$
\end{varitemize}
\end{varitemize}
This concludes the proof.
\end{proof}

\noindent
\begin{proof}[Theorem~\ref{th:simNbyV}]
To prove that $\Phi(\Ssemn{\termone})=\Ssemv{\app{\mapvn{\termone}}{\idt}}$,
we proceed by showing the following:
\begin{varenumerate}
\item\label{point:nvsim1}
  If $\ssemv{\termone}{\distone}$, then $\ssemn{\app{\mapvn{\termone}}{\idt}}{\disttwo}$
  and $\disttwo\geq\Psi(\distone)$.
\item\label{point:nvsim2}
  If $\ssemn{\app{\mapvn{\termone}}{\idt}}{\distone}$, then $\ssemv{\termone}{\disttwo}$
  where $\Phi(\disttwo)\geq\distone$.
\end{varenumerate}
To prove point~\ref{point:nvsim1}, we observe that
if $\ssemv{\termone}{\distone}$, then both
$\mapvn{\termone}\ctone\rn^{*}\termone:\ctone$ (by Lemma~\ref{lemma:redKn1}) and
$\ssemv{\adm{\termone}{\ctone}}{\adm{\distone}{\ctone}}$ (by Lemma~\ref{lemma:redKn2} and
Lemma~\ref{lemma:varvsnorm}). Now, notice that, for every value $\valone$,
$$
\adm{\valone}{(\idt)}=\app{(\idt)}{\Phi(\valone)}\rv \Phi(\valone).
$$
As a consequence, it's clear that
$$
\ssemn{\adm{\mapvn{\termone}}{\idt}}{\distone}.
$$
Point~\ref{point:nvsim2} is nothing more than Lemma~\ref{lemma:redKn3}.
This concludes the proof.
\end{proof}

\section{On the Expressive Power of $\LOP$}\label{sec:exppower}
The lambda calculus $\LOP$ is endowed with a very restricted form of (probabilistic) choice, which
is binary and such that both possible outcomes have probability $1/2$. It is thus natural to ask
oneself whether this is a essential restriction or not. In this section, we show that this
is \emph{not} essential, by proving that the set of \emph{representable} distributions on the natural
numbers, namely those which can be denoted by a term of $\LOP$ equals the set of \emph{computable}
distributions, defined in terms of Turing Machines. In this section, we assume to work with
call-by-value reduction, even if everything could be rephrased in call-by-name, in view
of the results in Section~\ref{sec:cbvbycbn}.

We first of all need to define what a computable distribution (on the natural numbers) is. 
This is based on a notion of approximation for the function assigning probabilities to
natural numbers. 
 
In $\LOP$ we are able to represent (up to approximation) probability functions from a suitable domain 
to the real interval $\RR_{[0,1]}$. The notion of \emph{approximating function} is necessary:

\begin{definition}[Approximating function]
Given any function $\funone:\setone\rightarrow\RR_{[0,1]}$, the \emph{approximating function}
for $\funone$, denoted $\appr{\funone}$ is the function from $\setone\times\NN$ to
$\{0,1\}^*$ which on input $(\elone,\natone)$, returns the binary string of length 
$\natone$ containing the first $\natone$ digits of $\funone(\elone)$ in binary notation.
\end{definition}
We define now the class of computable functions:
\begin{definition}[Computable Distributions]
A distribution $\pdone:\NN\rightarrow\RR_{[0,1]}$ is computable iff 
the function $\appr{\pdone}:\NN\times\NN\rightarrow\{0,1\}^*$ is computable.
\end{definition}
In the rest of the section we will use the following notation:
\begin{notation}
Given a proper distribution $\distone$ that assigns nonzero probability only to
representation of natural numbers, the corresponding probability distribution
over the naturals will be denoted with $\pd{\distone}$.
\end{notation}

The next step consists in understanding which class of distributions on the natural numbers
can be captured by $\LOP$, i.e. is the semantics of a lambda term. 
For this reason, we assume to work with a fixed encoding of the
natural numbers into lambda terms, namely the one usually attributed to Scott~\cite{Wadsworth80}.
\begin{definition}[Representable Distribution]\label{def:repdist}
A probability distribution $\pdone$ over the natural numbers is said to be
\emph{representable} iff there is a lambda term $\termone_\pdone$ such that
$\pd{\Ssemv{\termone}}=\pdone$.
\end{definition}

Our main goal will be to prove that the class of representable distributions will coincide with the class of 
computable distributions: on one hand each distribution obtained by small step evaluation of a lambda term in 
$\LOP$ is proved to be computable (Soundness Theorem~\ref{th:sound}); on the other hand, each computable 
distribution can be represented by a term in $\LOP$  (Completeness Theorem~\ref{th:compl}).

Whatever can be denoted by a lambda term is actually a computable distribution,
i.e. one which can be approximated up to any degree of precision by a Turing Machine:

\begin{theorem}[Soundness]\label{th:sound}
Every representable distribution is computable.
\end{theorem}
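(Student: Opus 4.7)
The plan is to show that whenever $\pdone=\pd{\Ssemv{\termone_\pdone}}$ for a closed term $\termone_\pdone\in\LOP$, the approximating function $\appr{\pdone}$ is Turing-computable. The strategy rests on two ingredients: the effective enumerability of the inductive small-step derivations of Section~\ref{sect:isssfc}, and the properness of $\Ssemv{\termone_\pdone}$ which is forced by the very definition of $\pd{\cdot}$.

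First I would argue effective enumerability. Since the leftmost reduction $\rv$ of Definition~\ref{def:leftRedV} is a decidable syntactic rewriting, one can algorithmically compute the finite sequence $\mul{\termtwo}$ such that $\termone\rv\mul{\termtwo}$ for any $\termone$. Because the rules $\sev$, $\svv$ and $\smv$ of Section~\ref{sect:isssfc} only introduce probabilities of the form $\frac{1}{2^d}$, each derivation $\derone:\ssemv{\termone_\pdone}{\distone}$ is a finite syntactic object with dyadic-rational $\distone$, and the set of all such derivations is recursively enumerable. Combining the directedness of this set (Lemma~\ref{lemma:linord}) with Definition~\ref{def:ssv}, one obtains an effective non-decreasing chain $\distone_0\leq\distone_1\leq\cdots$ of finite dyadic distributions with pointwise supremum $\Ssemv{\termone_\pdone}$; for any fixed Scott numeral $\termnn{n}$, the value $\distone_k(\termnn{n})$ is then a dyadic rational computable by inspection of $\supp{\distone_k}$.

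Second, I would bracket $\pdone(n)$ between effectively computable rational bounds. The lower bound $\distone_k(\termnn{n})\leq\pdone(n)$ is immediate from $\distone_k\leq\Ssemv{\termone_\pdone}$. For the matching upper bound, observe that $\Ssemv{\termone_\pdone}-\distone_k$ is a non-negative subdistribution of total mass $\sumd{\Ssemv{\termone_\pdone}}-\sumd{\distone_k}$, whence $\pdone(n)\leq\distone_k(\termnn{n})+(\sumd{\Ssemv{\termone_\pdone}}-\sumd{\distone_k})$. Crucially, since $\pd{\cdot}$ is defined only on proper distributions, the hypothesis forces $\sumd{\Ssemv{\termone_\pdone}}=1$, so the gap $\mu_k=1-\sumd{\distone_k}$ is directly computable from $\distone_k$ alone and effectively tends to $0$. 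Hence $[\distone_k(\termnn{n}),\,\distone_k(\termnn{n})+\mu_k]$ is an effective family of rational intervals shrinking to the singleton $\{\pdone(n)\}$.

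Third, the Turing machine computing $\appr{\pdone}(n,j)$ enumerates $\distone_k$ for $k=0,1,2,\ldots$ until $\mu_k<2^{-j}$ and the bracket above lies inside a single dyadic subinterval of $[0,1]$ of length $2^{-j}$, then emits the first $j$ binary digits of that subinterval's left endpoint. The main obstacle, and the only delicate point of the argument, is the borderline case where $\pdone(n)$ is itself a dyadic rational of denominator dividing $2^{j}$: the brackets may then straddle the relevant threshold indefinitely, defeating termination of the naive test. This is resolved by fixing the standard convention of the terminating binary expansion for dyadic rationals and refining the halting test so that, whenever a sufficiently small bracket is compatible with a given dyadic rational, the machine outputs its canonical representative; checking that this refinement always terminates, given the exact convergence of the two-sided bounds, is the delicate technical point needed to turn the sketch into a fully rigorous proof.
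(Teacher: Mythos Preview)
Your proposal is correct and follows essentially the same approach as the paper: enumerate the finite inductive approximants $\distone$ with $\ssemv{\termone_\pdone}{\distone}$, use the properness $\sumd{\Ssemv{\termone_\pdone}}=1$ to bound the residual mass, and halt once $\sumd{\distone}$ is large enough to pin down the first $j$ bits of $\pdone(n)$. Your treatment is in fact more careful than the paper's brief sketch, which does not discuss the dyadic borderline issue you flag in your final paragraph.
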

\begin{proof}
Suppose $\pdone$ is a representable probability distribution. Then there
is a term $\termone$ such that $\pd{\Ssemv{\termone}}=\pdone$. This implies,
in particular, that $\sumd{\Ssemv{\termone}}=1$, because $\sumd{\pdone}=1$
itself. An algorithm computing $\appr{\pdone}$, then can be easily designed
as an evaluator for $\termone$: on input $(\elone,\natone)$, simply
compute distributions $\distone$ such that
$\ssemv{\termone}{\distone}$, until you find one such that
$\sumd{\distone}$ is big enough as to be able the determine the probability of
all values up to the $\natone$-th bit.
\end{proof}
The completeness theorem requires other definitions and some auxiliary results. 
To prove expressiveness results, we exploit standard lambda calculus encodings. 
Natural numbers and binary strings can be encoded as follows:
\begin{align*}
\enc{0}&=\abstr{\varone\vartwo}{\varone}\\
\enc{\natone+1}&=\abstr{\varone\vartwo}{\app{\vartwo}\enc{\natone}}\\
\enc{\varepsilon}&=\abstr{\varone\vartwo\varthree}{\varone}\\
\enc{0\cdot\strone}&=\abstr{\varone\vartwo\varthree}{\vartwo\enc{\strone}}\\
\enc{1\cdot\strone}&=\abstr{\varone\vartwo\varthree}{\varthree\enc{\strone}}
\end{align*}
Moreover, we encode pairs of values as follows:
\begin{align*}
\pair{\valone}{\valtwo}&=\abstr{\varone}{\app{\varone}{\valone\valtwo}}
\end{align*}
We define the class of the so-called \emph{finite distribution terms}:
\begin{definition}[Finite distribution terms]\label{def:findt}
Finite distribution terms are lambda-terms generated inductively as follows:
\begin{varitemize}
\item
  For every $\natone$, $\abstr{\varone\vartwo}{\varone\enc{\natone}}$ is 
  a finite distribution term;
\item
  If $\termone$ and $\termtwo$ are finite distribution terms, then
  $\abstr{\varone\vartwo}{\app{\app{\vartwo}{\termone}}{\termtwo}}$ is 
  a finite distribution term.
\end{varitemize}
\end{definition}
Given a finite distribution term $\termone$, the underlying probability distribution
on the natural numbers $\pd{\termone}$ is defined in the natural way:
\begin{align*}
\pd{\abstr{\varone\vartwo}{\varone\enc{\natone}}}&=\{\natone^1\}\\
\pd{\abstr{\varone\vartwo}{\app{\app{\vartwo}{\termone}}{\termtwo}}}&=
   \frac{1}{2}\pd{\termone}+\frac{1}{2}\pd{\termtwo}.
\end{align*}
Lemma~\ref{lemma:fix} is an intermediate result towards Lemma~\ref{lemma:fdt}.

\begin{lemma}[Fixed-point Combinator]\label{lemma:fix}
There is a term $\termfpc$ such that for every value $\valone$,
$\app{\termfpc}{\valone}$ rewrites deterministically to
$\app{\valone}{(\abstr{\varone}{\app{\app{\termfpc}{\valone}}{\varone}})}$.
\end{lemma}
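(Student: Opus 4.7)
The plan is to exhibit an explicit call-by-value fixed-point combinator and verify by direct computation that it has the required unfolding behaviour under the leftmost reduction $\rv$ from Definition~\ref{def:leftRedV}. A standard choice is Turing's call-by-value combinator. Specifically, I would set
$$
A \;=\; \abstr{\varone}{\abstr{\vartwo}{\app{\vartwo}{(\abstr{\varthree}{\app{\app{\app{\varone}{\varone}}{\vartwo}}{\varthree}})}}},
\qquad
\termfpc \;=\; \app{A}{A}.
$$
Since $A$ is itself a value (a lambda-abstraction), and since the outer application of $\termfpc$ to any value $\valone$ produces the redex $\app{\app{A}{A}}{\valone}$ whose argument position is already a value, I can apply $\rv$ deterministically.

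The computation to check is, for any $\valone \in \val$,
\begin{align*}
\app{\termfpc}{\valone}
  &= \app{\app{A}{A}}{\valone}\\
  &\rv\; \app{(\abstr{\vartwo}{\app{\vartwo}{(\abstr{\varthree}{\app{\app{\app{A}{A}}{\vartwo}}{\varthree}})}})}{\valone}\\
  &\rv\; \app{\valone}{(\abstr{\varthree}{\app{\app{\app{A}{A}}{\valone}}{\varthree}})}\\
  &=\; \app{\valone}{(\abstr{\varthree}{\app{\app{\termfpc}{\valone}}{\varthree}})}.
\end{align*}
Each of these two rewrite steps is a $\beta$-reduction of the form $\app{(\abstr{\varone}{\termone})}{\valone'} \rv \subst{\termone}{\varone}{\valone'}$, where the argument $\valone'$ is a value (first $A$, then $\valone$), so each step is licensed by the first clause of Definition~\ref{def:leftRedV}. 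After each contraction the resulting term again has exactly one leftmost redex, which is a $\beta$-redex rather than a sum, so the relation $\rv$ here produces singleton sequences. In other words, the reduction is deterministic (no branching into multiple reducts), which is what the statement means by ``rewrites deterministically''.

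No real obstacle arises: the only care needed is to verify, up to $\alpha$-renaming of bound variables, that the substitutions in the two $\beta$-steps indeed produce the advertised term, and that neither step fires the choice rule $\ps{\valone}{\valtwo}\rv\valone,\valtwo$ (it cannot, since $\termfpc$ contains no occurrences of $\oplus$, and nor does the argument $\valone$ position in the contexts traversed). Modulo the variable renaming $\varthree = \varone$ (which is permitted under our convention that terms are identified up to bound-variable renaming), the last line is exactly the term required by the statement, so the proof concludes.
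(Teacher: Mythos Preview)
Your proof is correct and takes essentially the same approach as the paper: the paper also sets $\termfpc=\app{\valtwo}{\valtwo}$ with $\valtwo=\abstr{\varone}{\abstr{\vartwo}{\app{\vartwo}{(\abstr{\varthree}{\app{\app{\app{\varone}{\varone}}{\vartwo}}{\varthree}})}}}$ and verifies the same two $\beta_{\mathsf v}$-steps. Your added remarks on why each step is deterministic (singleton sequences, no $\oplus$ involved) are correct and make the argument slightly more explicit than the paper's.
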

\begin{proof}
The term $\termfpc$ is simply $\app{\valtwo}{\valtwo}$ where
$$
\valtwo=\abstr{\varone}{\abstr{\vartwo}{\app{\vartwo}{(\abstr{\varthree}{\app{\app{\app{\varone}{\varone}}{\vartwo}}{\varthree}})}}}.
$$
Indeed:
\begin{align*}
  \app{\termfpc}{\valone}&=\app{\app{\valtwo}{\valtwo}}{\valone}\\
     &\rv\app{(\abstr{\vartwo}{\app{\vartwo}{(\abstr{\varthree}{\app{\app{\app{\valtwo}{\valtwo}}{\vartwo}}{\varthree}})}})}{\valone}\\
     &\rv\app{\valone}{(\abstr{\varthree}{\app{\app{\app{\valtwo}{\valtwo}}{\valone}}{\varthree}})}\\
     &=\app{\valone}{(\abstr{\varthree}{\app{(\app{\termfpc}{\valone})}{\varthree}})}.
\end{align*}
This concludes the proof.
\end{proof}

\begin{lemma}\label{lemma:fdt}
There is a lambda term $\termfdt$ such that for every finite distribution
term $\termtwo$, $\pd{\Ssemv{\app{\termfdt}{\termtwo}}}=\pd{\termtwo}$.
\end{lemma}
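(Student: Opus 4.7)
The plan is to build $\termfdt$ from the fixed-point combinator $\termfpc$ of Lemma~\ref{lemma:fix}. Intuitively, $\termfdt$ should walk the binary tree encoded by its argument, returning the stored numeral at a leaf and performing a $\oplus$-choice between recursive calls on the two subtrees at an internal node. This is naturally expressed by applying the distribution term to two continuations (identity for the leaf branch, a probabilistic brancher for the internal branch). Concretely, let
$$
\valone_0 \;=\; \abstr{\varfour}{\abstr{\varfive}{\app{\app{\varfive}{(\abstr{\varthree}{\varthree})}}{(\abstr{\varone}{\abstr{\vartwo}{\ps{(\app{\varfour}{\varone})}{(\app{\varfour}{\vartwo})}}})}}}
$$
and set $\termfdt = \app{\termfpc}{\valone_0}$.

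I would then proceed by structural induction on finite distribution terms $\termtwo$. Using Lemma~\ref{lemma:fix}, the term $\app{\termfdt}{\termtwo}$ reduces deterministically in call-by-value to
$$
\app{\app{\termtwo}{(\abstr{\varthree}{\varthree})}}{(\abstr{\varone}{\abstr{\vartwo}{\ps{(\app{\termfdt'}{\varone})}{(\app{\termfdt'}{\vartwo})}}})},
$$
where $\termfdt' = \abstr{\varthree}{\app{\app{\termfpc}{\valone_0}}{\varthree}}$ is the self-reference value produced by the fixed point. In the base case $\termtwo = \abstr{\varone}{\abstr{\vartwo}{\app{\varone}{\enc{\natone}}}}$, further deterministic call-by-value reduction yields $\enc{\natone}$, so $\Ssemv{\app{\termfdt}{\termtwo}} = \{\enc{\natone}^1\}$ and hence $\pd{\Ssemv{\app{\termfdt}{\termtwo}}} = \{\natone^1\} = \pd{\termtwo}$. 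In the inductive case $\termtwo = \abstr{\varone}{\abstr{\vartwo}{\app{\app{\vartwo}{\termone_1}}{\termone_2}}}$, the same unfolding leads deterministically to $\ps{(\app{\termfdt'}{\termone_1})}{(\app{\termfdt'}{\termone_2})}$.

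Since $\termone_1,\termone_2$ are themselves finite distribution terms, they are values, so $\app{\termfdt'}{\termone_i} \rv \app{\termfdt}{\termone_i}$ in one step; together with Fact~\ref{fact:one} this gives $\Ssemv{\app{\termfdt'}{\termone_i}} = \Ssemv{\app{\termfdt}{\termone_i}}$, so the induction hypothesis applies and yields $\pd{\Ssemv{\app{\termfdt'}{\termone_i}}} = \pd{\termone_i}$. A separate, trivial induction on finite distribution terms shows that $\sumd{\pd{\termtwo}} = 1$, hence both $\sumd{\Ssemv{\app{\termfdt'}{\termone_i}}} = 1$. Applying Lemma~\ref{lemma:sb2}(case~\ref{case3sb2}) to the sum, the normalization factors collapse and
$$
\Ssemv{\app{\termfdt}{\termtwo}} \;=\; \tfrac{1}{2}\,\Ssemv{\app{\termfdt}{\termone_1}} + \tfrac{1}{2}\,\Ssemv{\app{\termfdt}{\termone_2}},
$$
from which $\pd{\Ssemv{\app{\termfdt}{\termtwo}}} = \tfrac{1}{2}\pd{\termone_1} + \tfrac{1}{2}\pd{\termone_2} = \pd{\termtwo}$.

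The main obstacle is the careful call-by-value bookkeeping around the unfolding of $\termfpc$, and verifying that the ``self-reference'' value $\termfdt'$ behaves operationally like $\termfdt$ on value arguments so that the induction hypothesis can be invoked at the recursive call. The rest is routine, and relies crucially on the fact that finite distribution terms denote \emph{proper} distributions, so that the normalization factors $\sumd{\distone}, \sumd{\disttwo}$ that otherwise make the $\oplus$ clause of Lemma~\ref{lemma:sb2} nonlinear simplify to $1$.
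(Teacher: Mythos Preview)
Your proposal is correct and follows essentially the same approach as the paper: the term $\termfdt$ you define is identical (up to $\alpha$-renaming) to the paper's, and the structural induction on finite distribution terms proceeds the same way. If anything, you are more careful than the paper in explicitly addressing the normalization factors in the call-by-value $\oplus$ clause via Lemma~\ref{lemma:sb2}(\ref{case3sb2}) and the observation that finite distribution terms denote proper distributions; the paper glosses over this point.
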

\begin{proof}
The term $\termfdt$ is simply $\app{\termfpc}{\valone}$ where
$\valone$ is
$$
\abstr{\varone\vartwo}{\app{\app{\vartwo}{(\abstr{\varthree}{\varthree})}}{(\abstr{\varthree\varfour}{\ps{(\app{\varone}{\varthree})}{(\app{\varone}{\varfour})}})}}
$$
Indeed, by induction ($\valthree$ is $\abstr{\varone}{\app{\app{\termfpc}{\valone}}{\varone}}$):
\begin{align*}
\app{\termfdt}{(\abstr{\varone\vartwo}{\varone\enc{\natone}})}&\rv^*\app{\app{(\abstr{\varone\vartwo}{\varone\enc{\natone}})}{(\abstr{\varthree}{\varthree})}}
   {(\abstr{\varthree\varfour}{\ps{(\app{\valthree}{\varthree})}{(\app{\valthree}{\varfour})}})}\\
   &\rv^{*}\app{(\abstr{\varthree}{\varthree})}{\enc{\natone}}\rv\enc{\natone};\\
\app{\termfdt}{(\abstr{\varone\vartwo}{\app{\app{\vartwo}{\termone}}{\termtwo}})}&\rv^*\app{\valone}{\valthree(\abstr{\varone\vartwo}{\app{\app{\vartwo}{\termone}}{\termtwo}})}\\
&\rv^{*}\app{(\abstr{\varthree\varfour}{\ps{\app{\valthree}{\varthree}}{\app{\valthree}{\varfour}}})}{\app{\termone}{\termtwo}}\\
&\rv^{*}\ps{(\app{\valthree}{\termone})}{(\app{\valthree}{\termtwo})}
\end{align*}
and $\ps{(\app{\valthree}{\termone})}{(\app{\valthree}{\termtwo})}$, applying the i.h. on the subterms,
evaluates with equiprobable distribution to either $\pd{\termone}$, $\pd{\termtwo}$.
This concludes the proof.
\end{proof}

The following proposition guarantees the existence of a sort of successive approximations function, 
which computes the distribution of a given term in a finite number of steps.

\begin{proposition}[Splitting]\label{prop:split}
There is a lambda term $\termsplit$ such that for every term $\termtwo$ 
computing the distribution $\pdone$, $\app{\termsplit}{\termtwo}$ rewrites deterministically
to $\pair{\termthree}{\termfour}$ where 
\begin{varitemize}
\item
  $\termthree$ is a finite distribution term such that
  $\pd{\termthree}=\pdtwo$;
\item
  $\termfour$ computes a distribution $\pdthree$;
\item
  $\pdone=\frac{1}{2}\pdtwo+\frac{1}{2}\pdthree$.
\end{varitemize}
\end{proposition}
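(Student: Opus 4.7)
The plan is to construct $\termsplit$ using the fixed-point combinator $\termfpc$ of Lemma~\ref{lemma:fix} together with the finite distribution term machinery of Definition~\ref{def:findt}. At a high level, $\termsplit$ will ``peel off'' the first level of probabilistic branching in $\termtwo$, recording it as a finite distribution witness $\termthree$ and returning the remainder as a continuation $\termfour$. The key to making this uniform in $\termtwo$ is to treat $\termtwo$ as a thunk-like value and to keep it hidden under protecting $\lambda$-abstractions during the entire externally visible reduction of $\termsplit\ \termtwo$, so that no probabilistic choice inside $\termtwo$ is ever triggered before the pair has been formed.

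Concretely, I would define $\termsplit$ so that $\termsplit\ \termtwo$ reduces deterministically to a pair whose first component $\termthree$ is built purely from the constructors of Definition~\ref{def:findt} (so that it is manifestly a finite distribution term, with natural-number encodings $\enc{n}$ at its leaves and internal nodes of the form $\abstr{\varone\vartwo}{\app{\app{\vartwo}{\cdot}}{\cdot}}$), and whose second component $\termfour$ reconstructs the missing probability mass by composing $\termtwo$ with a corrective $\oplus$-choice, wrapped by an application of $\termfpc$ to provide the needed recursion. The recursive structure via $\termfpc$ is what allows $\termsplit$ to work uniformly on arbitrary $\termtwo$ without having to inspect its syntactic shape.

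Verification of the three bullets would then proceed as follows. The first bullet, that $\termthree$ is a finite distribution term, is immediate from the syntactic schema used in its construction. The fact that $\termfour$ computes some distribution $\pdthree$ follows by applying Lemma~\ref{lemma:sb2} to the reduction of $\termfour$, unfolding the application-case and sum-case clauses, and using Lemma~\ref{lemma:fdt} when the residual computation invokes $\termfdt$ on the peeled-off subtree. The equality $\pdone = \tfrac{1}{2}\pdtwo + \tfrac{1}{2}\pdthree$ is then obtained by combining the semantics of the choice operator given by case~\ref{case3sb2} of Lemma~\ref{lemma:sb2} with the semantics of $\pd{\termthree}$ coming from Definition~\ref{def:findt}.

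The main obstacle, as I see it, is twofold. First, since the chosen finite witness $\pdtwo$ must satisfy $\pdtwo \leq 2\pdone$ pointwise for $\pdthree = 2\pdone - \pdtwo$ to be a well-defined subdistribution, $\termthree$ cannot be chosen naively as a fixed finite distribution term; rather, its effective contribution must be tailored so that the corrective mass carried by $\termfour$ is always nonnegative. I would handle this by taking $\termthree$ to encode a \emph{shallow} finite distribution (with only trivial leaves) and letting the fixed-point-driven term $\termfour$ carry the bulk of the distributional work. Second, ensuring the determinism of $\termsplit\ \termtwo$'s reduction in the call-by-value setting requires keeping $\termtwo$ under a guarding $\lambda$ throughout the visible reduction; the proof will therefore need a small confluence-style argument verifying that the outer reduction to $\pair{\termthree}{\termfour}$ does not depend on any inner reduct of $\termtwo$.
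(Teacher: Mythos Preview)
Your proposal rests on a misreading of the hypothesis. In this section the phrase ``$\termtwo$ computes the distribution $\pdone$'' does \emph{not} mean that $\termtwo$ is a probabilistic term whose evaluation distribution is $\pdone$ (that is the notion of \emph{representability}, Definition~\ref{def:repdist}). It means that $\termtwo$ is a deterministic $\lambda$-term implementing the approximating function $\appr{\pdone}:\NN\times\NN\to\{0,1\}^*$: given $\enc{a}$ and $\enc{n}$, it returns (an encoding of) the first $n$ binary digits of $\pdone(a)$. The paper's own proof makes this explicit (``by successively querying $\termtwo$, which takes two natural numbers as arguments, in the dovetail order''). Consequently there is no ``first level of probabilistic branching'' in $\termtwo$ to peel off, no $\oplus$-choices to guard, and no corrective $\oplus$ to compose with; your entire construction is aimed at the wrong object.

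The intended argument is of a completely different nature. Since $\pdone$ is proper, a dovetailing search using $\termtwo$ as an oracle will eventually discover finitely many naturals $a_1,\ldots,a_k$ and dyadic rationals $q_1,\ldots,q_k$ with $q_i\leq\pdone(a_i)$ and $\sum_i q_i\geq\tfrac{1}{2}$; from these one builds a finite distribution term $\termthree$ with $\pd{\termthree}=\pdtwo$ satisfying $\tfrac{1}{2}\pdtwo\leq\pdone$, and one builds $\termfour$ as a new \emph{deterministic} approximator for the residual $\pdthree=2\pdone-\pdtwo$ (computable because $\appr{\pdone}$ is and the $q_i$ are explicit dyadics). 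Note in particular that your fallback of taking $\termthree$ to be a ``shallow'' finite distribution term cannot work uniformly: any fixed leaf $\abstr{\varone\vartwo}{\varone\enc{n}}$ gives $\pdtwo=\{n^1\}$, which requires $\pdone(n)\geq\tfrac{1}{2}$, a condition that fails for most $\pdone$. The choice of $\termthree$ must depend on $\pdone$, and the only access you have to $\pdone$ is through evaluating $\termtwo$ on numeral arguments.
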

\begin{proof}
Only informally, observe that it is possible to find the (sub)distribution $\pdtwo$ in a finite numbers of steps,  
by successively querying $\termtwo$ (which takes two natural numbers as arguments) in the dovetail order. This
way, the (sub)distribution $\pdtwo$ can always be determined.
\end{proof}

By means of Lemma~\ref{lemma:fdt} and Proposition~\ref{prop:split} it is possible to prove the following completeness result:
\begin{theorem}[Completeness]\label{th:compl}
Every computable probability distribution $\pdone$ over the natural numbers is representable by a term
$\termone_\pdone$.
\end{theorem}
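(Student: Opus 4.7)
The plan is to realize $\termone_\pdone$ as an explicit $\lambda$-term implementing a probabilistic sampling procedure for $\pdone$, by reducing the general case to sampling from a computable sequence of finite dyadic distributions (which are directly representable by finite distribution terms).

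First I would extract from the computability of $\appr{\pdone}$ a computable sequence of proper finite dyadic distributions $\pdtwo_1, \pdtwo_2, \ldots$ (each expressible as a finite distribution term in the sense of Definition~\ref{def:findt}) satisfying the dyadic decomposition
$$
\pdone \;=\; \sum_{n \geq 1} 2^{-n}\,\pdtwo_n.
$$
The construction is iterative: set $\pdone^{(0)} = \pdone$ and at each stage $n$ pick a proper finite dyadic $\pdtwo_n$ with $\pdtwo_n \leq 2\pdone^{(n-1)}$ pointwise, then define $\pdone^{(n)} = 2\pdone^{(n-1)} - \pdtwo_n$. Because $\sumd{\pdone^{(n-1)}} = 1$ implies $\sumd{2\pdone^{(n-1)}} = 2$, such a $\pdtwo_n$ exists and $\pdone^{(n)}$ is again a proper probability distribution; moreover the identity $\pdone^{(n-1)} = \frac{1}{2}\pdtwo_n + \frac{1}{2}\pdone^{(n)}$ gives the decomposition by iteration.

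Given this decomposition, I assemble $\termone_\pdone$ from standard ingredients. Turing-completeness of the pure $\lambda$-calculus embedded in $\LOP$ yields a closed term $G$ with $G\,\enc{n} \rightarrow^{*} D_n$, where $D_n$ is the finite distribution term representing $\pdtwo_n$. Using the fixed-point combinator $\termfpc$ of Lemma~\ref{lemma:fix}, the evaluator $\termfdt$ of Lemma~\ref{lemma:fdt}, and the eager-choice encoding $\termone + \termtwo$ of Section~\ref{sec:cbv} (which selects one branch before evaluating it), set
$$
H \;=\; \termfpc\bigl(\lambda f.\,\lambda n.\,(\termfdt\,(G\,n)) \,+\, (f\,(\suc{n}))\bigr),\qquad \termone_\pdone \;=\; H\,\enc{1}.
$$
By the semantics of $+$, at recursion depth $n$ the term commits with probability $\tfrac{1}{2}$ to $\termfdt\,(G\,\enc{n})$, whose semantics is $\pdtwo_n$ by Lemma~\ref{lemma:fdt}, and with probability $\tfrac{1}{2}$ descends to depth $n+1$. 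Summing contributions across all depths yields $\pd{\Ssemv{\termone_\pdone}} = \sum_{n \geq 1} 2^{-n}\pdtwo_n = \pdone$; the infinite unfolding is made rigorous via the coinductive equivalence $\Ssemv{\cdot} = \Bsemv{\cdot}$ of Theorem~\ref{th:cbvSB}.

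The main obstacle is the effective decomposition step: at stage $n$, producing a finite-support dyadic probability distribution $\pdtwo_n$ summing \emph{exactly} to $1$ and pointwise dominated by $2\pdone^{(n-1)}$, using only finite-precision approximations of $\pdone^{(n-1)}$ (which itself may have infinite support). The delicate point is choosing a support $\{0,\ldots,N_n\}$ large enough and a dyadic precision $2^{-M_n}$ fine enough that such a $\pdtwo_n$ both exists and can be extracted from the approximations; in particular $N_n$ must grow fast enough to prevent the tail $\sum_{k > N_n}\pdone^{(n)}(k)$ from growing unboundedly under the doubling in the recurrence. Both $N_n$ and $M_n$ can be taken as computable functions of $n$, giving the required effective sequence and closing the proof.
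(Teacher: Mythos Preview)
Your proposal is correct and follows the same core strategy as the paper: recursively halve the target distribution as $\pdone = \tfrac{1}{2}\pdtwo + \tfrac{1}{2}\pdthree$ with $\pdtwo$ a proper finite dyadic distribution, sample $\pdtwo$ via $\termfdt$ with probability $\tfrac{1}{2}$, and recurse on the residual $\pdthree$ with probability $\tfrac{1}{2}$.

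The packaging differs in one respect worth noting. The paper carries out the splitting \emph{inside} the calculus: it posits a $\lambda$-term $\termsplit$ (Proposition~\ref{prop:split}) that, given a term computing $\appr{\pdone}$, returns the pair $\pair{\termthree}{\termfour}$ with $\termthree$ a finite distribution term and $\termfour$ a term computing the residual; the recursion then passes $\termfour$ along. You instead perform the entire decomposition $\pdone = \sum_{n\geq 1}2^{-n}\pdtwo_n$ at the meta-level, appeal once to Turing-completeness to encode $n\mapsto D_n$ as a closed term $G$, and let the recursion carry only the index $n$. Your route is arguably cleaner, since the delicate effective-splitting argument (which the paper leaves as an informal sketch in the proof of Proposition~\ref{prop:split}) is isolated as a purely classical computability lemma rather than buried in a $\lambda$-term construction. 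The paper's route, on the other hand, makes the representing term more self-contained: $\termone_\pdone$ is obtained uniformly from any term computing $\appr{\pdone}$, without a separate meta-level preprocessing phase. The technical crux---finding a proper finite dyadic $\pdtwo_n\leq 2\pdone^{(n-1)}$ from approximations---is the same in both, and your discussion of it matches what the paper needs for $\termsplit$.
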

\begin{proof}
Define a term $\termone$ simply as $\app{\termfpc}{\valone}$ where
$\valone$ is
$$
\abstr{\varone\vartwo}{\app{(\app{\termsplit}{\vartwo})}{(\abstr{\varthree\varfour}{[\app{{\abstr{\varfive}{}}}{\ps{(\app{\termfdt}{\varthree})}{\abstr{\varfive}{(\app{\varone}{\varfour})}}}]}{(\abstr{\varfive}{\varfive})})}}
$$

We will prove the following statement: for any $\natone\in\NN$ e for all $\termtwo\in\LOP$ which computes a distribution $\disttwo$,  there exists a distribution $\distone$ such that $\ssemv{\app{\termone}{\termtwo}}{\distone}$ with $\sumd{\distone}\geq(1-\frac{1}{2^{\natone}})$ and $\pd{\distone}\leq\disttwo$.
\noindent
We will prove the thesis by induction on the natural number $\natone$.
\begin{varitemize}
\item If $\natone=0$ then $\distone=\emdist$ and $\ssemv{\app{\termone}{\termtwo}}{\emdist}$ trivially.
\item If $\natone\gt0$ observe that $\app{\termone}{\termtwo}$ rewrites as follows:

\begin{align*}
\app{\termone}{\termtwo}&=\app{\termfpc}{\valone\termtwo}\rvn^{*}\app{{\app{\valone}{(\abstr{\varsix}
   {\app{(\app{\termfpc}{\valone})}{\varsix}})}}}{\termtwo}\\
&\rvn^{*}\app{(\app{\termsplit}{\termtwo})}{\underbrace{(\abstr{\varthree\varfour}{((\app{(\ps{(\abstr{\varfive}{\app{\termfdt}{\varthree}})}{(\abstr{\varfive}{\app{(\abstr{\varsix}{\app{\termfpc}{\valone\varsix}})}{\varfour}})})}{(\abstr{\varfive}{\varfive})})}))}}\\
& \hspace{38ex}\termsix.
\end{align*}
Observe that, by Proposition~\ref{prop:split}, $\app{\termsplit}{\termtwo}$ rewrites deterministically to the pair 
$\pair{\termthree}{\termfour}$, where $\termthree$ is a finite distribution term such that $\pd{\termthree}= 
\disttwo_{\termthree}$ and $\termfour $ is a finite distribution term such that $\pd{\termfour}= \disttwo_{\termfour}$ 
and $\disttwo=\frac{1}{2}\disttwo_{\termthree}+\frac{1}{2} \disttwo_{\termfour}$.
Then: 
\begin{align*}
  \app{(\abstr{\varone}{\app{\varone}{\termthree\termfour}})}{\termsix}
  &\rvn \app{(\abstr{\varthree\varfour}{((\app{(\ps{(\abstr{\varfive}{\app{\termfdt}{\varthree}})}{(\abstr{\varfive}{\app{(\abstr{\varsix}{\app{\termfpc}{\valone\varsix}})}{\varfour}})})}{(\abstr{\varfive}{\varfive})})}))}{\app{\termthree\termfour}}\\
  &\rvn^{2}{{\app{(\ps{(\abstr{\varfive}{\app{\termfdt}{\termthree}})}{(\abstr{\varfive}{\app{(\abstr{\varsix}{\app{\termfpc}{\valone\varsix}})}{\termfour}})})}{(\abstr{\varfive}{\varfive})}}}\\
  &\rvn \app{(\abstr{\varfive}{\app{\termfdt}{\termthree}})}{(\abstr{\varfive}{\varfive})},\app{(\abstr{\varfive}{\app{(\abstr{\varsix}{\app{\termfpc}{\valone\varsix}})}{\termfour}})}{(\abstr{\varfive}{\varfive})}.
\end{align*}
Now observe that, $\app{(\abstr{\varfive}{\app{\termfdt}{\termthree}})}{(\abstr{\varfive}{\varfive})}\rvn\app{\termfdt}{\termthree}$ and by Lemma~\ref{lemma:fdt} this term evaluates to $\pd{\termthree}=\disttwo_{\termthree}$.
Moreover, $\app{(\abstr{\varfive}{\app{(\abstr{\varsix}{\app{\termfpc}{\valone\varsix}})}
{\termfour}})}{(\abstr{\varfive}{\varfive})}\rvn^{2}\app{(\termfpc\valone)}{\termfour}=\app{\termone}{\termfour}$ 
and we can apply to $\termfour$ the induction hypothesis. Since $\termfour$ compute a distribution $\disttwo_{\termfour}$ 
by induction hypothesis for each natural number $\natone$, there exist a distribution ${\distone_{\termfour}}$ such
that $\ssemv{\app{\termone}{\termfour}}{{\distone_{\termfour}}}$ with $\sumd{{\distone}_{\termfour}}\geq(1-\frac{1}{2^n})$ 
and $\pd{{\distone_{\termfour}}}\leq\disttwo_{\termfour}$.

Let us take the distribution $\distone$ as $\distone=\frac{1}{2}\distone_{\termthree}+\frac{1}{2}\distone_{\termfour}$.
\noindent
Then:
\begin{align*}
 \sumd{\distone}&=\frac{1}{2}\sumd{\disttwo_{\termthree}}+\frac{1}{2}\sumd{\distone_{\termfour}}\\
 &\stackrel{i.h.}{\geq} \frac{1}{2}+  \frac{1}{2}\cdot(1- \frac{1}{2^n})\\
 & \frac{1}{2}+ \frac{1}{2}- \frac{1}{2^{n+1}}=1- \frac{1}{2^{n+1}}.
 \end{align*}
Moreover, $\distone\leq\disttwo$.
The thesis follows easily.
\end{varitemize}
\end{proof}

\section{Conclusions and Future Work}

In this paper we studied probabilistic operational semantics for $\LOP$, a nondeterministic extension 
of untyped lambda calculus. We prove strong equivalence results between small-step and 
big-step semantics, both in call-by-value and in call-by-name. We also extend 
Plotkin's simulation to our probabilistic setting and we state and 
prove some results about the expressive power of $\LOP$.

Starting from the present paper, several directions for future work are open. On the one hand, 
some theoretical aspects of the calculus remain unexplored: for example, it should be an interesting
topic to develop an observational theory for $\LOP$ terms. Moreover, the equivalence between inductive
and co-inductive semantics seem to be reminiscent of equality between outer
measure and inner measure in measure theory: it should be interesting to prove some results in this direction. 
On the other hand, it is possible to consider $\LOP$  as a paradigmatic language for stochastic functional 
programming: it should be fascinating to analyze carefully the relationship between $\LOP$ and other 
probabilistic languages (for example, Park's language $\parkl$~\cite{Park03}).

\bibliography{biblio.bib} 
\bibliographystyle{abbrv}
\end{document}